\newcommand{\prt}{\partial}
\def\Pr{{\mathbb P}}
\def\B{{\mathcal B}}
\def\V{{\mathcal V}}
\def\A{{\mathcal A}}
\def\D{{\mathcal D}}
\def\T{{\Phi}}
\def\R{\ensuremath{\mathbb{R}}}
\def\Pe{\mathcal{P}}
\def\be{\begin{equation}}
\def\ee{\end{equation}}
\def\ni{\noindent}
\newtheorem{theorem}{Theorem}
\newtheorem{lemma}{Lemma}
\newtheorem{corollary}[theorem]{Corollary}
\theoremstyle{definition}
\begin{document}
\bibliographystyle{plain}

\title{{\Large\bf  Various equations for gap probabilities of coupled Gaussian matrices}}
\author{Igor Rumanov\footnote{e-mail: igorrumanov@math.ucdavis.edu} \\
{\small Department of Mathematics, UC Davis, 1 Shields Avenue, CA 95616}}

\maketitle

\bigskip

\begin{abstract}
\par Versions of Tracy-Widom (TW) and Adler-Shiota-van Moerbeke (ASvM) approaches are applied to derive various partial differential equations (PDE) satisfied by joint eigenvalue spacing probabilities of two coupled Gaussian Hermitian matrices (coupled GUE). All the lowest (third) order PDE satisfied by the probabilities for the largest eigenvalues of two coupled GUE are found, and the results of both approaches are compared. 
\par The TW approach allows to derive all PDE at once, while in the ASvM one starting with different bilinear identities leads to different subsets of the full set of equations. 
\par An interesting result is that the joint probability for the largest eigenvalues of coupled Gaussian matrices satisfies a number of different PDE, and the previously known Adler-van Moerbeke equation (AvM)~\cite{AvM1} is only one of them. Some of the new equations look like ``coupled Painlev\'e IV" and have usual Painlev\'e IV equation as one-matrix limit, i.e.~when the spectral endpoint of one of the matrices goes to infinity. This is in contrast to the AvM equation, which becomes trivial in this limit. Moreover, the new PDE, which stem from the matrix kernel approach of~\cite{TW2}, do not contain derivatives w.r.t. the strength of coupling, unlike the AvM equation. In other words, they contain fewer independent variables and in this sense are simpler.
\end{abstract}

\newpage 

\section{Introduction}

The problem of studying the joint eigenvalue probability distribution for coupled Gaussian matrices arises e.g.~in studying {\it Dyson process}. A {\it Dyson process} is any process on ensembles of matrices in which the entries undergo diffusion. Consider ensemble of $n\times n$ Hermitian Gaussian matrices. Then $n$ curves describe changing eigenvalues with time. The probability that for each $k$ the $k$-th eigenvalue lies outside of a given set $J_k$ at time $t_k$ is given by the Fredholm determinant of a certain matrix kernel, called the {\it extended Hermite kernel}~\cite{TW2}, which is a particular case of a general matrix kernel first obtained by Eynard and Mehta~\cite{EyMe}. Scaling this {\it Gaussian process} at the edge leads to the {\it Airy process}, introduced by Pr\"ahofer and Spohn~\cite{PrSp} as the limiting stationary process for a polynuclear growth model. It is conjectured to be the limiting process for a wide class of random growth models, in particular, those belonging to the KPZ universality class~\cite{KPZ}. Recently steps were made by Tracy and Widom~\cite{TW-ASEP3} toward establishing it for the Asymmetric Simple Exclusion Process (ASEP), see e.g.~\cite{Lig}, when they proved the long-standing conjecture that ASEP belongs to the KPZ universality class in a scaling limit of long time and large number of particles and its single point limiting distribution is the celebrated $F_2$~\cite{TW-Airy}. In earlier works~\cite{TW-AiryPr, TW2} Tracy and Widom (the approach in these works will be called TW further on) found systems of ordinary differential equations (ODE) for Airy, Gaussian and some other Dyson processes. Also, a single PDE for Gaussian process was known for a while from the work of Adler and van Moerbeke~\cite{AvM1} (AvM), as well as PDE for Airy and Sine processes found from it in the corresponding scaling limits~\cite{AvM3}. The last authors used two-dimensional Toda lattice (2-Toda) bilinear identity and Virasoro constraints to derive their equations.
\par Here we compare the above two approaches to the joint distribution of the spectrum for coupled matrices. Even though we use a modified version of AvM approach, analogous to what we used for one-matrix ensembles in~\cite{IR1, UniUE}, which is closer to TW approach and is more flexible, we still find that system of equations derived from matrix kernel contains more information and allows in fact to obtain {\it all possible} PDE which such probabilities satisfy. This is quite similar to what we recently found for the general one-matrix case with several spectral endpoints~\cite{Multipt1matr}.
\par As Dyson first observed long ago~\cite{DyBM}, for Hermitian matrices with matrix elements independently executing Brownian motion subject to a restoring harmonic force, the equilibrium measure is the GUE measure of random matrix theory. With initial conditions at time $t_1$ distributed according to the GUE measure, the probability that at times $t_k$ ($k = 2, \dots, m$, we assume that $t_1 < t_2 < \dots < t_m$) the matrix $M(t_k)$ is infinitesimally close to a value $M_k$ is proportional to

\be
\Pr \sim \exp(-Tr M_1^2)\prod_{k=2}^m \exp\left(-\frac{Tr (M_k - c_{k-1}M_{k-1})^2}{1-c_{k-1}^2}\right)dM_1\cdots dM_m,  \label{eq:00.1}
\ee

\noindent where $c_k = e^{t_k-t_{k+1}}$. Alternatively, (\ref{eq:00.1}) can be interpreted as the equilibrium measure for a chain of $m$ coupled $n\times n$ Hermitian matrices $M_k$. The distribution of eigenvalues for this measure is expressible as the Fredholm determinant of an $m\times m$ matrix kernel~\cite{EyMe} related to the kernel associated with the equilibrium random matrix ensemble -- GUE in our case. It is derived by diagonalizing each $M_k$ and then using the celebrated Harish-Chandra/Itzykson-Zuber (HCIZ) formula~\cite{HC, IZ}:



\be
\int_{U(n)}e^{\text{Tr}(XUYU^{-1})}dU = \prod_{k=1}^{n-1}k!\cdot\frac{\det(e^{x_iy_j})_{1 \le i,j \le n}}{\Delta(x)\Delta(y)} \label{eq:00.2}
\ee

\noindent (where $x_i$, $y_i$ are the eigenvalues of the Hermitian matrices $X$, $Y$, respectively, and, say, $\Delta(x) = \prod_{i<j}(x_i-x_j)$ is the Vandermonde determinant), to integrate out the unitary parts. The result is the eigenvalue measure with a density $P(\vec x_1, \dots, \vec x_m)$,

\be
P \sim  \prod_{k=1}^m e^{-\left(\frac{1}{1-c_{k-1}^2} + \frac{c_k^2}{1-c_k^2}\right)\sum_{i=1}^nx_{k,i}^2}\prod_{k=1}^{m-1}\det\left(e^{\frac{2c_k}{1-c_k^2}x_{k,i}x_{k+1,i}}\right)\Delta(x_1)\Delta(x_m).  \label{eq:00.3}
\ee

\noindent It is shown in~\cite{EyMe, TW2} that for a chain of coupled matrices with probability density of this type the correlation functions can be expressed as block determinants whose entries are matrix kernels evaluated at various points, generalizing Dyson's expression for the correlation functions of a single matrix. In the case considered here the matrix kernel 

\be
K(x, y) = \left(K_{ij}(x, y)\right)_{i,j=1}^m
\ee

\noindent is the extended Hermite kernel with entries

\be
K_{ij}(x, y) = \left\{ \begin{array}{lr} \sum_{k=0}^{n-1}e^{k(t_i-t_j)}\varphi_k(x)\varphi_k(y) & \text{if }i \ge j,  \\  \ & \ \\ -\sum_{k=n}^{\infty}e^{k(t_i-t_j)}\varphi_k(x)\varphi_k(y) & \text{if }i < j. \end{array} \right.  \label{eq:00.4}
\ee 

\noindent where $\varphi_k(x) = p_k(x)e^{-x^2/2}$ are the harmonic oscillator eigenfunctions and $p_k$ are the normalized Hermite polynomials. The actual kernel one is interested in is then

\be
K_{ij}^J(x,y) = K_{ij}(x,y)\chi_{J_j}(y),
\ee

\noindent where $\chi_{J_j}$ is the characteristic function of the set $J_j$, since the probability that for each $k$ no eigenvalue lies in $J_k$ at time $t_k$ is equal to $\det(I-K^J)$. 
\par We restrict ourselves to the case of only two coupled matrices, $m=2$, which we study in detail. Generalizations to the chain of several coupled matrices are quite feasible~\cite{EyMe, BeEyHa, TW2}, but we defer them to a future work.
\par Plan of the paper is as follows. In section 2 we show the normalization of the two-matrix integral and its one-matrix limit. Section \ref{sec:results} contains the description of the results. It connects the rest of the paper. In section \ref{sec:Toda}, using a version of the ASvM method~\cite{ASvM, AvM1}, we derive PDE for the joint distribution of two coupled Gaussian matrices from Toda lattice hierarchy. We use two copies of the AKNS system (see, e.g.~\cite{Newell}) together with the 2-Toda (Liouville) equation and find PDE different from the one of~\cite{AvM1}, and also recover the AvM equation~\cite{AvM1}. In contrast with it, our PDE have nontrivilal one-matrix limit. In section \ref{sec:4} we start from the matrix kernel and the line of TW~\cite{TW2}, but then deflect from it to derive our version of first-order system of PDE of TW type, which employs matrix analogs of inner products $u$ and $w$~\cite{TW1}, shown in~\cite{IR1, UniUE} to be universally related with the ratios of the corresponding {\it 1-Toda} $\tau$-functions, i.e. one-matrix integrals of consecutive sizes. In the next section we derive matrix recursion relations for coupled case analogous to those in~\cite{UniUE} for one-matrix unitary ensembles. In section \ref{sec:Xsys} we transform the system of section \ref{sec:4}, find many first integrals and solve it, obtaining coupled analogs of Painlev\'e IV equation~\cite{TW1}. In section \ref{sec:comp} we match a certain combination of our TW-type equations from section \ref{sec:Xsys} with a PDE from section \ref{sec:Toda} and find new correspondences among the quantities involved, in particular the relations between derivatives with respect to the coupling constant $c$ arising from the Toda lattice approach (\cite{AvM1} or section \ref{sec:Toda} here) and certain commutators involving the non-diagonal matrix elements of the matrix resolvent kernel~\cite{TW2}. In section \ref{sec:fin4eq} we show how the system of Painlev\'e IV-like PDE, the main result of section \ref{sec:Xsys}, can be reduced to several PDE for the joint probability of largest eigenvalues wihtout auxiliary variables and with derivatives only w.r.t.~the spectral endpoints of both matrices (i.e.~without derivatives w.r.t. $c$). The last section is devoted to the conclusions.

\section{Matrix integral}   \label{sec:MatrInt}

We study coupled Gaussian Hermitian unitary invariant ensemble, i.e.~consider two-matrix integral\footnote{Note that here, in the abstract two-matrix model, the normalization is different from the one for the Dyson process above. We will switch to the ``physical" normalization later.} 

\be
\tau_n = \int\int e^{-\text{Tr}M_1^2 - \text{Tr}M_2^2 + 2c\text{Tr}M_1M_2}dM_1dM_2.   \label{eq:0.1}
\ee

\noindent Integrating out the ``eigenvector" components of the matrices in the standard way, using Harish-Chandra/Itzykson-Zuber formula, and then transforming the integral, using the antisymmetry properties of the HCIZ determinant, see e.g.~\cite{AvM1, TW2}, one arrives at the formula for the joint eigenvalue distribution for this ensemble -- the probability that all eigenvalues of the first matrix lie in a set $J_1 \subset \R$, while all eigenvalues of the second matrix lie in a set $J_2 \subset \R$:


\be
\tau_n^{J_1, J_2} \sim \left(\frac{1}{c}\right)^{n(n-1)/2}\prod_{i=1}^n \int_{J_1}dx_i \prod_{i=1}^n \int_{J_2}dy_i e^{-\sum_1^n x_i^2 - \sum_1^n y_i^2 + 2c\sum_1^n x_iy_i} \Delta(x)\Delta(y),  \label{eq:0.2}
\ee

\noindent where $\Delta(x)$ is the Vandermonde determinant over $x_i$. For brevity, we will denote $\tau_n^{J_1, J_2}$ by just $\tau_n^J$.


The coupling constant $c$ varies in the range $0 \le c < 1$, so that the integral converges and the physically identical domain $-1 < c \le 0$ is excluded.



\par First consider the case $J_1 = (-\infty, \xi_1)$, $J_2 = (-\infty, \infty)$. Then change the variables $y_i \to \tilde y_i = y_i - cx_i$ to get


$$ 
\tau_n^J(\xi_1) \sim \left(\frac{1}{c}\right)^{n(n-1)/2}\prod_{i=1}^n \int_{-\infty}^{\xi_1}dx_i \prod_{i=1}^n \int_{-\infty}^{\infty}d\tilde y_i e^{-(1-c^2)\sum_1^n x_i^2 - \sum_1^n \tilde y_i^2} \Delta(x)\Delta(\tilde y + cx),  
$$ 

\noindent $\Delta(\tilde y + cx) = \prod_{i<j}(\tilde y_i - \tilde y_j + c(x_i - x_j))$. Due to the antisymmetry of the Vandermonde, the integral over new $\tilde y$ variables decouples and turns into just a multiple Gaussian integral:

$$ 
\prod_{i=1}^n \int_{-\infty}^{\infty}d\tilde y_i e^{- \sum_1^n \tilde y_i^2} \Delta(\tilde y + cx) = c^{\frac{n(n-1)}{2}} \Delta(x)\prod_{i=1}^n \int_{-\infty}^{\infty}d\tilde y_i e^{- \sum_1^n \tilde y_i^2}.  
$$

\noindent Therefore 


$$
\tau_n^J(\xi_1) \sim \prod_{i=1}^n \int_{-\infty}^{\xi_1}dx_i e^{-(1-c^2)\sum_1^n x_i^2} \Delta^2(x).     
$$

\noindent Let $\gamma = 1-c^2$, $\tilde x_i = \gamma x_i$, then


$$
\tau_n^J(\xi_1) \sim \prod_{i=1}^n \int_{-\infty}^{\gamma^{1/2}\xi_1}\frac{d\tilde x_i}{\gamma^{1/2}} e^{-\sum_1^n \tilde x_i^2} \cdot\frac{\Delta^2(\tilde x)}{\gamma^{n(n-1)/2}} = \frac{1}{\gamma^{n^2/2}}\prod_{i=1}^n \int_{-\infty}^{\gamma^{1/2}\xi_1}dx_i e^{-\sum_1^n x_i^2} \Delta^2(x),  
$$

\noindent i.e. such an integral is just a {\it renormalized} one-matrix largest eigenvalue probability:

\be
\tau_n^J(\xi_1) = \frac{1}{\gamma^{n^2/2}}\tau_n^{1-matrix}(\gamma^{1/2}\xi_1).   \label{eq:0.7}
\ee

\noindent For $\tau$-ratios of 2-matrix integrals of consecutive matrix sizes one gets therefore:

\be
U_n(\xi_1) = \frac{\tau_{n+1}^J(\xi_1)}{\tau_n^J(\xi_1)} = \frac{1}{\gamma^{n+1/2}}U_n^{1-matrix}(\gamma^{1/2}\xi_1),   \label{eq:0.8a}
\ee

\be
W_n(\xi_1) = \frac{\tau_{n-1}^J(\xi_1)}{\tau_n^J(\xi_1)} = \gamma^{n-1/2}W_n^{1-matrix}(\gamma^{1/2}\xi_1).  \label{eq:0.8b}
\ee

\noindent Thus, e.g.

\be
F_n(\xi_1) = U_nW_n(\xi_1) = \frac{1}{\gamma}F_n^{1-matrix}(\gamma^{1/2}\xi_1).   \label{eq:0.9}
\ee

\noindent In the limit when $\xi_1 \to \infty$, we get the normalization for the defined above matrix integrals over the whole domain (see, e.g.~\cite{Me04} for the last formula in eq.~(\ref{eq:0.10}) below):

\be
\tau_n = \frac{\tau_n^{1-matrix}}{(1-c^2)^{n^2/2}},   \ \ \ \ \ \tau_n^{1-matrix} = \frac{\pi^{n/2}}{2^{n(n-1)/2}}\prod_{j=1}^{n-1}j!,     \label{eq:0.10}
\ee

\be
\frac{\tau_{n+1}\tau_{n-1}}{\tau_n^2} = \frac{n}{2(1-c^2)}.   \label{eq:0.11}
\ee

The above matrix integral is a $\tau$-function of 2-Toda integrable hierarchy~\cite{AvM1} if one considers its modification by introducing two infinite sets of ``times" or coupling parameters $t$ and $s$,

$$
\tau_n(t, s) = \int\int e^{-\text{Tr}M_1^2 - \text{Tr}M_2^2 + 2c\text{Tr}M_1M_2 + \sum_{k=1}^{\infty}t_k\text{Tr}M_1^k - \sum_{k=1}^{\infty}s_k\text{Tr}M_2^k}dM_1dM_2.
$$

\section{Description of the results}   \label{sec:results}

We first apply a version of approach of~\cite{ASvM, AvM1}. We use 2-Toda bilinear identity, which supplies five simplest integrable PDE of its series -- the 2-Toda (or Liouville) equation and two copies of AKNS system. Then the Virasoro constraints give us expressions of ``time" derivatives in terms of spectral endpoint derivatives, which allow us to obtain our first system of five scalar PDE for joint gap probabilities of two coupled GUE w.r.t.~the spectral endpoints:

\begin{theorem}   \label{thm:1}

The logarithm $T$ of the joint spacing probability of two coupled Gaussian matrices or the joint two-time probability for the Gaussian Dyson process, together with the auxiliary functions $U$ and $W$, satisfies the following system of PDE:

$$
\A\tilde\A T = 4c(UW - n/2), \eqno(\ref{eq:00})
$$

$$
\A^2 U = -2\A_0 U - 2\A^2T\cdot U,  \eqno(\ref{eq:+t})
$$

$$
\A^2 W = 2\A_0 W - 2\A^2T\cdot W,  \eqno(\ref{eq:-t})
$$ 

$$
\tilde\A^2 U = -2\tilde\A_0 U - 2\tilde\A^2T\cdot U,  \eqno(\ref{eq:+s})
$$ 

$$
\tilde\A^2 W = 2\tilde\A_0 W - 2\tilde\A^2T\cdot W.  \eqno(\ref{eq:-s})
$$

\end{theorem}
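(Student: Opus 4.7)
The plan is to realize $\tau_n^J$ as the restriction to $t=s=0$ of the deformed two--Toda $\tau$-function $\tau_n(t,s)$ defined at the end of Section~\ref{sec:MatrInt}, to extract five lowest-order bilinear PDE from the 2-Toda bilinear identity, and finally to use Virasoro constraints to trade the time derivatives $\partial/\partial t_1$, $\partial/\partial s_1$ for the endpoint (and coupling) derivatives packaged in $\A$, $\tilde\A$, $\A_0$, $\tilde\A_0$.

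First I would expand the 2-Toda bilinear identity to low order in the time variables. The simplest residue gives the Toda (Liouville) equation
$\partial_{t_1}\partial_{s_1}\log\tau_n = \tau_{n+1}\tau_{n-1}/\tau_n^2$,
which becomes (\ref{eq:00}) once one rewrites $\tau_{n+1}\tau_{n-1}/\tau_n^2$ using $U_nW_n$ and the normalization identity (\ref{eq:0.11}); the factor $4c$ and the shift $-n/2$ are then accounted for by this same identity evaluated at $t=s=0$. The next two pairs of residues produce, in the $t$-flow and the $s$-flow respectively, the two copies of the AKNS system, i.e. second-order bilinear equations for the wave functions $\T_n^\pm$, which when rewritten with $U_n=\tau_{n+1}/\tau_n$ and $W_n=\tau_{n-1}/\tau_n$ and $T=\log\tau_n$ become the four quasilinear equations for $\partial_{t_1}^2U$, $\partial_{t_1}^2W$, $\partial_{s_1}^2U$, $\partial_{s_1}^2W$ that prefigure (\ref{eq:+t})--(\ref{eq:-s}).

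Next I would invoke the Virasoro constraints. Invariance of $\tau_n(t,s)$ under the variations $M_1\mapsto M_1+\epsilon M_1^{k+1}$, $M_2\mapsto M_2+\epsilon M_2^{k+1}$ produces two families of first-order differential constraints on $\tau_n(t,s)$; at the lowest nontrivial order they identify $\partial_{t_1}$ and $\partial_{s_1}$ with first-order operators $\A$, $\tilde\A$ involving the spectral endpoints and the coupling $c$ (the cross-term $2c\,\text{Tr}\,M_1M_2$ in the weight is what forces $c$-derivatives and mixed pieces into these operators). Iterating, one gets $\partial_{t_1}^2=\A^2+2\A_0$ and $\partial_{s_1}^2=\tilde\A^2+2\tilde\A_0$, where $\A_0$, $\tilde\A_0$ collect the contributions coming from the quadratic part of the Virasoro generator acting on the Gaussian weight. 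Substituting these into the Toda and AKNS equations obtained above and restricting to $t=s=0$ yields the stated system.

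The main obstacle will be the Virasoro step: one has to verify carefully that the Virasoro generators in the $t$- and $s$-directions, once rewritten via the substitutions $M_i\mapsto M_i+\epsilon M_i^{k+1}$ applied to the weight $-\mathrm{Tr}\,M_1^2-\mathrm{Tr}\,M_2^2+2c\,\mathrm{Tr}\,M_1M_2$, close up into precisely the operators $\A$, $\tilde\A$, $\A_0$, $\tilde\A_0$ with the coefficients that make the right-hand sides of (\ref{eq:+t})--(\ref{eq:-s}) come out with the signs $\mp 2\A_0$, $\mp 2\tilde\A_0$ and the term $-2\A^2T\cdot U$ (and analogues). This is a bookkeeping issue: the three sources of contributions (Gaussian weight, coupling term, endpoint boundary terms from $\chi_{J_i}$) must combine exactly so that the one-matrix limit $c\to 0$ reduces the four AKNS equations to the corresponding pair known from \cite{IR1, UniUE}, while leaving (\ref{eq:00}) as the purely two-matrix content encoded by the coupling. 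Once this reduction is established, the theorem follows by combining the five bilinear PDE with the Virasoro substitutions and specializing to $t=s=0$.
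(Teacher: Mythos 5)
Your proposal follows the paper's own route exactly: the 2-Toda bilinear identity yielding the Liouville equation plus the two AKNS copies, then the $k=-1$ and $k=0$ Virasoro constraints to convert $\partial_{t_1},\partial_{s_1},\partial_{t_2},\partial_{s_2}$ (and hence the second derivatives) into the operators $\A,\tilde\A,\A_0,\tilde\A_0$ on the locus $t=s=0$. The only bookkeeping you underestimate is the final passage to the ``physical'' normalization ($\xi\to\xi/\sqrt{1-c^2}$, which changes the meaning of $\partial_c$ and must be tracked inside $\A_0$, $\tilde\A_0$) together with the rescalings $U=U_n(1-c^2)^{n+1/2}$, $W=W_n(1-c^2)^{-(n-1/2)}$ needed to absorb the residual terms proportional to $c^2(2n\pm1)U_n$ and $c^2(2n\mp1)W_n$ and arrive at the clean form stated in the theorem.
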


\ni Here we denoted

\be
\A =  \B_{-1} + c\tilde\B_{-1}, \ \ \ \tilde\A = c\B_{-1} + \tilde\B_{-1},   
\ee

$$
\A_0 = \B_0 + c^2\tilde\B_0 - \D_c, \ \ \ \tilde\A_0 = \tilde\B_0 + c^2\B_0 - \D_c, \ \ \ \D_c = (1-c^2)c\prt_c,
$$

$$
\B_k = \sum_{i \in \text{endpoints of } M_1} a_i^{k+1}\frac{\prt}{\prt a_i}, \ \ \ \ \tilde\B_k = \sum_{i \in \text{endpoints of } M_2} a_i^{k+1}\frac{\prt}{\prt a_i},
$$

\noindent and 

$$
T = \ln\tau_n^J, \ \ \ \ U = U_n(1-c^2)^{n+1/2}, \ \ \ \ W = W_n(1-c^2)^{-(n-1/2)}, \ \ \ \ U_n = \frac{\tau_{n+1}^J}{\tau_n^J}, \ \ \ \ W_n = \frac{\tau_{n-1}^J}{\tau_n^J}.
$$ 

\noindent When all endpoints, say, of the second matrix go to infinity, the coupled matrix integral becomes the (renormalized) corresponding one-matrix integral, see the previous section. Our system of equations reduces in this limit to the three equations for single Gaussian matrices obtained in~\cite{IR1}. The known AvM equation~\cite{AvM1} for coupled Gaussian matrices, however, becomes trivial in this limit. So our system contains some additional information.
\par Solving this system in general, we find two higher (fifth) order PDE satisfied by $T$:

\begin{theorem}   \label{thm:genT}

The logarithm of the joint spacing probability for two coupled Gaussian matrices (or two time-point distribution for the Dyson process) satisfies the two higher-order PDE below:

\be
\A\left(-\A^2F - 4\A_0T - 4\A^2TF + \frac{(\A F)^2 - G^2}{F} \right) = -\tilde\A\left(\frac{1}{2c}(\A^2T)^2 + \frac{1}{c}(\A_0^2 + 2\A_0)T\right),  \label{eq:T1}
\ee

\be
\tilde\A\left(-\tilde\A^2F - 4\tilde\A_0T - 4\tilde\A^2TF + \frac{(\tilde\A F)^2 - \tilde G^2}{F} \right) = -\A\left(\frac{1}{2c}(\tilde\A^2T)^2 + \frac{1}{c}(\tilde\A_0^2 + 2\tilde\A_0)T\right). \label{eq:T2}
\ee

\noindent Expressions for $F \equiv UW$, $G \equiv W\A U - U\A W$, $\tilde G = W\tilde \A U - U\tilde \A WG$ in terms of $T$ read as

\be
F = \frac{1}{4c}\A\tilde\A T + \frac{n}{2}, \ \ \ G = \A T - \frac{1}{2c}\tilde\A\A_0T, \ \ \ \tilde G = \tilde\A T - \frac{1}{2c}\A\tilde\A_0T,  \label{eq:GG2}
\ee

\ni so

$$
\frac{(\A F)^2 - G^2}{F} = \frac{(\A^2\tilde\A T)^2 - 4(\tilde\A\A_0T - 2c\A T)^2}{4c(\A\tilde\A T + 2cn)}, \ \ \  \frac{(\tilde\A F)^2 - \tilde G^2}{F} = \frac{(\A\tilde\A^2 T)^2 - 4(\A\tilde\A_0T - 2c\tilde\A T)^2}{4c(\A\tilde\A T + 2cn)}
$$

\end{theorem}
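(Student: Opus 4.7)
My plan is to eliminate the auxiliary functions $U$ and $W$ from the five scalar PDE of Theorem~\ref{thm:1} so as to arrive at a closed equation on $T$. The formula $F = UW = \tfrac{1}{4c}\A\tilde\A T + \tfrac{n}{2}$ is an immediate rewriting of (\ref{eq:00}). To handle $G = W\A U - U\A W$, I would multiply (\ref{eq:+t}) by $W$ and (\ref{eq:-t}) by $U$ and subtract. Because both $\A$ and $\A_0$ are first-order derivations, the $\A^2 T$ contributions cancel and, via the product rule, the resulting identity collapses to
$$\A G = -2\A_0 F.$$
Substituting the formula for $F$ and applying the commutator identity
$$[\A\tilde\A,\A_0] \;=\; 2c\A^2 + (1-c^2)\A\tilde\A,$$
derived in turn from the direct calculations $[\A,\A_0]=\A$, $[\tilde\A,\A_0]=2c\A-c^2\tilde\A$, $[\A,\tilde\A]=0$ (all of which are routine using the definitions $\B_k = \sum a_i^{k+1}\prt/\prt a_i$, $\tilde\B_k = \sum a_i^{k+1}\prt/\prt a_i$, $\D_c = (1-c^2)c\prt_c$), one checks that the candidate $G = \A T - \tfrac{1}{2c}\tilde\A\A_0 T$ solves $\A G = -2\A_0 F$. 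Any remaining ambiguity lying in the kernel of $\A$ is then pinned down by repeating the argument with (\ref{eq:+s})--(\ref{eq:-s}), which yields the twin relation $\tilde\A\tilde G = -2\tilde\A_0 F$ and the formula for $\tilde G$ by the $\A\leftrightarrow\tilde\A$ symmetry, together with the vanishing of $G,\tilde G$ as the endpoints recede to infinity.

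For the higher-order PDE (\ref{eq:T1}), I would take the symmetric combination $W\cdot(\ref{eq:+t}) + U\cdot(\ref{eq:-t})$. Combined with $\A^2(UW) = W\A^2 U + 2\A U\cdot\A W + U\A^2 W$ and the algebraic identity $\A U\cdot\A W = ((\A F)^2 - G^2)/(4F)$, which is immediate from $W\A U = (\A F + G)/2$ and $U\A W = (\A F - G)/2$, this reduces to
$$\A^2 F - \frac{(\A F)^2-G^2}{2F} \;=\; -4(\A^2 T)F - 2H, \qquad H := W\A_0 U - U\A_0 W.$$
Solving for $H$ and substituting, the bracketed quantity $L$ in (\ref{eq:T1}) becomes $L = \A^2 F + 4(\A^2 T)F + 4(H - \A_0 T)$. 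Applying $\A$, using $[\A,\A_0]=\A$, and reprocessing $\A H$ via the already-derived formula for $G$ converts every remaining occurrence of $U,W$ into pure $T$-derivatives. Careful bookkeeping of the commutators $[\A,\A_0]=\A$ and $[\tilde\A,\A_0]=2c\A-c^2\tilde\A$ then collapses the resulting expression onto $-\tilde\A\!\left(\tfrac{1}{2c}(\A^2T)^2 + \tfrac{1}{c}(\A_0^2+2\A_0)T\right)$, producing (\ref{eq:T1}). Equation (\ref{eq:T2}) follows by the $\A\leftrightarrow\tilde\A$, $\A_0\leftrightarrow\tilde\A_0$ symmetry of both the starting system and the derivation. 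The equivalent forms of $((\A F)^2-G^2)/F$ and $((\tilde\A F)^2-\tilde G^2)/F$ displayed at the end of the theorem are then a direct substitution using the already-established formulas for $F$, $G$, and $\tilde G$.

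The main obstacle is the last conversion of $H$ (and hence of $\A H$) to a purely $T$-dependent expression: it requires processing several nested commutators among $\A$, $\tilde\A$, $\A_0$, $\tilde\A_0$ and organizing the result so that the outer right-hand side factors cleanly through a single $\tilde\A$ rather than a generic endpoint derivation. The remaining manipulations are routine applications of the product rule for first-order derivations.
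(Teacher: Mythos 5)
Your setup for the first-order consequences coincides with the paper's: the combinations $W\cdot(\ref{eq:+t})\pm U\cdot(\ref{eq:-t})$ are exactly how the paper obtains $\A G=-2\A_0F$ (its (\ref{eq:AG})) and $\A^2F-\frac{(\A F)^2-G^2}{2F}=-2G_0-4\A^2T\cdot F$ (its (\ref{eq:AF}), your $H$ being the paper's $G_0$), and integrating the linear equation for $G$ is also the paper's route. The genuine gap is at the decisive step for (\ref{eq:T1}): after writing the bracket as $L=\A^2F+4\A^2T\,F+4(H-\A_0T)$ and applying $\A$, you must evaluate $\A H=\A(W\A_0U-U\A_0W)$, and this is \emph{not} determined by the formula for $G=W\A U-U\A W$. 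The product rule plus $[\A,\A_0]=\A$ only give $\A H=2S+(\A_0+1)G$ with $S:=\A W\,\A_0U-\A U\,\A_0W$, and $S$ is an independent unknown ($\A_0$ is a different vector field from $\A$, so knowing $\A\ln(U/W)=G/F$ says nothing about $\A_0\ln(U/W)$ beyond what (\ref{eq:AF}) already encodes); applying $\A$ to the expression for $H$ read off from (\ref{eq:AF}) is circular. The missing ingredient is a \emph{third} linear combination of the original system, namely (\ref{eq:+t}) weighted by $\A W$ plus (\ref{eq:-t}) weighted by $\A U$ (the paper's (\ref{eq:hG})), which evaluates $2S=-2\A^2T\,\A F-\A\left(\frac{(\A F)^2-G^2}{4F}\right)$; equating the two resulting expressions for $\A H$ is precisely what yields (\ref{eq:T1}). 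Your proposal never forms this combination, and ``careful bookkeeping of commutators'' cannot replace it.

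A secondary problem: your claimed verification that $G=\A T-\frac{1}{2c}\tilde\A\A_0T$ solves $\A G=-2\A_0F$ does not go through. Using your (correct) identity $[\A\tilde\A,\A_0]=2c\A^2+(1-c^2)\A\tilde\A$, one finds $\A\left(\A T-\frac{1}{2c}\tilde\A\A_0T\right)=-2\A_0F-\frac{1-c^2}{2c}\A\tilde\A T$, so the check fails by a nonzero first-order term. The primitive actually produced by integration is $G=-(\A_0-1)\frac{\tilde\A T}{2c}$ (the paper's (\ref{eq:GG})), which differs from the displayed (\ref{eq:GG2}) by $\frac{1-c^2}{2c}\tilde\A T$; this seems to be an inconsistency internal to the paper, but your argument must work with the form that genuinely solves the equation, since the computation of $(\A_0+1)G$ that feeds the right-hand side of (\ref{eq:T1}) depends on it.
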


\ni Besides, we recover in our approach also the known third-order AvM equation~\cite{AvM1}.
\par Further on we consider in greater details the case of joint largest eigenvalue distribution for two coupled GUE. In this case, a simpler fourth-order PDE can be derived from the system of the first theorem, which we later compare with our equations derived from TW approach of matrix kernel Fredholm determinant.
For this purpose, it is convenient to rewrite everything in a different way. Introduce notations: 

\be
\D_+ = \prt_{\xi_1} + \prt_{\xi_2}, \ \ \ \D_- = \prt_{\xi_1} - \prt_{\xi_2}, \ \ \ \xi_+ = \xi_1 + \xi_2, \ \ \ \xi_- = \xi_1 - \xi_2,
\ee

\noindent and 

\be
\sigma = \frac{1-c}{1+c}.
\ee

\ni The expression for $F$ above now reads

\be
F = \frac{(\D_+^2 - \sigma^2\D_-^2) T}{4(1-\sigma^2)} + \frac{n}{2}. \label{eq:0+}  
\ee

\ni Let 

$$
G_+ = W\D_+ U - U\D_+ W, \ \ \ G_- =  W\D_- U - U\D_- W.   
$$

\ni The last quantities have expressions in terms of $T$, corresponding to (\ref{eq:GG2}),

\be
G_+ = 1/2\D_+ T - 1/4\xi_+\D_+^2T - 1/4\xi_-\D_+\D_-T + \frac{1}{2c}\D_c\D_+T - 1/2\xi_+\sigma^2\frac{(\D_+^2 - \D_-^2)T}{(1-\sigma^2)},    \label{eq:G+} 
\ee
 
\be
G_- = 1/2\D_- T - 1/4\xi_-\D_-^2T - 1/4\xi_+\D_+\D_-T - \frac{1}{2c}\D_c\D_-T - 1/2\xi_-\frac{(\D_+^2 - \D_-^2)T}{(1-\sigma^2)}.    \label{eq:G-} 
\ee
 
\ni Then we have

\begin{theorem}   \label{thm:F4T}

The logarithm of the joint largest eigenvalue probability for two coupled Gaussian matrices (or two time-point distribution for the Dyson process) satisfies the fourth-order PDE:

$$
2F\D_+\D_-F - \D_+F\D_-F + G_+G_- + 2F(\xi_-G_+ + \xi_+G_-) + 8\D_+\D_-T\cdot F^2 = 0,  \eqno(\ref{eq:2})
$$

\ni where $F$ is given by formula (\ref{eq:0+}), while $G_+$, $G_-$ are given by formulas (\ref{eq:G+}), (\ref{eq:G-}), respectively, in terms of $T = \ln\tau_n^J$.

\end{theorem}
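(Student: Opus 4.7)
The plan is to derive (\ref{eq:2}) as a consequence of the system of Theorem~\ref{thm:1}, exploiting a happy cancellation: while the individual equations involve $\D_c = (1-c^2)c\prt_c$ through $\A_0$ and $\tilde\A_0$, the combination $\A_0-\tilde\A_0$ kills it. In the largest-eigenvalue setting, $\B_{-1}=\prt_{\xi_1}$, $\tilde\B_{-1}=\prt_{\xi_2}$, $\B_0=\xi_1\prt_{\xi_1}$, $\tilde\B_0=\xi_2\prt_{\xi_2}$, so a direct calculation gives
$$
\A^2-\tilde\A^2 \;=\; (\A-\tilde\A)(\A+\tilde\A) \;=\; (1-c^2)\D_+\D_-,\qquad
\A_0-\tilde\A_0 \;=\; (1-c^2)\bigl(\xi_1\prt_{\xi_1}-\xi_2\prt_{\xi_2}\bigr).
$$
In particular, $\D_c$ drops out of $\A_0-\tilde\A_0$, which is what will make the final PDE $c$-derivative free.

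Subtracting (\ref{eq:+s}) from (\ref{eq:+t}) and (\ref{eq:-s}) from (\ref{eq:-t}), and dividing through by $1-c^2$, yields the two clean scalar equations
$$
\D_+\D_- U + 2\bigl(\xi_1\prt_{\xi_1}-\xi_2\prt_{\xi_2}\bigr)U + 2\D_+\D_- T\cdot U = 0,
$$
$$
\D_+\D_- W - 2\bigl(\xi_1\prt_{\xi_1}-\xi_2\prt_{\xi_2}\bigr)W + 2\D_+\D_- T\cdot W = 0.
$$
I then multiply the first by $W$, the second by $U$, and add. Expanding $\prt_{\xi_i}=\tfrac12(\D_+\pm\D_-)$ groups the first-derivative contributions into the combinations
$W\prt_{\xi_1}U-U\prt_{\xi_1}W = (G_++G_-)/2$ and $W\prt_{\xi_2}U-U\prt_{\xi_2}W=(G_+-G_-)/2$, producing
$$
W\D_+\D_- U + U\D_+\D_- W + \xi_- G_+ + \xi_+ G_- + 4\D_+\D_-T\cdot F = 0.
$$

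Finally, the purely algebraic identity for any two smooth functions $U,W$ with $F=UW$,
$$
2F\bigl(W\D_+\D_- U + U\D_+\D_- W\bigr) \;=\; 2F\D_+\D_- F - \D_+F\,\D_-F + G_+G_-,
$$
which follows by expanding $\D_\pm F = W\D_\pm U+U\D_\pm W$ and $G_\pm = W\D_\pm U-U\D_\pm W$, converts the previous line, after multiplication by $2F$, into (\ref{eq:2}). The expression (\ref{eq:0+}) for $F$ in terms of $T$ comes directly from (\ref{eq:00}) combined with $\A\tilde\A=\frac{(1+c)^2}{4}(\D_+^2-\sigma^2\D_-^2)$ and $1-\sigma^2=4c/(1+c)^2$; the formulas (\ref{eq:G+})--(\ref{eq:G-}) for $G_\pm$ will be read off from the analogous expressions in (\ref{eq:GG2}) of Theorem~\ref{thm:genT} (which themselves come from (\ref{eq:00}) together with the identities $\A G^{(\A)}+2\A_0 F=0$, $\tilde\A G^{(\tilde\A)}+2\tilde\A_0 F=0$ obtained by the ``$W\cdot$(I)$-U\cdot$(II)'' combinations of the $U$ and $W$ equations).

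The only non-mechanical step is recognizing at the start that taking the difference $\A^2-\tilde\A^2$ of the $U$ and $W$ equations both (i) lowers the order of $c$-dependence and (ii) produces exactly $\D_+\D_-$ plus a first-order piece whose $W,U$-symmetric contraction yields the distinctive $\xi_-G_++\xi_+G_-$ combination; everything else is straightforward algebra on the identity $F=UW$.
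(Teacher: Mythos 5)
Your proposal is correct and follows essentially the same route as the paper: subtracting the tilded from the untilded AKNS-type equations of Theorem~\ref{thm:1} to cancel $\D_c$ (using $\A^2-\tilde\A^2=(1-c^2)\D_+\D_-$ and $\A_0-\tilde\A_0=(1-c^2)(\B_0-\tilde\B_0)$), forming the symmetric combination $W\cdot(\text{$U$-eq})+U\cdot(\text{$W$-eq})$, and converting via the algebraic identity for $F=UW$ into equation (\ref{eq:2}), with $F$, $G_+$, $G_-$ then substituted from (\ref{eq:0+}), (\ref{eq:G+}), (\ref{eq:G-}). This is exactly the paper's derivation of (\ref{eq:2}) from equations (\ref{eq:(+t)-(+s)}) and (\ref{eq:(-t)-(-s)}) in section~\ref{sec:Toda}, so no gap.
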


\ni Its one-matrix limit, i.e. the limit as $\xi_1 \to \infty$ or $\xi_2 \to \infty$, is a combination of the 3rd order equation in the derivative of $T$~\cite{TW1}, which gives Painlev\'e IV after integration, and Painlev\'e IV itself. This is again in contrast to the equation (\ref{eq:AvM}), trivial in this limit. 
\par It turns out that we can get more complete information from the matrix kernel approach of~\cite{TW2}. First, we derive the large system below, which turns out to be highly redundant. It contains $2\times 2$ matrix variables analogous to the scalar variables of~\cite{TW1}. Here the main variables are the entries of the matrix $r$, $r_{ij} = R_{ij}(\xi_i, \xi_j)$, which contains values of the matrix resolvent kernel $R$ (for Fredholm matrix operator $K^J$) at the spectral endpoints, its diagonal entries are the first derivatives of the previously introduced $T \sim \ln\Pr$,

$$
r_t \equiv \text{Tr}r = \D_+T, \ \ \ \ r_3 \equiv \text{Tr}(\sigma_3 r) = \D_-T,
$$

\ni while the anti-diagonal entries are expressed in terms of the matrix

\be
A = \frac{[\sigma_3, r - 1/2[\sigma, r]]}{1 - \sigma^2}, \ \ \ \ \ \ \tilde A = (\sigma_3\sigma)A,   \label{eq:Adef}
\ee

\noindent where we have introduced the matrix $\sigma$,

\be
\sigma = \frac{1-c}{1+c}\sigma_3.
\ee

\ni As a consequence of the system below, the scalar $A^2$ ($A^2 \sim r_a^2$, square of the anti-diagonal part of $r$) is equal:

\be
A^2 = \frac{\D_+ r_t - \D_-r_3}{1-\sigma^2} = \frac{\D_+^2\ln\tau_n^J - \D_-^2\ln\tau_n^J}{1-\sigma^2} = \frac{4\prt^2_{\xi_1\xi_2}\ln\tau_n^J}{1-\sigma^2}.   \label{eq:A^2}
\ee

\begin{theorem}   \label{thm:QPsys}

The joint largest eigenvalue probabilities for the ensemble of two coupled Gaussian matrices and related auxiliary variables satisfy the following $2\times 2$ matrix system of first order PDE:

$$
\D_+r = -\frac{1}{2}\hat q\hat{\tilde p}(I-\sigma) - \frac{1}{2}(I+\sigma)\hat p\hat{\tilde q} - [\sigma\xi, r], \eqno(\ref{eq:r+})
$$

$$
(\sigma_3\sigma)\D_-r = \frac{1}{2}\hat q\hat{\tilde p}(I-\sigma) - \frac{1}{2}(I+\sigma)\hat p\hat{\tilde q} - [\xi, r], \eqno(\ref{eq:r-})
$$

$$
\D_+\hat q = -(\xi + \tilde A)\hat q + \hat p\hat U,   \eqno(\ref{eq:q+})
$$

$$
\D_-\hat q = (-\sigma_3\xi + A)\hat q + \sigma_3\hat p\hat U,   \eqno(\ref{eq:q-})
$$

$$
\D_+\hat p = (\xi + \tilde A)\hat p - \hat q\hat W,   \eqno(\ref{eq:p+})
$$

$$
\D_-\hat p = (\sigma_3\xi + A)\hat p - \sigma_3\hat q\hat W,   \eqno(\ref{eq:q-})
$$
  
$$
\D_+\hat{\tilde q} = -\hat{\tilde q}(\xi + \tilde A) + \hat U\hat{\tilde p},   \eqno(\ref{eq:tq+})
$$

$$
\D_-\hat{\tilde q} = -\hat{\tilde q}(\sigma_3\xi + A)\hat q + \hat U\hat{\tilde p}\sigma_3,   \eqno(\ref{eq:tq-})
$$

$$
\D_+\hat{\tilde p} = \hat{\tilde p}(\xi + \tilde A) - \hat W\hat{\tilde q},   \eqno(\ref{eq:tp+})
$$

$$
\D_-\hat{\tilde p} = \hat{\tilde p}(\sigma_3\xi - A) - \hat W\hat{\tilde q}\sigma_3,   \eqno(\ref{eq:tp-})
$$

\be
\D_+\hat U = \hat{\tilde q}\hat q,  \label{eq:U+}
\ee

\be
\D_-\hat U = \hat{\tilde q}\sigma_3\hat q,  \label{eq:U-}
\ee

\be
\D_+\hat W = -\hat{\tilde p}\hat p,  \label{eq:W+}
\ee

\be
\D_-\hat W = -\hat{\tilde p}\sigma_3\hat p.  \label{eq:W-}
\ee

\end{theorem}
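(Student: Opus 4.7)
The plan is to adapt the matrix-kernel method of Tracy and Widom \cite{TW2}, following the scalar modification used for one-matrix ensembles in \cite{IR1,UniUE}. The starting point is to rewrite the truncated $2\times 2$ extended Hermite kernel $K^J$ in an integrable rank-two form: each block $K_{ij}(x,y)\chi_{J_j}(y)$ is expressible through a finite pair of generating functions built from $\varphi_n$, $\varphi_{n-1}$ and the block-dependent exponentials $e^{k(t_1-t_2)}$, after splitting the sum between $k<n$ and $k\ge n$. This puts the TW integrable-kernel machinery in play blockwise, with the two generating vectors assembling into $2\times 2$ matrix-valued analogs of the scalar $\varphi,\psi$ of \cite{TW1}.

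Define $\hat q,\hat p$ to be the $2\times 2$ matrices whose $(i,j)$ entry is the result of applying $(I-K^J)^{-1}$ to the $j$-th generating vector and evaluating at $x=\xi_i$; the tildes $\hat{\tilde q},\hat{\tilde p}$ are the corresponding quantities for the transposed resolvent, and $\hat U,\hat W$ are matrix inner products of these dressed functions against the undressed generators over the two intervals. The matrix $r$ with $r_{ij}=R_{ij}(\xi_i,\xi_j)$ is linked to $T$ through the classical identity $\prt_{\xi_i}\ln\det(I-K^J)=-R_{ii}(\xi_i,\xi_i)$, which yields $\text{Tr}\,r=\D_+T$ and $\text{Tr}(\sigma_3 r)=\D_-T$. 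Equations (\ref{eq:U+})--(\ref{eq:W-}) follow immediately from Leibniz: differentiating the defining integrals $\int_{-\infty}^{\xi_i}(\cdots)\,dx$ in $\xi_1,\xi_2$ produces only boundary terms at $x=\xi_i$, the $\D_+$ combination adding them and the $\D_-$ combination subtracting them with opposite signs at the two endpoints, which is exactly the $\sigma_3$ weighting.

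For (\ref{eq:r+})--(\ref{eq:tp-}), two further inputs are needed. First, the resolvent-derivative identity $\prt_{\xi_k}(I-K^J)^{-1}=(I-K^J)^{-1}(\prt_{\xi_k}K^J)(I-K^J)^{-1}$, in which $\prt_{\xi_k}K^J$ is a boundary distribution at $x=\xi_k$; this produces the $\hat p\hat U$ and $-\hat q\hat W$ right-hand sides. Second, since $\hat q,\hat p$ are evaluated at the moving point $x=\xi_i$, one also needs the $x$-derivative of the dressed generating function. The TW commutator identity --- the multiplication operator $M:f(x)\mapsto xf(x)$ gives $[M,K^J]$ of finite rank in each block --- combined with the harmonic-oscillator recursion $\varphi_n'(x)=x\varphi_n(x)-\sqrt{2n}\,\varphi_{n-1}(x)$ expresses this $x$-derivative through $\xi_i\hat q,\xi_i\hat p$ and boundary resolvent values that assemble into the matrices $A,\tilde A$ of (\ref{eq:Adef}); the bracket $[\sigma_3,\cdot\,]$ in the definition of $A$ is forced by the sign asymmetry between the $i\ge j$ and $i<j$ pieces of the kernel. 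The PDEs for $r$ itself then come from differentiating $r_{ij}=R_{ij}(\xi_i,\xi_j)$ in both arguments and substituting the commutator expressions.

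The main technical obstacle is bookkeeping the coupling matrix $\sigma=\tfrac{1-c}{1+c}\sigma_3$ throughout. The cross-coupling $2c\,\text{Tr}\,M_1M_2$ in (\ref{eq:0.1}) prevents $[M,K^J]$ from collapsing to a clean rank-two form as in \cite{TW1}: one instead obtains mixed terms in which the generating vectors are flanked from different sides by $I\pm\sigma$. The calculation at the heart of the theorem is to sort these contributions into the specific left-right combinations $\hat q\hat{\tilde p}(I-\sigma)$ and $(I+\sigma)\hat p\hat{\tilde q}$ appearing in (\ref{eq:r+}), (\ref{eq:r-}), and to verify that the remaining diagonal pieces reassemble into the commutators $[\sigma\xi,r]$ and $[\xi,r]$. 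Once that is done, (\ref{eq:A^2}) follows as a scalar consistency check by subtracting an appropriate combination of the two $\D_\pm r$ equations to isolate the anti-diagonal part of $r$ and squaring.
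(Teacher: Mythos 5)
Your overall strategy --- resolvent boundary values, commutator identities with $D$ and $M$, and the matrix inner products $u$, $w$ --- is the strategy of section \ref{sec:4}, so the architecture of the proposal is right. But the derivation as you set it up breaks at the first substantive step. You claim that each block $K_{ij}(x,y)\chi_{J_j}(y)$ of the extended Hermite kernel can be put in integrable (finite-rank numerator) form via a finite pair of generating functions, and later that $[M,K^J]$ is of finite rank in each block. Both claims fail for the off-diagonal blocks: writing $c=e^{t_1-t_2}$ and using the three-term recursion for the $\varphi_k$, one finds that $(x-y)\sum_{k\ge n}c^{k-n}\varphi_k(x)\varphi_k(y)$ equals a boundary term \emph{plus} $(1-c)$ times another infinite sum of linearly independent bilinear terms, because the geometric weights $c^{k-n}$ spoil the Christoffel--Darboux telescoping. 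The extended kernel is precisely \emph{not} integrable blockwise, and this is why the argument cannot run on $[M,K^J]$ alone. The missing ingredient is the pair of \emph{twisted} operator identities (Lemma 3 of~\cite{TW2}, reproduced just before (\ref{eq:4.4})): since $D+M$ and $D-M$ lower and raise the harmonic-oscillator index by one, conjugation shifts $k$ and, against the weight $e^{k(t_i-t_j)}$, produces the factor $e^{t_i-t_j}$ on exactly one side, so that $(D+M)K_{ij}-e^{t_i-t_j}K_{ij}(D+M)$ and $e^{t_i-t_j}(D-M)K_{ij}-K_{ij}(D-M)$ are each rank one. Everything in (\ref{eq:r+})--(\ref{eq:tp-}) flows from these two identities conjugated by $\rho$ and packaged via $e_L$, $e_U$ into $I\pm\sigma$; without them your argument cannot produce the stated right-hand sides.

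The remainder of your outline is consistent with the paper in spirit: the equations for $\hat U$, $\hat W$ do come from differentiating the inner products (though you need $\prt_k\rho=-R\delta_k\rho$ in addition to the boundary terms from $\chi$, not Leibniz on the integration domain alone); the matrices $A$, $\tilde A$ do arise from reorganizing the commutators $[\sigma,r]$, $[\sigma_3,r]$ once the equations are multiplied through by $\Theta$ and by $(I\pm\sigma)^{-1}$; and (\ref{eq:A^2}) is indeed a scalar consequence of the two $r$-equations. But all of this sits downstream of the twisted commutator identities, which your proposal replaces with a structural property the kernel does not have, so the gap is essential rather than cosmetic.
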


\ni The other matrices in the above system of equations are degenerate (determinant zero). 
\par Besides, there are simple universal relations here which are direct analogs of the corresponding relations for the one-matrix case~\cite{UniUE}, involving matrix analogs $\bar u\equiv \bar u_n = (\varphi_n, (I-K_n^J)\varphi_n)$ and $\bar w\equiv \bar w_n = (\varphi_{n-1}, (I-K_n^J)\varphi_{n-1})$ of inner products of~\cite{TW1}, see sections \ref{sec:4} and \ref{sec:unirel}. 



  
Then, with $\Theta$ being the matrix with all elements equal unity (as in~\cite{TW2}) and $\hat{\bar w}_n = (I-\sigma)\bar w_n(I+\sigma)/(1-\sigma^2)$, define 

$$
\hat{\bar U} = \Theta - \Theta \bar u_n\Theta,   \ \ \ \ \ \hat{\bar W} = \Theta + \Theta \hat{\bar w}_n\Theta.
$$

\ni Then in the Gaussian case $\hat U = \sqrt{n/2}\hat{\bar U}$ and $\hat W = \sqrt{n/2}\hat{\bar W}$.

\begin{theorem}    \label{thm:unirel}

The correspondence between TW and ASvM variables found for 1-matrix case in~\cite{IR1, UniUE}, is modified for the coupled  case\footnote{This is true in general for any matrix kernel of the form (\ref{eq:4.3}) like it was for the 1-matrix case~\cite{UniUE} since we have not used the differentiation formulas for the harmonic oscillator wavefunctions here.} as follows:
 
\be
\frac{\tau_{n+1}^J/\tau_{n+1}}{\tau_n^J/\tau_n} = \det(I - \Theta \bar u_n) = 1 - \text{Tr}(\Theta \bar u_n),
\ee

\be
\frac{\tau_{n-1}^J/\tau_{n-1}}{\tau_n^J/\tau_n} = \det(I + \Theta\hat{\bar w}_n) = 1 + \text{Tr}(\Theta\hat{\bar w}_n),
\ee

therefore

\be
 \text{Tr} \hat{\bar U} = 2\;\frac{\tau_{n+1}^J/\tau_{n+1}}{\tau_n^J/\tau_n}, \ \ \ \ \ \ \text{Tr}\hat{\bar W}  = 2\;\frac{\tau_{n-1}^J/\tau_{n-1}}{\tau_n^J/\tau_n}.
\ee

\end{theorem}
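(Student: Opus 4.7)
The plan is to mimic the scalar argument of~\cite{IR1, UniUE}, adapted to the $m\times m$ block structure of the extended kernel (\ref{eq:00.4}). Two ingredients suffice: the Eynard--Mehta--type identity $\tau_n^J/\tau_n = \det(I - K^{n,J})$, already used implicitly after (\ref{eq:00.4}); and the observation that the transition $K^n \to K^{n\pm 1}$ is a rank-one perturbation in the sense of matrix kernels. As the footnote stresses, no special property of the harmonic-oscillator wavefunctions enters here---only the block structure (\ref{eq:00.4}).

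To make the rank-one structure explicit, I would subtract the two defining expressions in (\ref{eq:00.4}) with indices $n$ and $n+1$: both the ``$i \ge j$'' and the ``$i<j$'' branches pick up the \emph{same} extra summand $e^{n(t_i-t_j)}\varphi_n(x)\varphi_n(y)$, since removing the $k=n$ term from the negative infinite tail contributes with the same sign as adding it to the positive finite sum. Setting $\Phi_i(x) = e^{nt_i}\varphi_n(x)$ and $\Psi_j(y) = e^{-nt_j}\varphi_n(y)$, the difference $K^{n+1} - K^n = \Phi \otimes \Psi$ is a single rank-one operator on $\bigoplus_j L^2(J_j)$, and Sherman--Morrison gives
\begin{equation*}
\frac{\det(I - K^{n+1,J})}{\det(I - K^{n,J})} = 1 - \bigl\langle \chi_J\Psi,\; (I - K^{n,J})^{-1}\Phi\bigr\rangle .
\end{equation*}

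Next I would expand this inner product as a double sum over the block indices $i,j$. The exponentials $e^{n(t_i-t_j)}$ combine with the resolvent entries $[(I-K^{n,J})^{-1}]_{ji}$ to form precisely $(\bar u_n)_{ji}$, so the right-hand side becomes $1 - \sum_{i,j}(\bar u_n)_{ji} = 1 - \text{Tr}(\Theta\bar u_n)$ with $\Theta = ee^{T}$ and $e=(1,\ldots,1)^T$. Because $\Theta$ has rank one, the matrix determinant lemma yields $\det(I - \Theta\bar u_n) = 1 - e^{T}\bar u_n e = 1 - \text{Tr}(\Theta\bar u_n)$, which is the first identity. The expression for $\text{Tr}\hat{\bar U}$ then follows by using $\Theta^2 = m\Theta$ inside $\hat{\bar U} = \Theta - \Theta\bar u_n\Theta$, giving $\text{Tr}\hat{\bar U} = m\bigl(1 - \text{Tr}(\Theta\bar u_n)\bigr)$; with $m=2$ this is the stated equality.

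The $\bar w$-claim is proved in exactly the same way, but using the \emph{downward} transition $K^n \to K^{n-1}$: this \emph{removes} the contribution at $k=n-1$ and therefore flips the sign in Sherman--Morrison, producing $1 + \text{Tr}(\Theta\hat{\bar w}_n)$ and $\det(I + \Theta\hat{\bar w}_n)$. The delicate point, which I expect to be the main obstacle, is the appearance of the $\sigma$-conjugation $\hat{\bar w}_n = (I-\sigma)\bar w_n(I+\sigma)/(1-\sigma^2)$: this reflects the rescaling linking the abstract two-matrix integral (\ref{eq:0.1}) to the physical Dyson-process normalization used to write down (\ref{eq:00.4}), and the $(1-\sigma^2)^{-1}$ prefactor is dictated by consistency with the renormalization (\ref{eq:0.11}). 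Once these $\sigma$-factors are correctly absorbed into $\hat{\bar w}_n$, the reduction to a rank-one determinant identity proceeds exactly as in the $\bar u$-case.
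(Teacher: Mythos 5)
Your strategy is correct, and it is genuinely different from the paper's. You treat the shift $n\to n\pm1$ of the \emph{unmodified} kernel (\ref{eq:00.4}) as a literal rank-one perturbation $\Phi\otimes\Psi$ of the operator on $\bigoplus_j L^2(J_j)$ and read the determinant ratio off in one stroke from the operator matrix-determinant lemma. The paper instead works throughout with the modified kernel (\ref{eq:4.3}), whose $n$-dependent exponential normalization destroys the plain rank-one structure and forces the twisted relations $K_{n+1}e_L-e_LK_n=e_L\Theta\varphi_n\varphi_n\chi$ and $e_UK_{n+1}-K_ne_U=\Theta e_U\varphi_n\varphi_n\chi$; it then passes to the resolvents, derives the recursions $e_LQ_n=P_{n+1}e_L-P_{n+1}e_L\Theta\bar u_n$ etc., obtains $\text{Tr}(r_{n+1}-r_n)=\D_+\ln\det(I-\Theta\bar u_n)$ using $\D_+u=-\tilde qq$, and finally integrates, fixing the constant by the $\xi\to\infty$ normalization. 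Your route is shorter, needs no differentiation formulas and no integration constant; the paper's route produces as by-products the operator identity $I+\Theta\hat{\bar w}_{n+1}=(I-\Theta\bar u_n)^{-1}$ and the recursion relations that are used elsewhere in the text.

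One correction to your last paragraph: the conjugation $\hat{\bar w}_n=(I-\sigma)\bar w_n(I+\sigma)/(1-\sigma^2)=e_U\bar w_ne_L/c$ has nothing to do with the normalization (\ref{eq:0.11}) of the two-matrix integral. The matrices $\bar u_n$, $\bar w_n$ are built from the resolvent of the \emph{modified} kernel (\ref{eq:4.3}), which is the conjugate of (\ref{eq:00.4}) by $\mathrm{diag}(e^{nt_i})$. For the upward shift your rank-one factors carry $e^{\pm nt_i}$, which cancel this conjugation exactly, so the inner product is $\text{Tr}(\Theta\bar u_n)$ with no extra weights. For the downward shift the factors carry $e^{\pm(n-1)t_i}$, leaving a residual conjugation by $\mathrm{diag}(e^{\pm t_i})$, i.e.\ the off-diagonal entries of $\bar w_n$ get multiplied by $c$ and $1/c$ respectively; this is precisely $e_U\bar w_ne_L/c$, since $e_U=\mathrm{diag}(c,1)$ and $e_L=\mathrm{diag}(1,c)$. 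With that one computation supplied, your argument closes.
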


\ni In fact, one has also a remarkable result:

\begin{lemma}   \label{lem:tr}



\be
 \text{Tr} (\hat{\bar U}\hat{\bar W}) = \text{Tr}\hat{\bar U}\cdot \text{Tr}\hat{\bar W}.
\ee

\end{lemma}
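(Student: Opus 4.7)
The plan rests on a single structural observation about the matrix $\Theta$: since $\Theta$ has every entry equal to 1, it is the rank-one matrix $\Theta = \mathbf{1}\mathbf{1}^{T}$, where $\mathbf{1}$ is the column vector of all ones. Consequently, for any $2 \times 2$ matrix $X$,
\begin{equation*}
\Theta X \Theta \;=\; \mathbf{1}(\mathbf{1}^{T} X \mathbf{1})\mathbf{1}^{T} \;=\; s(X)\,\Theta,
\end{equation*}
where $s(X) := \mathbf{1}^{T} X \mathbf{1}$ is the sum of all entries of $X$. In particular, $\Theta^{2} = 2\Theta$.

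The first step is to apply this observation to the definitions of $\hat{\bar U}$ and $\hat{\bar W}$ from the statement. Since $\Theta \bar u_n \Theta = s(\bar u_n)\,\Theta$ and $\Theta \hat{\bar w}_n \Theta = s(\hat{\bar w}_n)\,\Theta$, we obtain
\begin{equation*}
\hat{\bar U} \;=\; \bigl(1 - s(\bar u_n)\bigr)\Theta,
\qquad
\hat{\bar W} \;=\; \bigl(1 + s(\hat{\bar w}_n)\bigr)\Theta.
\end{equation*}
So both matrices are scalar multiples of $\Theta$; write $\hat{\bar U} = \alpha\Theta$ and $\hat{\bar W} = \beta\Theta$.

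The second and final step is a direct computation. Using $\Theta^{2} = 2\Theta$ and $\text{Tr}\,\Theta = 2$, we get
\begin{equation*}
\text{Tr}(\hat{\bar U}\hat{\bar W}) \;=\; \alpha\beta\,\text{Tr}(\Theta^{2}) \;=\; 4\alpha\beta \;=\; (2\alpha)(2\beta) \;=\; \text{Tr}\,\hat{\bar U}\cdot\text{Tr}\,\hat{\bar W},
\end{equation*}
which is exactly the claim.

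There is essentially no obstacle here: once the structural identity $\Theta X \Theta = s(X)\Theta$ is noted, the lemma collapses to the trivial identity $\text{Tr}(\alpha\Theta\cdot\beta\Theta) = \text{Tr}(\alpha\Theta)\text{Tr}(\beta\Theta)$ for scalar multiples of a rank-one matrix. It is worth remarking in passing that the same argument generalizes to a chain of $m$ coupled matrices, where $\Theta$ is the $m \times m$ all-ones matrix satisfying $\Theta^{2} = m\Theta$, so that $\text{Tr}(\hat{\bar U}\hat{\bar W}) = m^{2}\alpha\beta = \text{Tr}\,\hat{\bar U}\cdot\text{Tr}\,\hat{\bar W}$ holds verbatim.
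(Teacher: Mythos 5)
Your proof is correct, and it is genuinely different from --- and far more elementary than --- the one in the paper. The key observation $\Theta X\Theta = s(X)\,\Theta$ (with $s(X)=\mathbf{1}^{T}X\mathbf{1}=\text{Tr}(\Theta X)$) is valid, it immediately shows $\hat{\bar U}=(1-\text{Tr}(\Theta\bar u_n))\Theta$ and $\hat{\bar W}=(1+\text{Tr}(\Theta\hat{\bar w}_n))\Theta$, and the factorization of the trace then follows from $\Theta^2=2\Theta$ and $\text{Tr}\,\Theta=2$; your remark about the $m\times m$ case is also right. The paper proves the lemma by an entirely different route: it takes traces of the equations for $\D_+r$ and $\bar\D_-r$, invokes the matrix first integrals $\hat{\tilde p}\hat q=n\Theta-\hat W\hat U$ and $\hat{\tilde q}\hat p=n\Theta-\hat U\hat W$ to get $(\D_+^2-\bar\D_-^2)\ln\tau_n^J=2(1-\sigma^2)\bigl(\tfrac12\text{Tr}(\hat U\hat W)-n\bigr)$, compares this with the ``boundary-Toda'' equation from the ASvM side to conclude $\text{Tr}(\hat U\hat W)=4(1-c^2)U_nW_n$, and finally matches this with $\text{Tr}\hat U\cdot\text{Tr}\hat W$ computed from Theorem \ref{thm:unirel} and the normalization $\tau_{n+1}\tau_{n-1}/\tau_n^2=n/(2(1-c^2))$. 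What your argument buys is transparency: the factorization is a purely algebraic consequence of both matrices being scalar multiples of the rank-one matrix $\Theta$, independent of any differential equations, so the result is less ``remarkable'' than the text suggests. What the paper's longer derivation buys is the explicit evaluation $\text{Tr}(\hat U\hat W)=4(1-c^2)U_nW_n$ linking the TW and Toda variables, which is the content actually used downstream (e.g.\ in the lemma on $\text{Tr}((\hat U\hat W)^k)$ and in section \ref{sec:comp}); your proof does not yield that identity, but it was not asked to.
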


\ni All this plays a role in reducing the effective number of independent scalar variables and making the matrix system redundant. To solve it, we first reduced it to another matrix system (still highly redundant for the above reasons). Let 

\be
X_+ = \hat q\hat{\tilde p} + \hat p\hat{\tilde q}, \ \ \ \ \ \ X_- = \hat q\hat{\tilde p} - \hat p\hat{\tilde q},    \label{eq:X+-}
\ee

\be
\T = 2(\hat p\hat U\hat{\tilde p} - \hat q\hat W\hat{\tilde q}), \ \ \ \ \ \ G = 2(\hat p\hat U\hat{\tilde p} + \hat q\hat W\hat{\tilde q}),    \label{eq:PhiG}
\ee 

\noindent then 

\begin{theorem}   \label{thm:Xsys}

The following closed system of ten PDE is a direct consequence of the system of fourteen PDE obtained in section \ref{sec:4}:

$$
\D_+ r = -1/2X_+ - 1/4[\sigma, X_+] + 1/4\{\sigma, X_-\} - [\sigma\xi, r],     \eqno(\ref{eq:r+})
$$

$$
(\sigma_3\sigma)\D_- r = 1/2X_- - 1/4\{\sigma, X_+\} + 1/4[\sigma, X_-]   - [\xi, r],      \eqno(\ref{eq:r-})
$$

\noindent and

\be
\D_+ X_+ = \T - [\xi + \tilde A, X_-],     \label{eq:D+X+}
\ee

\be
\D_+ X_- = - [\xi + \tilde A, X_+],       \label{eq:D+X-}
\ee

\be
\D_+ \T = 3X_+^2 - 8nX_+ + X_-^2 + \{\xi + \tilde A, G\},      \label{eq:D+T}
\ee

\be
\D_+ G = \{\xi + \tilde A, \T\} + [X_+, X_-],      \label{eq:D+G}
\ee

\be
\D_- X_+ = 1/2\{\sigma_3, \T\} - [\sigma_3\xi, X_-] + [A, X_+],     \label{eq:D-X+}   
\ee

\be
\D_- X_- = 1/2[\sigma_3, G] - [\sigma_3\xi, X_+] + [A, X_-],      \label{eq:D-X-}
\ee

\be
\D_- \T = \{\sigma_3, X_+^2 - 4nX_+ + X_-^2\} + X_+\sigma_3X_+ - X_-\sigma_3X_- + \{\sigma_3\xi, G\} + [A, \T],     \label{eq:D-T}
\ee

\be
\D_- G = [\sigma_3, \{X_+, X_-\} - 4nX_-] + X_+\sigma_3X_- - X_-\sigma_3X_+ + \{\sigma_3\xi, \T\}+[A, G],      \label{eq:D-G}
\ee

\end{theorem}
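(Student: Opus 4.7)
The plan is to differentiate each composite matrix variable $X_\pm$, $\Phi$, $G$ by the Leibniz rule, substitute the first-order PDE (\ref{eq:q+})--(\ref{eq:W-}) for the underlying hatted quantities, and reorganize the result into commutators, anticommutators, and polynomials of the new variables. The two $r$-equations are a purely algebraic rewriting: using $\hat q\hat{\tilde p}=(X_++X_-)/2$ and $\hat p\hat{\tilde q}=(X_+-X_-)/2$ in (\ref{eq:r+}) and (\ref{eq:r-}), the $(I\mp\sigma)$ factors recombine so that the coefficients of $\sigma$ organize into $\pm\tfrac14[\sigma,X_+]$ and $\pm\tfrac14\{\sigma,X_-\}$ in the pattern stated.

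For the four $X_\pm$-equations, applying Leibniz to $\D_+X_+=(\D_+\hat q)\hat{\tilde p}+\hat q(\D_+\hat{\tilde p})+(\D_+\hat p)\hat{\tilde q}+\hat p(\D_+\hat{\tilde q})$ and inserting (\ref{eq:q+}), (\ref{eq:tp+}), (\ref{eq:p+}), (\ref{eq:tq+}), I expect the conjugation-type contributions of $\xi+\tilde A$ to telescope into $-[\xi+\tilde A,X_-]$ after cancellation between the $\hat q\hat{\tilde p}$ and $\hat p\hat{\tilde q}$ pieces, while the remaining $\hat p\hat U\hat{\tilde p}-\hat q\hat W\hat{\tilde q}$ terms assemble into $\Phi$. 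The calculations for $\D_+X_-$ and $\D_-X_\pm$ are entirely parallel; in the $\D_-$ case, the $\sigma_3$-insertions of the corresponding first-order PDEs produce the anticommutators $\{\sigma_3,\cdot\}$ and the $A$-commutators visible on the right of (\ref{eq:D-X+}), (\ref{eq:D-X-}).

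The main obstacle is the four equations for $\Phi$ and $G$. Differentiating the combinations $\hat p\hat U\hat{\tilde p}\pm\hat q\hat W\hat{\tilde q}$ and substituting (\ref{eq:U+})--(\ref{eq:W-}) produces, besides the anticommutator with $\xi+\tilde A$ and the quadratic pieces $\hat p\hat{\tilde q}\hat q\hat{\tilde p}+\hat q\hat{\tilde p}\hat p\hat{\tilde q}=(X_+^2-X_-^2)/2$, the crossed products $\hat q\hat W\hat U\hat{\tilde p}$ and $\hat p\hat U\hat W\hat{\tilde q}$. A direct computation yields
$$
\D_+\Phi=\{\xi+\tilde A,G\}+(X_+^2-X_-^2)-4\bigl(\hat q\hat W\hat U\hat{\tilde p}+\hat p\hat U\hat W\hat{\tilde q}\bigr),
$$
so matching (\ref{eq:D+T}) forces the quadratic identity $\hat q\hat W\hat U\hat{\tilde p}+\hat p\hat U\hat W\hat{\tilde q}=2nX_+-\tfrac12(X_+^2+X_-^2)$. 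Establishing this identity is the delicate step: it does not come from differentiation alone, but reduces to matrix relations of the shape $\hat W\hat U+\hat{\tilde p}\hat q=2nI$ and $\hat U\hat W+\hat{\tilde q}\hat p=2nI$, the analogues in our setting of the harmonic-oscillator differentiation relations used in the one-matrix TW computation. In the present matrix framework these should follow from the rank-one (degenerate) structure of $\hat q,\hat p,\hat U,\hat W$ noted in the theorem statement together with the trace identity of Lemma \ref{lem:tr} and the normalization $\hat U=\sqrt{n/2}\,\hat{\bar U}$, $\hat W=\sqrt{n/2}\,\hat{\bar W}$ of Theorem \ref{thm:unirel}. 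Once this identity is in hand, the remaining PDE (\ref{eq:D+G}), (\ref{eq:D-T}), (\ref{eq:D-G}) follow by identical but longer Leibniz computations, the $\sigma_3$-insertions of the $\D_-$ equations producing the $X_+\sigma_3 X_+$, $X_-\sigma_3 X_-$ and $[\sigma_3,\cdot]$ terms on the right-hand sides.
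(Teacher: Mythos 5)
Your proof is correct and follows essentially the same route as the paper's (one\--line) proof: Leibniz differentiation of $X_\pm$, $\T$, $G$ and substitution of the fourteen first\--order PDE of theorem \ref{thm:QPsys}; your algebra for the $r$\--equations, the $X_\pm$\--equations, and the expansion of $\D_+\T$ all check out. The only point to repair is the provenance of your key quadratic identity. The relations you need are precisely the matrix first integrals (\ref{eq:WU})--(\ref{eq:UW}) stated at the end of section \ref{sec:4}, namely $\hat{\tilde p}\hat q + \hat W\hat U = n\Theta$ and $\hat{\tilde q}\hat p + \hat U\hat W = n\Theta$ --- note the right\--hand side is $n\Theta$, not $2nI$. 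Since every hatted matrix absorbs a factor of $\Theta$ on the appropriate side and $\Theta^2=2\Theta$, one has $\hat q\Theta=2\hat q$ and $\hat p\Theta=2\hat p$, whence $\hat q\hat W\hat U\hat{\tilde p}+\hat p\hat U\hat W\hat{\tilde q}=2nX_+-\tfrac12\bigl((\hat q\hat{\tilde p})^2+(\hat p\hat{\tilde q})^2\bigr)\cdot 4/4 = 2nX_+-\tfrac12(X_+^2+X_-^2)$, exactly your identity. Moreover these relations \emph{do} come from differentiation in the relevant sense: the equations of theorem \ref{thm:QPsys} give $\D_\pm(\hat{\tilde p}\hat q + \hat W\hat U)=0$ directly, and the constant $n\Theta$ is fixed by the $\xi\to\infty$ normalization. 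You should not appeal to Lemma \ref{lem:tr} or Theorem \ref{thm:unirel} here --- in the paper Lemma \ref{lem:tr} is \emph{derived from} these first integrals, so invoking it would make the logic circular.
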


\begin{proof}

The first two equations above are just the first two equations of theorem \ref{thm:QPsys}, getting the others is straightforward expanding of their left-hand sides by definitions (\ref{eq:X+-}) and (\ref{eq:PhiG}) and then applying the rest of equations of theorem \ref{thm:QPsys} together with definitions (\ref{eq:Adef}).

\end{proof}

\ni The system has a number of first integrals some of which are matrix while the other involve only some of the scalar components, thus forcing one to split the system into scalar parts eventually. As the main consequence of the system, we found a complete system of independent third-order PDE satisfied by joint largest eigenvalue distributions for two coupled GUE:

\begin{corollary}     \label{cor:main}


$$
\Pe_x \equiv \D_+X_t\D_-X_t - 2X_tX_3\hat F - G_tG_3 = 0, \eqno(\ref{eq:x})
$$

$$
D_x \equiv 4\sigma^2(\D_+A^2\D_-A^2 - 2X_3(A^2)^2) - A_+A_- = 0, \eqno(\ref{eq:Ax})
$$

\be
2\sigma^2A^2\Pe_t = -2\sigma^2A^2\Pe_3 = -\hat FD_t = \sigma^2\hat FD_3,   \label{eq:+-}
\ee

$$
\hat F(\D_+A^2A_- - \D_-A^2A_+) = A^2(G_3\D_+X_t - G_t\D_-X_t), \eqno(\ref{eq:a})
$$

\noindent where

\be
\Pe_t =  (\D_+X_t)^2 - \hat F(2X_3^2 + J) - G_t^2,  \ \ \ \Pe_3 = (\D_-X_t)^2 - \hat F(2X_3^2 + J) - G_3^2,     \label{eq:PX}
\ee

\be
D_t = 4\sigma^2(\D_+A^2)^2 + 2\sigma^2A^2J - A_+^2, \ \ \ D_3 = 4\sigma^2(\D_-A^2)^2 + 2A^2J - A_-^2.    \label{eq:PA}
\ee

$$
X_t = -2\D_+ r_t - 2\sigma^2A^2 = -2\D_-r_3 - 2A^2 = -2\frac{(\D_+^2 - \sigma^2\D_-^2)\ln\tau_n^J}{1-\sigma^2}, 
$$

$$
X_3 = -2\D_+ r_3 = \D_-r_t = -2\D_+\D_-\ln\tau_n^J,
$$

$$
\hat F = X_t - 4n, \ \ \ \ J = X_t^2 - X_3^2 - 4\sigma^2(A^2)^2,
$$

$$
D_+ = \D_+ A^2, \ \ \ \ D_- = \D_-A^2, 
$$

$$
G_t = H_t + A_+, \ \ \ G_3 = H_3 + A_-, \ \ \ \ H_t = 4r_t - 2\xi_+\D_+ r_t + \xi_-X_3,  \ \ \ \ H_3 = 4r_3 - 2\xi_-\D_- r_3 + \xi_+X_3.
$$

\be
A_+ = \text{tr}\{\tilde A, (X_+)_a\}, \ \ \ \ A_- =  - \text{tr}\{A, (X_-)_a\},    \label{eq:A+A-}
\ee

\ni ($\{,\}$ is an anti-commutator, subscript 'a' means the anti-diagonal part of a matrix).

\end{corollary}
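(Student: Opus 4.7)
The plan is to derive the four identities of the corollary as algebraic consequences of the matrix system of Theorem \ref{thm:Xsys}, by extracting scalar invariants via $\text{tr}$ and the $\sigma_3$-twisted trace $\text{tr}(\sigma_3\cdot)$, and then recognizing certain quadratic combinations as first integrals of the coupled $\D_+/\D_-$ flows.

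First I would trace equations (\ref{eq:r+})--(\ref{eq:r-}) to recover the listed expressions $X_t = -2\D_+r_t - 2\sigma^2 A^2 = -2\D_-r_3 - 2A^2$ and $X_3 = -2\D_+r_3 = \D_-r_t$; the two distinct representations of $X_t$ agreeing is precisely the identity (\ref{eq:A^2}), which I would use throughout to replace $A^2$ by the mixed second derivative of $\ln\tau_n^J$. Then, using Theorem \ref{thm:unirel} together with Lemma \ref{lem:tr} to obtain $\text{Tr}(\hat{\bar U}\hat{\bar W}) = \text{Tr}\hat{\bar U}\,\text{Tr}\hat{\bar W}$, the auxiliary quantity $\hat F = X_t - 4n$ emerges as the scalar combination $\text{tr}(\hat p\hat U\hat{\tilde p} + \hat q\hat W\hat{\tilde q})$ after renormalization, collapsing pairs of matrix unknowns into a single scalar.

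Next I would hunt for first integrals. The quadratic $\Pe_t$ from (\ref{eq:PX}) is the natural candidate: applying $\D_+$ and substituting (\ref{eq:D+X+})--(\ref{eq:D+G}) together with their traces should produce a combination that vanishes modulo (\ref{eq:x}). Symmetrically, $\Pe_3$ arises from $\D_-$, and the cross-identity (\ref{eq:+-}) emerges by comparing $\D_-\Pe_t$ with $\D_+\Pe_3$ after using the symmetric counterparts (\ref{eq:D-X+})--(\ref{eq:D-G}). The determinantal identity (\ref{eq:Ax}) follows analogously by differentiating (\ref{eq:A^2}) and applying (\ref{eq:D+X-}), (\ref{eq:D-X-}) to the anti-diagonal evolution of $X_\pm$. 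The mixed equation (\ref{eq:a}) then appears as the compatibility condition between the $\D_+$ and $\D_-$ flows of $A^2$ and $X_t$ on the common solution.

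The main obstacle will be controlling the anti-diagonal quantities $A_+ = \text{tr}\{\tilde A, (X_+)_a\}$ and $A_- = -\text{tr}\{A, (X_-)_a\}$, which cannot be reduced to pure functions of $r_t$, $r_3$ and $A^2$. One must carefully split each matrix equation of Theorem \ref{thm:Xsys} into diagonal and anti-diagonal blocks and then verify that, in the combinations $G_t = H_t + A_+$ and $G_3 = H_3 + A_-$, all remaining non-scalar unknowns cancel. Establishing (\ref{eq:+-}) is where this cancellation must be most delicate: four a priori distinct quadratic polynomials are forced to coincide up to the single ratio $\sigma^2\hat F/A^2$, and confirming this term-by-term is where the heaviest algebra will lie.
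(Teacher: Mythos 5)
There is a genuine gap: you treat $\Pe_t$, $\Pe_3$, $D_t$, $D_3$ as candidate conserved quantities to be found by ``applying $\D_+$ and substituting the evolution equations'' so that the derivative ``vanishes modulo (\ref{eq:x})''. But these four quadratics are \emph{not} zero and are not individually constants of motion; the content of (\ref{eq:+-}) is a pointwise proportionality $2\sigma^2A^2\Pe_t = -\hat F D_t$, etc., between nonzero expressions. The paper obtains these relations purely algebraically, from constraints on the solution manifold rather than from the flow: (i) the matrix first integrals $\hat{\tilde p}\hat q = n\Theta - \hat W\hat U$, $\hat{\tilde q}\hat p = n\Theta - \hat U\hat W$, recast as $[X_+,G]=\{X_-,\T\}$ and $[X_+,\T]=\{X_-,G\}$ and then split into diagonal/anti-diagonal and $\mathrm{tr}$, $\mathrm{tr}(\sigma_3\cdot)$ parts; (ii) the rank-one structure $\hat q = q\Theta$, $\hat p \propto \Theta$, etc., which forces $\det(X_+\pm X_-)=\det(\T\pm G)=0$ and yields the quadratic identities (\ref{eq:B+}), (\ref{eq:TG2+}), (\ref{eq:TGaa}); (iii) the trace identity $\mathrm{Tr}\bigl((\hat U\hat W)^k\bigr)=\mathrm{Tr}\bigl((\hat W\hat U)^k\bigr)$ (Lemma \ref{lem:tr} and its induction), which supplies the further integral (\ref{eq:TG2-}); and (iv) elementary $2\times2$ anti-diagonal identities combined with $[A,(X_+)_a]=\D_+A^2\sigma_3$, $[A,X_a]=-\D_-A^2\sigma$, $\{(X_+)_a,X_a\}=(\sigma_3\sigma)X_3A^2$, which give (\ref{eq:Ax}) directly. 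None of items (ii)--(iv) appear in your plan, and they cannot be recovered by differentiating the flow equations: differentiation only raises the order, whereas the corollary's third-order relations encode algebraic constraints (degeneracy of the matrices and the first integrals) that must be imposed on the system, not derived from it. Indeed the paper's Appendix C goes in the opposite direction, checking that the fourth-order equations follow from the third-order ones.

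Two smaller points. Your identification of $\hat F$ is off: $X_t=\mathrm{tr}(X_+)=\mathrm{tr}(\hat q\hat{\tilde p}+\hat p\hat{\tilde q})$, so by the first integrals $\hat F=X_t-4n=-2\,\mathrm{Tr}(\hat U\hat W)$; the combination $\mathrm{tr}(\hat p\hat U\hat{\tilde p}+\hat q\hat W\hat{\tilde q})$ is (half of) $G_t$, not $\hat F$. And Lemma \ref{lem:tr} is not merely a device for ``collapsing matrix unknowns''; its role here is to justify $\mathrm{Tr}\bigl((\hat U\hat W)^k\bigr)=\mathrm{Tr}\bigl((\hat W\hat U)^k\bigr)$ for all $k$, which is what produces the extra quadratic relation (\ref{eq:TG2-}) needed to separate $\T_a^2$ and $G_a^2$ and hence to reach (\ref{eq:+B}), (\ref{eq:-B}) and the proportionalities (\ref{eq:+-}).
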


\noindent The last four expressions clearly have the three-term structure resembling that of Painlev\'e IV equation for one-matrix Gaussian ensemble, as do the left-hand sides of (\ref{eq:x}) and (\ref{eq:Ax}). And indeed the limit of $\Pe_t$, $\Pe_3$ and $\Pe_x$ as either $\xi_2 \to \infty$ or $\xi_1 \to \infty$ is the Painlev\'e IV equation itself. The one-matrix integrals for GUE largest eigenvalue probabilities, which are the appropriate solutions of the limiting Painlev\'e IV equations, should in fact be considered as {\it boundary conditions} for the corresponding PDEs of the last corollary. 
\par This nice form of the equations is reached at the expense of still keeping two auxiliary variables $A_+$ and $A_-$ in them. They can be explicitly eliminated giving a smallest complete system of four independent third-order PDE for $T$, which, however, lack the nice Painlev\'e IV-like structure of the previous ones, see section \ref{sec:fin4eq}. 
\par Comparison of a combination of 4th-order PDEs of theorem \ref{theorem:4ord} and 3rd-order PDEs of corollary \ref{cor:main} obtained from the variant of TW system in section \ref{sec:Xsys} with the 4th-order PDE of theorem \ref{thm:F4T} above from our version of AvM approach, gives the correspondences of the variables of both approaches, among them -- some new ones, specific for the case of coupled matrices:

\be
\sigma_3[r_a, \D_+ r_a] = -2c\prt_cr_t, \ \ \ \ \sigma_3[r_a, \D_- r_a] = 2c\prt_cr_3.   \label{eq:ccom}
\ee

\ni These last two new important relations show that the main additional variables appearing in the TW approach to coupled (Gaussian) matrices -- the commutators on the left-hand side -- directly correspond to the main new objects appearing in the ASvM approach to the same problem -- the derivatives w.r.t.~the coupling $c$ (or w.r.t.~time $t$, recall $c = e^{-t}$) of the log-derivatives of the joint largest eigenvalue probability.

\section{System of PDE from Toda lattice hierarchy}    \label{sec:Toda}

Along with~\cite{AvM1}, we first consider the following 2-Toda $\tau$-function:

\be
\tau_n(t, s) = \prod_{i=1}^n \int_{J_1}dx_i \int_{J_2}dy_i e^{-x_i^2 - y_i^2 + 2cx_iy_i + \sum_{k=1}^{\infty}t_kx_i^k - s_ky_i^k} \Delta(x)\Delta(y) \sim c^{\frac{n(n-1)}{2}}\tau_n(t, s)  \label{eq:init}
\ee

\ni of section \ref{sec:MatrInt}. From the bilinear identity for the 2-Toda hierarchy~\cite{UT84}, see also~\cite{AvM1},

$$
\oint_{z=\infty}\tau_n(t-[z^{-1}], s)\tau_{m+1}(t'+[z^{-1}], s')e^{\sum_1^{\infty}(t_k-t'_k)z^k} z^{n-m-1}dz = 
$$

$$
= \oint_{z=0}\tau_{n+1}(t, s-[z])\tau_m(t', s'+[z])e^{\sum_1^{\infty}(s_k-s'_k)z^{-k}} z^{n-m-1}dz,      
$$

\ni where $[z] = (z, z^2/2, z^3/3, \dots)$ -- infinite vector, for $m=n\pm 1$ one obtains the series of PDE:

\be
-\frac{\prt^2\ln\tau_n}{\prt t_k\prt s_1} = \sum_{i=0}^{k-1} \frac{p_i(-\tilde\prt_t)\tau_{n-1}\cdot p_{k-1-i}(\tilde\prt_t)\tau_{n+1}}{\tau_n^2}
\ee

\ni and

\be
-\frac{\prt^2\ln\tau_n}{\prt t_1\prt s_k} = \sum_{i=0}^{k-1} \frac{p_i(-\tilde\prt_s)\tau_{n+1}\cdot p_{k-1-i}(\tilde\prt_s)\tau_{n-1}}{\tau_n^2},
\ee

\noindent both of which give as the simplest ($k=1$) equation the 2-Toda or Liouville equation in terms of $\tau$-functions:

\be
-\frac{\prt^2\ln\tau_n}{\prt t_1\prt s_1} = \frac{\tau_{n+1}\tau_{n-1}}{\tau_n^2}.
\ee

\noindent The identity for $m=n$ gives another two series of PDE:

\be
\frac{\prt}{\prt t_k}\ln\frac{\tau_{n+1}}{\tau_n} = \sum_{i=0}^{k} \frac{p_i(-\tilde\prt_t)\tau_{n}\cdot p_{k-i}(\tilde\prt_t)\tau_{n+1}}{\tau_n\tau_{n+1}},
\ee

\be
-\frac{\prt}{\prt s_k}\ln\frac{\tau_{n+1}}{\tau_n} = \sum_{i=0}^{k} \frac{p_i(-\tilde\prt_s)\tau_{n+1}\cdot p_{k-i}(\tilde\prt_s)\tau_{n}}{\tau_n\tau_{n+1}},
\ee

\ni which give the simplest nontrivial equations at $k=2$, namely two copies of AKNS system:

\be
\frac{\partial U_n}{\partial t_2} = \frac{\partial^2 U_n}{\partial t_1^2} + 2\frac{\partial^2\ln\tau_n}{\partial t_1^2}U_n, \ \ \ \ \ \ -\frac{\partial W_n}{\partial t_2} = \frac{\partial^2 W_n}{\partial t_1^2} + 2\frac{\partial^2\ln\tau_n}{\partial t_1^2}W_n, 
\ee

\be
-\frac{\partial U_n}{\partial s_2} = \frac{\partial^2 U_n}{\partial s_1^2} + 2\frac{\partial^2\ln\tau_n}{\partial s_1^2}U_n, \ \ \ \ \ \ \frac{\partial W_n}{\partial s_2} = \frac{\partial^2 W_n}{\partial s_1^2} + 2\frac{\partial^2\ln\tau_n}{\partial s_1^2}W_n. 
\ee

\noindent Then we use the Virasoro constraints~\cite{ASvM, AvM7, AvM1, IR1}, connecting the derivatives w.r.t.~the spectral endpoints with the time derivatives of the integral $\tau_n^J(t, s)$ (\ref{eq:init}). We need only the $k=-1$ constraints:

\be
\B_{-1}\tau_n^J(t, s) = \V_{-1}\tau_n^J(t, s) = \left\{-2\frac{\partial}{\partial t_1} + nt_1 + \sum_{l=2}^{\infty}lt_l\frac{\partial}{\partial t_{l-1}} - 2c\frac{\partial}{\partial s_1}\right\} \tau_n^J(t, s),   
\ee

\be
\tilde\B_{-1}\tau_n^J(t, s) = \tilde\V_{-1}\tau_n^J(t, s) = \left\{2\frac{\partial}{\partial s_1} - ns_1 + \sum_{l=2}^{\infty}ls_l\frac{\partial}{\partial s_{l-1}} + 2c\frac{\partial}{\partial t_1}\right\} \tau_n^J(t, s),   
\ee

\noindent and the $k=0$ constraints:

\be
\B_0\tau_n^J(t, s) = \V_0\tau_n^J(t, s) = \left\{-2\frac{\partial}{\partial t_2} + \frac{n(n+1)}{2} + \sum_{l=1}^{\infty}lt_l\frac{\partial}{\partial t_l} + c\frac{\prt}{\prt c}\right\} \tau_n^J(t, s), 
\ee

\be
\tilde\B_0\tau_n^J(t, s) = \V_0\tau_n^J(t, s) = \left\{2\frac{\partial}{\partial s_2} + \frac{n(n+1)}{2} + \sum_{l=1}^{\infty}ls_l\frac{\partial}{\partial s_l} + c\frac{\prt}{\prt c}\right\} \tau_n^J(t, s). 
\ee

\noindent So on locus $t_k = s_k = 0$ one gets:   

$$
\B_{-1}\ln\tau_n^J = -2\frac{\prt\ln\tau_n^J}{\prt t_1} - 2c\frac{\prt\ln\tau_n^J}{\prt s_1},
$$

$$
\tilde\B_{-1}\ln\tau_n^J = 2\frac{\prt\ln\tau_n^J}{\prt s_1} + 2c\frac{\prt\ln\tau_n^J}{\prt t_1},
$$

$$
\B_0\tau_n^J = -2\frac{\partial\ln\tau_n^J}{\partial t_2} + \frac{n(n+1)}{2} + c\frac{\prt}{\prt c}\ln\tau_n^J, 
$$

$$
\tilde\B_0\tau_n^J = 2\frac{\partial\ln\tau_n^J}{\partial s_2} + \frac{n(n+1)}{2} + c\frac{\prt}{\prt c}\ln\tau_n^J. 
$$

\noindent Constraints for the second derivatives follow from the above, using the commutativity of boundary and time derivatives. On the locus those we need reduce to:

$$
\B_{-1}^2\ln\tau_n^J = 4\frac{\prt^2\ln\tau_n^J}{\prt t_1^2} - 2n + 8c\frac{\prt^2\ln\tau_n^J}{\prt t_1\prt s_1} + 4c^2\frac{\prt^2\ln\tau_n^J}{\prt s_1^2},
$$

$$
\tilde\B_{-1}^2\ln\tau_n^J = 4\frac{\prt^2\ln\tau_n^J}{\prt s_1^2} - 2n + 8c\frac{\prt^2\ln\tau_n^J}{\prt t_1\prt s_1} + 4c^2\frac{\prt^2\ln\tau_n^J}{\prt t_1^2},
$$

$$
\B_{-1}\tilde\B_{-1}\ln\tau_n^J = -4(1+c^2)\frac{\prt^2\ln\tau_n^J}{\prt t_1\prt s_1} + 2cn - 4c\frac{\prt^2\ln\tau_n^J}{\prt t_1^2} - 4c\frac{\prt^2\ln\tau_n^J}{\prt s_1^2}.
$$

\noindent Expressing the time derivatives in terms of the boundary derivatives as usual in the ASvM approach~\cite{ASvM, AvM1}, we finally find for the second derivatives of $\ln\tau_n^J$:

\be
\frac{\partial^2\ln\tau_n^J}{\partial t_1^2} = \frac{(\B_{-1} + c\tilde\B_{-1})^2\ln\tau_n^J + 2(1-c^2)n}{4(1-c^2)^2},
\ee 

\be
\frac{\partial^2\ln\tau_n^J}{\partial s_1^2} = \frac{(c\B_{-1} + \tilde\B_{-1})^2\ln\tau_n^J + 2(1-c^2)n}{4(1-c^2)^2},
\ee 

\be
\frac{\partial^2\ln\tau_n^J}{\prt t_1\prt s_1} = -\frac{(\B_{-1} + c\tilde\B_{-1})(c\B_{-1} + \tilde\B_{-1})\ln\tau_n^J + 2c(1-c^2)n}{4(1-c^2)^2}.
\ee

After plugging the last expression into the 2-Toda equation, the ``boundary-Toda" equation for coupled matrices follows:

\be
(\B_{-1} + c\tilde\B_{-1})(c\B_{-1} + \tilde\B_{-1})\ln\tau_n^J = 4(1-c^2)^2U_nW_n - 2c(1-c^2)n.   \label{eq:00}
\ee

\noindent Taking the corresponding formulas for the ratios $U_n$, $W_n$, and recalling the definitions from section \ref{sec:results},


one obtains four ``boundary'' equations from two AKNS copies:

\be
\A^2U_n = -2(1-c^2)^2(\B_0 - c\prt_c)U_n - 2(1-c^2)((1+c^2)n + c^2)U_n - 2\A^2\ln\tau_n^JU_n,   \label{eq:+t}
\ee

\be
\A^2W_n = 2(1-c^2)^2(\B_0 - c\prt_c)W_n - 2(1-c^2)((1+c^2)n - 1)W_n - 2\A^2\ln\tau_n^JW_n,   \label{eq:-t}
\ee

\be
\tilde\A^2U_n = -2(1-c^2)^2(\tilde\B_0 - c\prt_c)U_n - 2(1-c^2)((1+c^2)n + c^2)U_n - 2\tilde\A^2\ln\tau_n^JU_n,   \label{eq:+s}
\ee

\be
\tilde\A^2W_n = 2(1-c^2)^2(\tilde\B_0 - c\prt_c)W_n - 2(1-c^2)((1+c^2)n - 1)W_n - 2\tilde\A^2\ln\tau_n^JW_n.   \label{eq:-s}
\ee

\noindent Now let us pass to the ``physical" Dyson process variables, namely substitute

\be
\xi \to \frac{\xi}{\sqrt{1-c^2}}, \ \ \ \ \ \ \tau_n^J \to c^{\frac{n(n-1)}{2}}\tau_n^J,    \label{eq:phys}
\ee

\noindent (if we did only the second substitution above we would get exactly the matrix integral of section 2 as $\tau_n^J$) and so

\be
U_n \to c^n U_n, \ \ \ W_n \to W_n/c^{n-1},
\ee

\noindent which entails also

\be
c\frac{\prt}{\prt c}U_n \to c^n\left(c\frac{\prt}{\prt c} + n\right)U_n, \ \ \ \ \ \ c\frac{\prt}{\prt c}W_n \to \frac{1}{c^{n-1}}\left(c\frac{\prt}{\prt c} - n + 1\right)W_n
\ee

\noindent {\it and}, importantly, change in the meaning of the partial derivative with respect to $c$ (due to the acquired dependence on $c$ of former $\xi$ under the change of variables (\ref{eq:phys})), so that one has to substitute everywhere

\be
-c\prt_c = 2c^2\prt_{\gamma} \to 2c^2\left(\prt_{\gamma} + \frac{\B_0 + \tilde\B_0}{2\gamma}\right).
\ee
 
\noindent After these transformations the previous system of PDE reads:

$$
\A\tilde\A \ln\tau_n^J = 2c(2(1-c^2)U_nW_n - n), \eqno(\ref{eq:00})
$$

$$
\A^2 U_n = -2(\B_0 + c^2\tilde\B_0)U_n + 2(1-c^2)c\prt_c U_n - 2\A^2\ln\tau_n^J U_n - 2c^2(2n+1)U_n,  \eqno(\ref{eq:+t})
$$

$$
\A^2 W_n = 2(\B_0 + c^2\tilde\B_0)W_n - 2(1-c^2)c\prt_c W_n - 2\A^2\ln\tau_n^J W_n - 2c^2(2n-1)W_n,  \eqno(\ref{eq:-t})
$$ 

$$
\tilde\A^2 U_n = -2(\tilde B_0 + c^2\B_0)U_n + 2(1-c^2)c\prt_c U_n - 2\tilde\A^2\ln\tau_n^J U_n - 2c^2(2n+1)U_n,  \eqno(\ref{eq:+s})
$$

$$
\tilde\A^2 W_n = 2(\tilde B_0 + c^2\B_0)W_n - 2(1-c^2)c\prt_c W_n - 2\tilde\A^2\ln\tau_n^J W_n - 2c^2(2n-1)W_n.  \eqno(\ref{eq:-s})
$$ 





\noindent At last, it is convenient to remove the last terms in the last four equations by passing to new functions,

$$
U = U_n(1-c^2)^{n+1/2}, \ \ \ \ W = W_n(1-c^2)^{-(n-1/2)},
$$

\noindent which also simplifies the coefficient in equation (\ref{eq:00}) and the final system becomes that of theorem \ref{thm:1}.









\par Now we try to solve this system. The procedure goes in parallel with one-matrix case (see~\cite{UniUE, Multipt1matr}). First introduce quantities

$$
F = UW, \ \ \ G = W\A U - U\A W, \ \ \ \tilde G = W\tilde \A U - U\tilde \A W,
$$


$$
G_0 = W\A_0 U - U\A_0 W, \ \ \ \tilde G_0 = W\tilde \A_0 U - U\tilde \A_0 W.  
$$

\noindent Multiplying eqs.~(\ref{eq:+t}) and (\ref{eq:+s}) by $W$ and eqs.~(\ref{eq:-t}) and (\ref{eq:-s}) by $U$ and taking the appropriate combinations to express everything in terms of the newly introduced functions, we obtain our system in the form:

\be
\A\tilde\A T = 4c(F - n/2),  \label{eq:0}
\ee

\be
\A G = -2\A_0 F,   \label{eq:AG} 
\ee

\be
\tilde\A \tilde G = -2\tilde\A_0 F,  \label{eq:tAG} 
\ee

\be
\A^2F - \frac{(\A F)^2 - G^2}{2F} = -2G_0 - 4\A^2T\cdot F,  \label{eq:AF} 
\ee

\be
\tilde\A^2F - \frac{(\tilde\A F)^2 - \tilde G^2}{2F} = -2\tilde G_0 - 4\tilde\A^2T\cdot F.  \label{eq:tAF} 
\ee

\noindent We will need commutation relations among our operators:





\be
[\A, \A_0] = \A, \ \ \ \ [\A, \tilde\A_0] = 2c\tilde\A - c^2\A,   \label{eq:AA0}
\ee

\be
[\tilde A, \A_0] = 2c\A - c^2\tilde\A, \ \ \ \ \ \ [\tilde A, \tilde\A_0] = \tilde\A.   \label{eq:tAtA0}
\ee

\noindent Equation (\ref{eq:0}) is in fact just the definition of $F$ in terms of $T$. Plugging it into the right-hand sides of (\ref{eq:AG}) and (\ref{eq:tAG}) and applying the last relations allows one to integrate this first couple of linear PDE and obtain expressions for $G$ and $\tilde G$ in terms of $T$:

\be
G = -(\A_0 - 1)\frac{\tilde\A T}{2c}, \ \ \ \ \ \ \tilde G = -(\tilde\A_0 - 1)\frac{\A T}{2c}.  \label{eq:GG}
\ee

\ni The last expressions already allow to recover the Adler-van Moerbeke equation~\cite{AvM1, AvM3} in our approach. It follows in fact from a trivial identity:

$$
\A\tilde\A\ln\frac{U}{W} = \tilde\A\A\ln\frac{U}{W},
$$

\ni rewritten as

$$
\A\left(\frac{\tilde G}{F}\right) = \tilde\A\left(\frac{G}{F}\right).
$$

\ni Plugging in expressions for $F$, $G$ and $\tilde G$ in terms of $T$, we get a single PDE for $T\equiv \ln\tau_n^J$:

\begin{theorem}

The logarithm of the joint spacing probability for two coupled Gaussian matrices (or two time-point distribution for the Dyson process) satisfies the third-order PDE,

\be
\A\frac{(\tilde\A_0 - 1)\frac{\A T}{c}}{\A\tilde\A T + 2cn} = \tilde\A\frac{(\A_0 - 1)\frac{\tilde\A T}{c}}{\A\tilde\A T + 2cn},  \label{eq:AvM}
\ee

\end{theorem}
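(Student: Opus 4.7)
The plan is to derive equation (\ref{eq:AvM}) as the equality-of-mixed-partials identity for the scalar function $\ln(U/W)$ with respect to the two commuting differential operators $\A$ and $\tilde\A$, exactly as the author signals in the sentence preceding the theorem.

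First I would verify that $[\A,\tilde\A]=0$. Since $\A=\B_{-1}+c\tilde\B_{-1}$ and $\tilde\A=c\B_{-1}+\tilde\B_{-1}$ are linear combinations with scalar coefficients, and since $\B_{-1}$ differentiates only with respect to the endpoints of $M_1$ while $\tilde\B_{-1}$ differentiates only with respect to the endpoints of $M_2$, we have $[\B_{-1},\tilde\B_{-1}]=0$; moreover $c$ is not differentiated by either operator, so the desired commutativity follows. Consequently the tautological identity
$$\A\tilde\A\ln\frac{U}{W}=\tilde\A\A\ln\frac{U}{W}$$
holds for any smooth $U/W$ of the endpoints.

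Next I would translate this identity into the stated scalar PDE for $T$. A direct computation using the Leibniz rule gives
$$\A\ln\frac{U}{W}=\frac{\A U}{U}-\frac{\A W}{W}=\frac{W\A U-U\A W}{UW}=\frac{G}{F},\qquad \tilde\A\ln\frac{U}{W}=\frac{\tilde G}{F},$$
so the commutativity identity becomes
$$\A\!\left(\frac{\tilde G}{F}\right)=\tilde\A\!\left(\frac{G}{F}\right).$$
At this point I would substitute the closed-form expressions $F=(\A\tilde\A T+2cn)/(4c)$, which is just a rewrite of (\ref{eq:0}), together with $G=-(\A_0-1)\tilde\A T/(2c)$ and $\tilde G=-(\tilde\A_0-1)\A T/(2c)$ from (\ref{eq:GG}). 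Clearing the common scalar factor $4c$ from the denominators and pulling it through $\A$ and $\tilde\A$ (which do not differentiate $c$) yields (\ref{eq:AvM}).

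The main obstacle is bookkeeping of the $c$-dependence in the operators. Because $\A_0=\B_0+c^2\tilde\B_0-\D_c$ and $\tilde\A_0=\tilde\B_0+c^2\B_0-\D_c$ contain the piece $\D_c=(1-c^2)c\prt_c$, one must be careful whether a factor $1/c$ sits inside or outside of $(\A_0-1)$ and $(\tilde\A_0-1)$. By contrast, $\A$ and $\tilde\A$ contain no $c$-derivative, so a factor $1/c$ passes through them freely, which is precisely what allows the form $G/F$, $\tilde G/F$ arising from the mixed-partials identity to be reorganized into the shape with $\A T/c$ and $\tilde\A T/c$ displayed inside the arguments of $(\tilde\A_0-1)$ and $(\A_0-1)$ in (\ref{eq:AvM}). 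Once this operator-ordering check is done, the proof is a purely mechanical substitution.
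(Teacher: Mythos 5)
Your proposal is correct and follows essentially the same route as the paper: the identity $\A\tilde\A\ln(U/W)=\tilde\A\A\ln(U/W)$, rewritten as $\A(\tilde G/F)=\tilde\A(G/F)$, followed by substitution of the closed forms of $F$, $G$, $\tilde G$ from (\ref{eq:0}) and (\ref{eq:GG}). Your remark on the $c$-bookkeeping is also the right one to make -- the overall $c$-dependent prefactor (which works out to $-2c$ rather than $4c$, though this does not affect the argument) passes through $\A$ and $\tilde\A$ and cancels, while the $1/c$ must stay inside $(\A_0-1)$ and $(\tilde\A_0-1)$ because of the $\D_c$ term.
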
 

\ni which is nothing but the equation first derived in~\cite{AvM1} using {\it two first} ($k = 1,2$) bilinear identities of the Toda series, while we started from two couples of first identities of AKNS (non-linear Schr\"odinger) series instead of the second ($k =2$) Toda identity. We, however, can derive also equations (\ref{eq:T1}) and (\ref{eq:T2}) of theorem \ref{thm:genT} from the system of theorem \ref{thm:1}, see Appendix A. It seems that in general these equations are hard to simplify.










\par Further on we will consider the case of only one boundary point for each matrix, i.e. study the joint largest eigenvalue distribution. Let us then rewrite everything in a different way, which will be convenient for the comparison with the variant of TW approach below. 




\ni Recalling the definitions of operators $\D_+$, $\D_-$, introduced in section \ref{sec:results}, we get 

$$
\A = \frac{(1+c)}{2}(\D_+ + \sigma\D_-), \ \ \ \tilde\A = \frac{(1+c)}{2}(\D_+ - \sigma\D_-),
$$

$$
\A^2 -\tilde\A^2 = (1-c^2)\D_+\D_-, \ \ \ \A^2 + \tilde\A^2 = \frac{(1+c)^2}{2}(\D_+^2 + \sigma^2\D_-^2),
$$

$$
\B_0 + \tilde\B_0 = 1/2(\xi_+\D_+ + \xi_-\D_-), \ \ \ \B_0 - \tilde\B_0 = 1/2(\xi_+\D_- + \xi_-\D_+).
$$

\noindent The system we obtained in theorem \ref{thm:1} now acquires the following form, if we take differences and sums of its last four equations:

$$
F = \frac{(\D_+^2 - \sigma^2\D_-^2) T}{4(1-\sigma^2)} + \frac{n}{2}, \eqno(\ref{eq:0+})  
$$

\be
\D_+\D_- U = - (\xi_+\D_- + \xi_-\D_+)U - 2\D_+\D_-T\cdot U,  \label{eq:(+t)-(+s)}
\ee 

\be
\D_+\D_- W = (\xi_+\D_- + \xi_-\D_+) W - 2\D_+\D_-T\cdot W,  \label{eq:(-t)-(-s)}
\ee 

\be
(\D_+^2 + \sigma^2\D_-^2) U = -(1+\sigma^2)(\xi_+\D_+ + \xi_-\D_-) U + 8\sigma c\prt_cU - 2(\D_+^2 + \sigma^2\D_-^2)T\cdot U,  \label{eq:(+t)+(+s)}
\ee 

\be
(\D_+^2 + \sigma^2\D_-^2) W = (1+\sigma^2)(\xi_+\D_+ + \xi_-\D_-) W - 8\sigma c\prt_cW - 2(\D_+^2 + \sigma^2\D_-^2)T\cdot W.  \label{eq:(-t)+(-s)}
\ee

\noindent Let 

\be
G_+ = W\D_+ U - U\D_+ W, \ \ \ G_- =  W\D_- U - U\D_- W, \ \ \ \hat G_c = Wc\prt_c U - Uc\prt_c W.   \label{eq:G+-}
\ee

\noindent Then from the first two equations for $U$, $W$ we derive one linear and one non-linear equation:

\be
\D_+(G_- + 2\xi_-F) + \D_-(G_+ + 2\xi_+F) = 0,   \label{eq:1}
\ee

\be
2F\D_+\D_-F - \D_+F\D_-F + (G_- + 2\xi_-F)(G_+ + 2\xi_+F) = 4(\xi_+\xi_- - 2\D_+\D_-T)F^2.  \label{eq:2}
\ee

\noindent From the longer pair of equations also one linear and one non-linear PDE result. They are

\be
\D_+(G_+ + 2\xi_+F) + \sigma^2\D_-(G_- + 2\xi_-F) = (1-\sigma^2)(\xi_+\D_+ + \xi_-\D_-)F + 8\sigma c\prt_cF + 4(1+\sigma^2)F,   \label{eq:3}
\ee

$$
2F (\D_+^2 + \sigma^2\D_-^2)F - (\D_+F)^2 - \sigma^2(\D_-F)^2 +   
$$

$$
+ (G_+ + 2\xi_+F)^2 + \sigma^2(G_- + 2\xi_-F)^2 - 2(1-\sigma^2)(\xi_+(G_+ + 2\xi_+F) - \xi_-(G_- + 2\xi_-F))F = 
$$

\be
= 16\sigma FG_c - 8(\D_+^2 + \sigma^2\D_-^2)\ln\tau_n^J F^2 + 4(\xi_+^2\sigma^2 + \xi_-^2)F^2.  \label{eq:4}
\ee

\noindent The pair of linear equations here is, of course, satisfied by the previously found general expressions for $G$ and $\tilde G$. Since now

\be
G = \frac{(1+c)}{2}(G_+ + \sigma G_-), \ \ \ \ \ \ \tilde G = \frac{(1+c)}{2}(G_+ - \sigma G_-),
\ee

\noindent they correspond to expressions \ref{eq:G+} and \ref{eq:G-} for $G_+$ and $G_-$, given in section \ref{sec:results} before theorem \ref{thm:F4T}.

 
 

\noindent The last expressions, together with (\ref{eq:0+}), are to be plugged into eqs.~(\ref{eq:2}) and (\ref{eq:4}). After that eq.~(\ref{eq:2}) becomes a single 4-th order PDE for $T\equiv\ln\tau_n^J$, that of theorem \ref{thm:F4T}.






\par As for the equation (\ref{eq:4}), the obstacle to obtaining another PDE for $T$ from it remains, since we have not found a convenient expression for $\hat G_c$. We can get much more from the matrix kernel approach~\cite{TW2}, which is the subject of the next sections. Nevertheless, the considerations presented here will be very helpful there to decipher the complicated system of matrix equations and to find more of the universal relations like those in~\cite{UniUE} for one-matrix UE.

\section{Matrix Kernel and TW equations for coupled Gaussian matrices}  \label{sec:4}

We consider sets $J_k = (\xi_k, \infty)$, i.e.~the simplest case of only largest eigenvalues and defer the due generalizations to several spectral gaps for a future work. Then

$$
\det(I - K^J) = \Pr(M(t_1) < \xi_1, \dots, M(t_m) < \xi_m).    
$$

\noindent We first follow~\cite{TW2} here and start with the general formula:

\be
\prt_k K^J =  - K\delta_k,   \label{eq:4.1}
\ee

\noindent where $\delta_k$ is the diagonal matrix with all entries zero except for the $k$-th and $(\delta_k)_{kk} = \delta(y-\xi_k)$. Then we introduce the (matrix) resolvent operator $R = K^J(I-K^J)^{-1}$ and see that

\be
\prt_k \ln\det(I-K^J) = -\text{Tr}(I-K^J)^{-1}\prt_kK^J = R_{kk}(\xi_k, \xi_k).   \label{eq:4.2}
\ee

\noindent We keep most of the notation in~\cite{TW2} i.e.~let $D = d/dx$, $M = x\cdot$ -- the multiplication operator, $\rho = (I-K^J)^{-1} = I + R$, $\xi = \text{diag}(\xi_k)$, $d\xi = \text{diag}(d\xi_k)$, $\chi(x) = \text{diag}(\chi_{J_k}(x))$, $\delta = \sum_k \delta_k$, $\Theta$ -- matrix with all elements equal unity; introduce matrices $r$, $r_x$, $r_y$ such that

\be
r_{ij} = R_{ij}(\xi_i, \xi_j), \ \ \ (r_x)_{ij} = (\prt_x R)_{ij}(\xi_i,\xi_j), \ \ \ (r_y)_{ij} = (\prt_y R)_{ij}(\xi_i,\xi_j).
\ee

\noindent Define also (here our normalization is as e.g.~in~\cite{TW1} for 1-matrix models rather than the one in~\cite{TW2})

\be
\varphi = b_{n-1}^{1/2}\varphi_n, \ \ \ \psi = b_{n-1}^{1/2}\varphi_{n-1},
\ee

\noindent where $b_{n-1} = \sqrt{n/2}$ for the Gaussian case, and introduce matrix functions

\be
Q = \rho\varphi, \ \ \ P = \rho\psi, \ \ \ \tilde Q = \varphi\chi\rho, \ \ \ \tilde P = \psi\chi\rho,   \label{eq:QPQP}
\ee

\noindent and their values at the spectral endpoints

\be
q_{ij} = Q_{ij}(\xi_i), \ \ \ \tilde q_{ij} = \tilde Q_{ij}(\xi_j), \ \ \ p_{ij} = P_{ij}(\xi_i), \ \ \ \tilde p_{ij} = \tilde P_{ij}(\xi_j).   \label{eq:qpqp}
\ee

\noindent Following~\cite{TW2}, also consider instead of (\ref{eq:00.4}) a modified kernel with the same Fredholm determinant:

\be
K_{ij}(x, y) = \left\{ \begin{array}{lr} \sum_{k=0}^{n-1}e^{(k-n)(t_i-t_j)}\varphi_k(x)\varphi_k(y) & \text{if }i \ge j,  \\ \ & \ \\  -\sum_{k=n}^{\infty}e^{(k-n)(t_i-t_j)}\varphi_k(x)\varphi_k(y) & \text{if }i < j. \end{array} \right.  \label{eq:4.3}
\ee 

\ni Using the well-known differentiation formulas for the harmonic oscillator eigenfunctions:

$$
\frac{d\varphi_k}{dx} = -x\varphi_k + \sqrt{2k}\varphi_{k-1}, \ \ \ \frac{d\varphi_{k-1}}{dx} = x\varphi_{k-1} - \sqrt{2k}\varphi_k,
$$

\noindent we have 

$$
(D+M)\varphi = \sqrt{2n}\psi, \ \ \ \ (D-M)\psi = -\sqrt{2n}\varphi.
$$

\noindent Then the two operator identities called Lemma 3 on p.24 of~\cite{TW2} read:

$$
(D+M)K_{ij} - e^{t_i-t_j}K_{ij}(D+M) = -2\psi(x)\varphi(y),
$$

$$
e^{t_i-t_j}(D-M)K_{ij} - K_{ij}(D-M) = -2\varphi(x)\psi(y).   
$$

\noindent The last formulas multiplied on the right by the matrix $\chi$, together with relation $[D, \chi] = \delta$ (and $[M, \chi] = 0$), if we let $t = \text{diag}(t_i)$ and so $e^t = \text{diag}(e^{t_i})$, lead to 

\be
e^{-t}(D+M)K^J - K^Je^{-t}(D+M) = -2e^{-t}\psi(x)\Theta\chi(y)\varphi(y) + K\delta e^{-t},   \label{eq:4.4}
\ee

\be
e^t(D-M)K^J - K^Je^t(D-M) = -2\varphi(x)\Theta\chi(y)\psi(y)e^t + K\delta e^t.    \label{eq:4.5}
\ee

\noindent Multiplying (\ref{eq:4.4}) and (\ref{eq:4.5}) by $\rho$ on the left and on the right, replacing $K^J \to K^J - I$ on the left-hand sides and using the above definitions we get the two equations of Lemma 4 of~\cite{TW2} (with the correction of swapping the matrices $e^t$ and $\Theta$ in the last formula below): 

\be
e^{-t}(D+M)R - Re^{-t}(D+M) = -2P(x)e^{-t}\Theta\tilde Q(y) + R\delta e^{-t}\rho, \label{eq:4.4'}
\ee

\be
e^t(D-M)R - Re^t(D-M) = -2Q(x)\Theta e^t\tilde P(y) + R\delta e^t\rho. \label{eq:4.5'}
\ee

\par At this point we depart from the line of~\cite{TW2}. Recall now that we consider the case of two matrices, in other words only two time points. Therefore, denoting $c = e^{-(t_2-t_1)}$ and introducing matrices

\be
e_L = \frac{(1+c)}{2}I + \frac{(1-c)}{2}\sigma_3,\ \ \ \  e_U = \frac{(1+c)}{2}I - \frac{(1-c)}{2}\sigma_3,
\ee

\noindent where $\sigma_3$ is the usual 3rd Pauli matrix, we can rewrite (\ref{eq:4.4'}) and (\ref{eq:4.5'}) as, respectively,

\be
e_L(D+M)R - Re_L(D+M) = -2P(x)e_L\Theta\tilde Q(y) + R\delta e_L\rho, \label{eq:4.6}
\ee

\be
e_U(D-M)R - Re_U(D-M) = -2Q(x)\Theta e_U\tilde P(y) + R\delta e_U\rho, \label{eq:4.7}
\ee

\ni Take $i, j$ entries in (\ref{eq:4.6}) and (\ref{eq:4.7}) and set $x = \xi_i$, $y = \xi_j$. This gives

\be
e_Lr_x + r_ye_L = -e_L\xi r + re_L\xi - 2pe_L\Theta\tilde q + re_Lr,     \label{eq:4.8}
\ee

\be
e_Ur_x + r_ye_U = e_U\xi r - re_U\xi - 2q\Theta e_U\tilde p + re_Ur,   \label{eq:4.9}
\ee

\noindent Since $e_L + e_U = (1+c)I$, $e_L - e_U = (1-c)\sigma_3$, adding and subtracting the equations (\ref{eq:4.8}), (\ref{eq:4.9}), we get, respectively,

\be
r_x + r_y - r^2 = -p(I+\sigma)\Theta\tilde q - q\Theta(I-\sigma)\tilde p - [\sigma\xi, r],   \label{eq:4.10} 
\ee

\be
\sigma r_x + r_y\sigma - r\sigma r = -p(I+\sigma)\Theta\tilde q + q\Theta(I-\sigma)\tilde p - [\xi, r],   \label{eq:4.11} 
\ee

\noindent where we have introduced the matrix $\sigma$,

\be
\sigma = \frac{1-c}{1+c}\sigma_3.
\ee

\par To derive our version of first order TW system, we introduce the matrix function $u$,

\be
u = (\varphi\chi, Q) = (\tilde Q, \varphi),     \label{eq:u} 
\ee

\noindent analogous to the scalar function $u$ for one-matrix case. It has not been used by Tracy and Widom in~\cite{TW2} but had already appeared in their earlier work~\cite{TW-AiryPr}, where the matrix Airy kernel for the Airy process was considered. Similarly, we introduce the matrix $w$, the analog of the scalar function $w$ from~\cite{TW1}:

\be
w = (\psi\chi, P) = (\tilde P, \psi)   \label{eq:w}. 
\ee

\noindent Acting by operator equation (\ref{eq:4.6}) upon function $\varphi$ from the left, we get

$$
\left[\frac{(I+\sigma)}{2}(D+M), \rho\right]\varphi = \frac{(I+\sigma)}{2}(D + x)Q - \sqrt{\frac{n}{2}}P(I+\sigma) = -P(I+\sigma)\Theta u + \frac{1}{2}R\delta(I+\sigma)Q,
$$

\noindent i.e.

\be
(I+\sigma)(DQ + xQ) = 2P(I+\sigma)\left(\sqrt{\frac{n}{2}} - \Theta u\right) + R\delta(I+\sigma)Q.   \label{eq:Q}
\ee

\noindent Similarly, acting by equation (\ref{eq:4.7}) upon function $\psi$ from the left gives



\be
(I-\sigma)(DP - xP) = -2Q\left(\sqrt{\frac{n}{2}} + \Theta(I-\sigma)w(I-\sigma)^{-1}\right)(I-\sigma) + R\delta(I-\sigma)P.   \label{eq:P}
\ee

\noindent On the other hand, acting by (\ref{eq:4.7}) upon $\varphi\chi$ and by (\ref{eq:4.6}) upon $\psi\chi$ {\it from the right} gives, respectively,



\be
(D\tilde Q + \tilde Qy)(I-\sigma) = 2\left(\sqrt{\frac{n}{2}} - u\Theta\right)(I-\sigma)\tilde P + \tilde Q(I-\sigma)\delta\cdot R,   \label{eq:tQ}
\ee

\noindent and



\be
(D\tilde P - \tilde Py)(I+\sigma) = -2(I+\sigma)\left(\sqrt{\frac{n}{2}} + (I+\sigma)^{-1}w(I+\sigma)\Theta\right)\tilde Q + \tilde P(I+\sigma)\delta\cdot R.   \label{eq:tP}
\ee

\par Next we use the fact following from equation (\ref{eq:4.1}):

\be
\prt_k \rho = \rho\prt_kK^J\rho = -R\delta_k\rho   \label{eq:4.14}
\ee

\noindent to derive the formulas~\cite{TW2}:

\be
\D_+ q_{ij} = \frac{dQ}{dx}(\xi_i) - (rq)_{ij}, \ \ \ \ \ \ \D_+\tilde q_{ij} = \frac{d\tilde Q}{dx}(\xi_i) - (\tilde qr)_{ij},   \label{eq:4.15}
\ee

\be
\D_+ p_{ij} = \frac{dP}{dx}(\xi_i) - (rp)_{ij}, \ \ \ \ \ \ \D_+\tilde p_{ij} = \frac{d\tilde P}{dx}(\xi_i) - (\tilde pr)_{ij}.    \label{eq:4.16}
\ee

\noindent Similarly, we get the corresponding expression for $\D_-$-derivatives:

\be
\D_- q_{ij} = \sigma_3\frac{dQ}{dx}(\xi_i) - (r\sigma_3q)_{ij}, \ \ \ \ \ \ \D_-\tilde q_{ij} = \frac{d\tilde Q}{dx}(\xi_i)\sigma_3 - (\tilde q\sigma_3r)_{ij},    \label{eq:4.17}
\ee

\be
\D_- p_{ij} = \sigma_3\frac{dP}{dx}(\xi_i) - (r\sigma_3p)_{ij}, \ \ \ \ \ \ \D_-\tilde p_{ij} = \frac{d\tilde P}{dx}(\xi_i) - (\tilde p\sigma_3r)_{ij}.    \label{eq:4.18}
\ee

\noindent We take $x = \xi_i$, $y = \xi_j$ in the eqs.~(\ref{eq:Q}), (\ref{eq:P}), (\ref{eq:tQ}) and (\ref{eq:tP}) and use the previous formulas to obtain PDE with spectral endpoints $\xi_1, \xi_2$ as indepenent variables:

\be
(I+\sigma)\D_+q = -(I+\sigma)\xi q + 2p(I+\sigma)\left(\sqrt{\frac{n}{2}} - \Theta u\right) - [\sigma, r]q,   \label{eq:q+}
\ee

\be
(I-\sigma)\D_+p = (I-\sigma)\xi p - 2q\left(\sqrt{\frac{n}{2}} + \Theta(I-\sigma)w(I-\sigma)^{-1}\right)(I-\sigma) + [\sigma, r]p,   \label{eq:p+}
\ee

\be
\D_+\tilde q(I-\sigma) = -\tilde q\xi(I-\sigma) + 2\left(\sqrt{\frac{n}{2}} - u\Theta\right)(I-\sigma)\tilde p - \tilde q[\sigma, r],   \label{eq:tq+}
\ee

\be
\D_+\tilde p(I+\sigma) = \tilde p\xi(I+\sigma) - 2(I+\sigma)\left(\sqrt{\frac{n}{2}} + (I+\sigma)^{-1}w(I+\sigma)\Theta\right)\tilde p + \tilde p[\sigma, r].   \label{eq:tp+}
\ee

\noindent and the other four equations for $\D_-$ derivatives:

$$
(I+\sigma)\D_-q = -(I+\sigma)\sigma_3\xi q + 2\sigma_3p(I+\sigma)\left(\sqrt{\frac{n}{2}} - \Theta u\right) + \sigma_3r(I+\sigma)q - (I+\sigma)r\sigma_3q,   
$$

\noindent i.e.

\be
(I+\sigma)\D_-q = -(I+\sigma)\sigma_3\xi q + 2\sigma_3p(I+\sigma)\left(\sqrt{\frac{n}{2}} - \Theta u\right) + [\sigma_3, r]q,   \label{eq:q-}
\ee

\noindent and, similarly,

\be
(I-\sigma)\D_-p = (I-\sigma)\sigma_3\xi p - 2\sigma_3q\left(\sqrt{\frac{n}{2}} + \Theta(I-\sigma)w(I-\sigma)^{-1}\right)(I-\sigma) + [\sigma_3, r]p,   \label{eq:p-}
\ee

\be
\D_-\tilde q(I-\sigma) = -\tilde q\sigma_3\xi(I-\sigma) + 2\left(\sqrt{\frac{n}{2}} - u\Theta\right)(I-\sigma)\tilde p\sigma_3 - \tilde q[\sigma_3, r],   \label{eq:tq-}
\ee

\be
\D_-\tilde p(I+\sigma) = \tilde p\sigma_3\xi(I+\sigma) - 2(I+\sigma)\left(\sqrt{\frac{n}{2}} + (I+\sigma)^{-1}w(I+\sigma)\Theta\right)\tilde p\sigma_3 - \tilde p[\sigma_3, r].   \label{eq:tp-}
\ee

\ni Equations (\ref{eq:4.10}), (\ref{eq:4.11}) for the derivatives of $r$ together with a consequence of (\ref{eq:4.14}) (since $\prt_k R = \prt_k\rho$)~\cite{TW2},

$$
\prt_k r_{ij} = \prt_k(R_{ij}(\xi, \xi_j)) = (\prt_k R_{ij})(\xi_i, \xi_j) + \prt_xR_{ij}(\xi_i, \xi_j)\delta_{ik} + \prt_yR_{ij}(\xi_i, \xi_j)\delta_{jk} =
$$

\be
= -r_{ik}r_{kj} + \prt_xR_{ij}(\xi_i, \xi_j)\delta_{ik} + \prt_yR_{ij}(\xi_i, \xi_j)\delta_{jk},
\ee

\noindent give (we will sometimes write $\bar\D_-$ for $(\sigma_3\sigma)\D_-$)

\be
\D_+r = r_x + r_y - r^2 = -p(I+\sigma)\Theta\tilde q - q\Theta(I-\sigma)\tilde p - [\sigma\xi, r],   \label{eq:r+} 
\ee

\be
\bar\D_-r = \sigma r_x + r_y\sigma - r\sigma r = -p(I+\sigma)\Theta\tilde q + q\Theta(I-\sigma)\tilde p - [\xi, r].   \label{eq:r-} 
\ee

\noindent Besides, differentiating the definitions (\ref{eq:u}), (\ref{eq:w}) and using (\ref{eq:QPQP}), (\ref{eq:qpqp}) and (\ref{eq:4.1}), we get the matrix analogs of universal equations~\cite{TW1} for scalar $u$ and $w$:

$$
\D_+u = -\tilde qq, \ \ \  \D_-u = -\tilde q\sigma_3q, \ \ \ \D_+w = -\tilde pp, \ \ \ \D_-w = -\tilde p\sigma_3p.
$$

Thus, we have obtained a system of fourteen first-order matrix PDE for joint largest eigenvalue distribution of two coupled Gaussian matrices. Our system is different from that in~\cite{TW2} since we employ the matrices $u$ and $w$. As we will see, this system is again the most convenient for comparison with $\tau$-function approach of Adler and van Moerbeke~\cite{AvM1, AvM3} like it was the case for one-matrix ensembles~\cite{IR1, UniUE}. 
\par We remark that we are interested in the matrix $r$ while equations (\ref{eq:r+}), (\ref{eq:r-}) involve only terms where the other (auxiliary) variables enter multiplied by the degenerate constant matrix $\Theta$.  Therefore, since $\Theta^2 = 2\Theta$ and

\be
(I - \sigma)(I + \sigma) = I - \sigma^2 = \left(1 - \frac{(1-c)^2}{(1+c)^2}\right)\cdot I = \frac{4c}{(1+c)^2}\cdot I,
\ee

\noindent it is convenient to introduce new matrix variables, which in fact amounts to some reduction of the total number of scalar variables since the new matrices, being proportional to $\Theta$, also have determinant zero:

\be
\hat q = q\Theta, \ \ \ \hat{\tilde q} = \Theta\tilde q, \ \ \ \hat p = \frac{(I-\sigma)p(I+\sigma)}{1-\sigma^2}\Theta, \ \ \ \hat{\tilde p} = \Theta\frac{(I-\sigma)\hat{\tilde p}(I+\sigma)}{1-\sigma^2},
\ee

\be
\hat U \equiv \hat U_n = \sqrt{\frac{n}{2}}\Theta - \Theta u_n \Theta, \ \ \ \hat W \equiv \hat W_n = \sqrt{\frac{n}{2}}\Theta + \Theta \frac{(I-\sigma)w_n(I+\sigma)}{1-\sigma^2} \Theta.
\ee

\noindent Here and further on, in a slight abuse of notation, we write $\sigma^2$ for the scalar $(1-c)^2/(1+c)^2$, so $1-\sigma^2$ stands for $4c/(1+c)^2$. To express everything in terms of these new variables, we multiply equations (\ref{eq:q+}), (\ref{eq:p+}), (\ref{eq:q-}), (\ref{eq:p-}) by the matrix $\Theta$ on the right and equations (\ref{eq:tq+}), (\ref{eq:tp+}), (\ref{eq:tq-}), (\ref{eq:tp-}) by the matrix $\Theta$ on the left. This brings our TW-type system to the form given in theorem \ref{thm:QPsys}, thus finishing its proof.








  









It is easy to verify that the system has two matrix first integrals, similar to the first integral of~\cite{TW1} for Gaussian single matrices:

\be
\hat{\tilde p}\hat q = n\Theta - \hat W\hat U,  \label{eq:WU}
\ee

\be
\hat{\tilde q}\hat p = n\Theta - \hat U\hat W.  \label{eq:UW}
\ee

\section{Coupled analogs of one-matrix relations among matrix kernel related variables and $\tau$-functions}   \label{sec:unirel}

There are direct analogs of one-matrix universal relations~\cite{UniUE} here.

\begin{lemma}

Analogs of rank-1 projection operator kernels read:

\be
K_{n+1}e_L - e_LK_n = e_L\Theta\varphi_n(x)\varphi_n(y)\chi(y),      \label{eq:Kl}
\ee

\be
e_UK_{n+1} - K_ne_U = \Theta e_U\varphi_n(x)\varphi_n(y)\chi(y).    \label{eq:Ku}
\ee

\end{lemma}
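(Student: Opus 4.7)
The plan is a direct entry-by-entry verification starting from definition (\ref{eq:4.3}), exploiting the telescoping structure that arises when the diagonal matrices $e_L = \mathrm{diag}(1,c)$ and $e_U = \mathrm{diag}(c,1)$ (recall $c = e^{t_1-t_2}$) multiply the kernel. Because both are diagonal, the product $K_{n+1}e_L$ rescales the $j$-th column by $(e_L)_{jj}$ while $e_L K_n$ rescales the $i$-th row by $(e_L)_{ii}$, so the first identity reduces to four scalar statements indexed by $(i,j)\in\{1,2\}^2$.

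I would organize the bookkeeping by introducing the $2\times 2$ matrix of exponential weights $(E_k^{(n)})_{ij} = e^{(k-n)(t_i-t_j)}$, split into its lower-triangular-plus-diagonal part $(E_k^{(n)})_L$ and strictly upper-triangular part $(E_k^{(n)})_U$. In this notation (\ref{eq:4.3}) reads
\[
K_n(x,y) \;=\; \sum_{k=0}^{n-1}(E_k^{(n)})_L\,\varphi_k(x)\varphi_k(y) \;-\; \sum_{k=n}^{\infty}(E_k^{(n)})_U\,\varphi_k(x)\varphi_k(y),
\]
and analogously for $K_{n+1}$ with $n$ replaced by $n+1$. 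A direct $2\times 2$ multiplication using $c=e^{t_1-t_2}$ shows the commutation identities $(E_k^{(n+1)})_L\,e_L = e_L\,(E_k^{(n)})_L$ and $(E_k^{(n+1)})_U\,e_L = e_L\,(E_k^{(n)})_U$ hold for every $k$, so the summands for matching indices cancel between $K_{n+1}e_L$ and $e_L K_n$. What does not cancel is a boundary contribution: $K_{n+1}$ has an extra $k=n$ term in its lower sum that $K_n$ lacks, while $e_LK_n$ has an extra $k=n$ term in its upper sum that $K_{n+1}e_L$ has shed. Adding these two gives $[(E_n^{(n+1)})_L\,e_L + e_L\,(E_n^{(n)})_U]\,\varphi_n(x)\varphi_n(y)$, and an explicit $2\times 2$ computation identifies the bracketed matrix with $e_L\Theta$, proving (\ref{eq:Kl}).

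The second identity (\ref{eq:Ku}) is entirely parallel: one replaces $e_L$ by $e_U = \mathrm{diag}(c,1)$ acting on the opposite side, the analogous commutations $e_U(E_k^{(n+1)})_L = (E_k^{(n)})_L\,e_U$ and $e_U(E_k^{(n+1)})_U = (E_k^{(n)})_U\,e_U$ again collapse all summands away from $k=n$, and the surviving boundary term evaluates to $\Theta e_U$. The $\chi(y)$ factor appearing on the right-hand sides is automatic once one recalls that these kernel identities will be composed with the restriction to $J$ in subsequent use; multiplying the pointwise-verified equalities on the right by $\chi(y)$ preserves them.

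The main obstacle is purely bookkeeping — sign-tracking between the $i\ge j$ branch (a finite partial sum that grows with $n$) and the $i<j$ branch (an infinite tail that shrinks with $n$), and checking that the boundary $k=n$ term enters with matching coefficients from both branches so that the sum of the two $2\times 2$ boundary matrices reproduces exactly $e_L\Theta$ (resp.\ $\Theta e_U$). There is no conceptual difficulty beyond this shuffling of powers of $c$, which is why I expect the proof in the paper to consist of displaying these two telescoping computations.
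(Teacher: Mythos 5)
Your proof is correct and is essentially the paper's argument: both reduce to an entry-by-entry verification in which the diagonal entries telescope exactly as in the one-matrix case and the off-diagonal entries cancel up to the single boundary term at $k=n$, whose $2\times2$ evaluation gives $e_L\Theta$ (resp.\ $\Theta e_U$). The only cosmetic difference is that the paper packages the four scalar checks into the single identity $\tfrac{1}{c}e_UK_{n+1}e_L - K_n = \Theta\varphi_n(x)\varphi_n(y)\chi(y)$ and then uses $e_Le_U=cI$ to read off both formulas, whereas you telescope each of the two identities separately.
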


\begin{proof}

The diagonal entries of the matrix kernel $K$ are the same as in scalar (1-matrix) case. Consider the anti-diagonal entries of the kernels for matrices of consecutive sizes $n$ and $n+1$. Since 

$$
(K_n)_{21}(x, y) = \sum_{k=0}^{n-1}c^{n-k}\varphi_k(x)\varphi_k(y)\chi(y), \ \ \ (K_n)_{12}(x, y) = -\sum_{k=n}^{\infty}c^{k-n}\varphi_k(x)\varphi_k(y)\chi(y),
$$

\ni we find

$$
\frac{1}{c}(K_{n+1})_{21} - (K_n)_{21}(x, y) = \varphi_n(x)\varphi_n(y)\chi(y), \ \ \ c(K_{n+1})_{12} - (K_n)_{12}(x, y) = \varphi_n(x)\varphi_n(y)\chi(y),
$$

\ni which means, that we have matrix equation

$$
\frac{1}{c}e_UK_{n+1}(x,y)e_L - K_n(x,y) = \Theta\varphi_n(x)\varphi_n(y)\chi_J(y).
$$

\ni By the identity $e_Le_U = e_Ue_L = cI$, it is equivalent to the statement of the lemma.

\end{proof}

\begin{lemma}

There are the corresponding relations for the resolvent kernels:

\be
R_{n+1}e_L - e_LR_n = P_{n+1}(x)e_L\Theta\tilde Q_n(y),     \label{eq:Rl}
\ee

\be
e_UR_{n+1} - R_ne_U = Q_n(x)\Theta e_U\tilde P_{n+1}(y).     \label{eq:Ru}
\ee

\end{lemma}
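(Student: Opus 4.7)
The plan is to transfer the $K$-kernel identities of the previous lemma to their resolvents by a one-sided sandwich with the operators $\rho_m = (I - K_m^J)^{-1} = I + R_m$. Concretely, I would multiply the first identity
\[
K_{n+1}^J e_L - e_L K_n^J = e_L\Theta\,\varphi_n(x)\varphi_n(y)\chi(y)
\]
by $\rho_{n+1}$ on the left and $\rho_n$ on the right, and treat the companion identity by the mirror procedure with the roles of $n$ and $n+1$ interchanged. Using $\rho_m K_m^J = K_m^J \rho_m = R_m$, the left-hand side telescopes cleanly:
\[
\rho_{n+1}\bigl(K_{n+1}^J e_L - e_L K_n^J\bigr)\rho_n = R_{n+1}e_L(I+R_n) - (I+R_{n+1})e_L R_n = R_{n+1}e_L - e_L R_n,
\]
which matches the desired left-hand side of (\ref{eq:Rl}).

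For the right-hand side I will use two observations. First, by the definition $\tilde Q = \varphi\chi\rho$, the factor $\varphi_n(y)\chi(y)\rho_n$ is (up to the constant $b_{n-1}^{1/2}$ built into $\varphi$) precisely $\tilde Q_n(y)$. Second, because $e_L\Theta$ is a constant matrix whose $(k,j)$ entry $(e_L)_{kk}$ does not depend on $j$, the kernel $e_L\Theta\,\varphi_n(x)$ factors as the scalar $\varphi_n(x)$ multiplied on the right by the constant matrix $e_L\Theta$. Applying $\rho_{n+1}$ will therefore produce $[\rho_{n+1}\varphi_n](x)\cdot e_L\Theta$, and since $\psi$ at matrix size $n+1$ is proportional to $\varphi_n$, the scalar $\rho_{n+1}\varphi_n$ is exactly $P_{n+1}$. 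This delivers (\ref{eq:Rl}). The identity (\ref{eq:Ru}) follows by the mirror argument starting from $e_U K_{n+1}^J - K_n^J e_U = \Theta e_U\,\varphi_n(x)\varphi_n(y)\chi(y)$, sandwiching by $\rho_n$ on the left and $\rho_{n+1}$ on the right, and recognising $\rho_n\varphi_n = Q_n$ on the left and $\varphi_n\chi\rho_{n+1} = \tilde P_{n+1}$ on the right.

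The main obstacle will be the bookkeeping of matrix sizes rather than any deep analytical issue. One must deploy $\rho_{n+1}$ precisely on the side where $\varphi_n$ is to become $P$ or $\tilde P$ (because $\psi$ at size $n+1$ equals $b_n^{1/2}\varphi_n$), and $\rho_n$ on the opposite side where $\varphi_n$ plays the role of $\varphi$ at size $n$. I will also need to verify that the constant matrices $e_L\Theta$ and $\Theta e_U$ legitimately commute past the relevant $\rho$ in the one-sided way used above; this rests on the column/row structure $(e_L\Theta)_{kj} = (e_L)_{kk}$ and $(\Theta e_U)_{kj} = (e_U)_{jj}$. The normalization constants $b_{n-1}^{1/2},\, b_n^{1/2}$ attached to $\varphi$ and $\psi$ are assumed to be absorbed in the definitions of $P,Q,\tilde P,\tilde Q$ so that the stated identities hold with unit coefficient.
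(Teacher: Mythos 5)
Your proposal is correct and is essentially the paper's own argument: one sandwiches each kernel identity between $\rho_{n+1}=(I-K_{n+1})^{-1}$ and $\rho_n=(I-K_n)^{-1}$ on the appropriate sides, the left-hand side telescopes to $R_{n+1}e_L-e_LR_n$ (the paper makes this transparent by first rewriting $K_{n+1}e_L-e_LK_n=e_L(I-K_n)-(I-K_{n+1})e_L$, which is equivalent to your direct expansion), and the rank-one right-hand side becomes $P_{n+1}e_L\Theta\tilde Q_n$. Your closing remark on absorbing the normalization constants matches the paper's implicit convention, so nothing further is needed.
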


\begin{proof}

Writing $K_{n+1}e_L - e_LK_n = e_L(I - K_n) - (I - K_{n+1})e_L$ and multiplying (\ref{eq:Kl})  by $I + R_n = (I - K_n)^{-1}$ on the right and by $I + R_{n+1} = (I - K_{n+1})^{-1}$ on the left gives equation (\ref{eq:Rl}). Equation (\ref{eq:Ru}) follows from (\ref{eq:Ku}) similarly.

\end{proof}
 
\noindent Also by definition of matrices $Q_n$, $P_n$, $\tilde Q_n$ and $\tilde P_n$ we have recursion relations:

\be
\varphi_n = (I-K_n)Q_n = (I-K_{n+1})P_{n+1} = \tilde Q_n(I-K_n) = \tilde P_{n+1}(I-K_{n+1}).    \label{eq:QP}
\ee

\noindent They in turn lead to relations involving inner product matrices $\bar u_n = (\varphi_n, (I-K_n)\varphi_n)$ and $\bar w_n = (\varphi_{n-1}, (I-K_n)\varphi_{n-1})$ :

\begin{lemma}

$$
e_UP_n = Q_{n-1}e_U + Q_{n-1}\Theta e_U\bar w_n, \ \ \ \tilde P_{n+1}e_L = e_L\tilde Q_n + \bar w_{n+1}e_L\Theta\tilde Q_n,
$$

$$
e_LQ_n = P_{n+1}e_L - P_{n+1}e_L\Theta \bar u_n, \ \ \ \tilde Q_{n-1}e_U = e_U\tilde P_n - \bar u_{n-1}\Theta e_U\tilde P_n.
$$

\end{lemma}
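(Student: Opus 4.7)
My plan is to derive each of the four identities by applying one of the two resolvent identities of the preceding lemma to the eigenfunction $\varphi_{n-1}$ or $\varphi_n$, and then recognizing the inner product produced on the right-hand side as $\bar u$ or $\bar w$ with an appropriate index. The four cases are completely parallel, differing only in which resolvent identity one starts from and whether one multiplies by a scalar from the left (integrating against $y$) or from the right (against $x$).

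As a representative case, for the first identity $e_UP_n = Q_{n-1}e_U + Q_{n-1}\Theta e_U\bar w_n$ I would first shift $n\to n-1$ in the second resolvent identity of the preceding lemma to obtain
$$e_UR_n - R_{n-1}e_U = Q_{n-1}(x)\Theta e_U\tilde P_n(y),$$
and then apply both sides, as operators, to $\varphi_{n-1}$. Because the constant matrix $e_U$ commutes with scalars, the recursions $(I+R_n)\varphi_{n-1}=P_n$ and $(I+R_{n-1})\varphi_{n-1}=Q_{n-1}$ (from eq.~(\ref{eq:QP})) collapse the left-hand side to $e_UP_n - Q_{n-1}e_U$. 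The right-hand side becomes $Q_{n-1}\Theta e_U\cdot(\tilde P_n,\varphi_{n-1})$, and since $\tilde P_n=\varphi_{n-1}\chi\rho_n$ the inner product equals $(\chi\varphi_{n-1},\rho_n\varphi_{n-1})=\bar w_n$, giving the identity. (Here $\bar u_n$ and $\bar w_n$ are understood, in the spirit of the scalar one-matrix case, with $\rho_n=(I-K_n^J)^{-1}$.)

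The remaining three identities would follow by the same template, with only the starting identity and the side of scalar multiplication changed. For the second, I would multiply $R_{n+1}e_L-e_LR_n = P_{n+1}(x)e_L\Theta\tilde Q_n(y)$ on the left by $\varphi_n\chi$, producing $\tilde P_{n+1}e_L - e_L\tilde Q_n$ on the left and $\bar w_{n+1}e_L\Theta\tilde Q_n$ on the right via $(\varphi_n\chi,P_{n+1})=\bar w_{n+1}$. For the third, the same resolvent identity applied to $\varphi_n$ from the right yields $P_{n+1}e_L - e_LQ_n$ on the left and $P_{n+1}e_L\Theta\bar u_n$ on the right via $(\tilde Q_n,\varphi_n)=\bar u_n$. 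For the fourth, multiplying the $n\to n-1$ shift of the second resolvent identity on the left by $\varphi_{n-1}\chi$ produces $e_U\tilde P_n - \tilde Q_{n-1}e_U$ on the left and $\bar u_{n-1}\Theta e_U\tilde P_n$ on the right.

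The main obstacle is purely bookkeeping: one must carefully distinguish whether each constant matrix $e_L$, $e_U$, $\Theta$ multiplies from the left or the right, simultaneously interpret each resolvent identity as an operator equation and as a matrix-valued integral-kernel relation, and use the fact that scalar functions commute with constant matrices so such factors can be moved freely past them. No analytical difficulty beyond what is already contained in the preceding two lemmas and the recursion (\ref{eq:QP}) is involved; everything else reduces to direct substitution.
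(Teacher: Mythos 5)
Your proof is correct and is essentially the paper's argument in a different packaging: the paper starts from the kernel-level relations (\ref{eq:Kl})--(\ref{eq:Ku}) together with the recursion $\varphi_n=(I-K_n)Q_n=(I-K_{n+1})P_{n+1}$ and applies $(I-K_n)^{-1}$ at the end, whereas you start from the resolvent-level relations (\ref{eq:Rl})--(\ref{eq:Ru}) (which are those same kernel relations already conjugated by the resolvents) and evaluate them on $\varphi_{n-1}$ or $\varphi_n$. All the index shifts and the inner-product identifications $(\tilde P_n,\varphi_{n-1})=\bar w_n$, $(\varphi_n\chi,P_{n+1})=\bar w_{n+1}$, $(\tilde Q_n,\varphi_n)=\bar u_n$, $(\varphi_{n-1}\chi,Q_{n-1})=\bar u_{n-1}$ check out, so the argument is sound.
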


\begin{proof}

Let us prove e.g.~the first of the above formulas: we have

$$
\varphi_ne_U = e_U(I-K_n)Q_n = e_U(I-K_{n+1})P_{n+1} = (I-K_n)e_UP_{n+1} - (e_UK_{n+1} - K_ne_U)P_{n+1} = 
$$

$$
= (I-K_n)e_UP_{n+1} - \varphi_n\Theta e_U\bar w_{n+1},
$$

\ni and acting on both sides by $(I-K_n)^{-1}$ from the left gives

$$
e_UP_{n+1} =  Q_ne_U + Q_n\Theta e_U\bar w_{n+1},
$$

\ni i.e. the sought formula if we shift $n \to n-1$. The other three formulas are obtained quite similarly. 

\end{proof}

Introduce matrix $\hat{\bar w}_n = e_U\bar w_ne_L/c$, then we have

\begin{corollary}

\be
I + \Theta\hat{\bar w}_{n+1} = (I - \Theta \bar u_n)^{-1}
\ee

\noindent and

\be
I + \hat{\bar w}_{n+1}\Theta = (I - \bar u_n\Theta)^{-1}.
\ee

\end{corollary}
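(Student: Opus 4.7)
The plan is to deduce both identities algebraically from the four recursion relations of the preceding lemma, by multiplying them left and right by $e_L$ and $e_U$ and exploiting $e_Ue_L=e_Le_U=cI$ together with the defining relation $\hat{\bar w}_{n+1}=e_U\bar w_{n+1}e_L/c$. The first identity should fall out of pairing the first and third recursions (those involving $P$ and $Q$), and the second from the mirror-image pairing of the second and fourth (those involving $\tilde P$ and $\tilde Q$).

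For the first identity, I would shift $n\to n+1$ in the first recursion to get $e_UP_{n+1}=Q_ne_U+Q_n\Theta e_U\bar w_{n+1}$ and multiply it on the right by $e_L$; the product $e_U\bar w_{n+1}e_L$ collapses to $c\hat{\bar w}_{n+1}$, yielding
\[
e_UP_{n+1}e_L=cQ_n(I+\Theta\hat{\bar w}_{n+1}).
\]
Next I factor the third recursion as $e_LQ_n=P_{n+1}e_L(I-\Theta\bar u_n)$ and multiply on the left by $e_U$ to get $cQ_n=e_UP_{n+1}e_L(I-\Theta\bar u_n)$. Substituting the previous display produces
\[
cQ_n=cQ_n(I+\Theta\hat{\bar w}_{n+1})(I-\Theta\bar u_n),
\]
from which the first identity falls out after left-cancellation of $Q_n$. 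The second identity comes from the symmetric computation: multiplying the second recursion on the left by $e_U$ gives $e_U\tilde P_{n+1}e_L=c(I+\hat{\bar w}_{n+1}\Theta)\tilde Q_n$, while the fourth recursion with $n\to n+1$ reads $\tilde Q_ne_U=(I-\bar u_n\Theta)e_U\tilde P_{n+1}$ and, multiplied on the right by $e_L$, yields $c\tilde Q_n=c(I-\bar u_n\Theta)(I+\hat{\bar w}_{n+1}\Theta)\tilde Q_n$; right-cancellation of $\tilde Q_n$ closes the case.

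The only delicate point is justifying the cancellation of $Q_n$ and $\tilde Q_n$ from the resulting matrix-valued functional identities. This will be handled by observing that $Q_n(x)=\varphi(x)I+(R_n\varphi)(x)$, and since each $\chi_{J_k}$ is supported in $(\xi_k,\infty)$, the integral correction $(R_n\varphi)(x)$ tends to zero as $x\to-\infty$. Hence for all sufficiently negative $x$ away from the zeros of $\varphi$, the matrix $Q_n(x)$ is close to $\varphi(x)I$ and therefore invertible; evaluating the functional identity at any such $x$ forces the constant cofactor matrix to be the identity. The same reasoning applied to $\tilde Q_n(x)$ (now with right-cancellation) settles the second identity.
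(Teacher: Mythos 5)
Your argument is exactly the paper's: pair the $P$/$Q$ recursions (and their tilded mirrors), multiply by $e_L$ and $e_U$, use $e_Ue_L=cI$ and $\hat{\bar w}_{n+1}=e_U\bar w_{n+1}e_L/c$ to get $e_UP_{n+1}e_L/c=Q_n(I+\Theta\hat{\bar w}_{n+1})=Q_n(I-\Theta\bar u_n)^{-1}$, and cancel $Q_n$ (resp.\ $\tilde Q_n$). The paper leaves the cancellation implicit; your justification of it is a welcome extra (though note that $\varphi(x)$ itself vanishes as $x\to-\infty$, so the cleanest route is simply that $Q_n w=\rho(\varphi w)=0$ with $\rho$ invertible forces $w=0$).
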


\begin{proof}

One can rewrite recursion relations from previous lemma as follows:    

$$
\frac{e_UP_{n+1}e_L}{c} = Q_n(I + \Theta\hat{\bar w}_{n+1}) = Q_n(I - \Theta \bar u_n)^{-1}, 
$$

$$
\frac{e_U\tilde P_{n+1}e_L}{c} = (I + \hat{\bar w}_{n+1}\Theta)\tilde Q_n = (I - \bar u_n\Theta)^{-1}\tilde Q_n,
$$



\ni thus we get the statement.



\end{proof}

\ni It follows also for the matrix resolvent kernels that

$$
R_{n+1}e_L - e_LR_n = e_LQ_n(x)(I - \Theta \bar u_n)^{-1}\Theta\tilde Q_n(y),     
$$

$$
e_UR_{n+1} - R_ne_U = Q_n(x)\Theta (I - \bar u_n\Theta)^{-1}\tilde Q_n(y)e_U.     
$$

\ni Taking $x = \xi_i$, $y = \xi_j$ in the last formulas gives, respectively,

$$
r_{n+1}e_L - e_Lr_n = e_Lq_n(I - \Theta \bar u_n)^{-1}\Theta\tilde q_n,     
$$

$$
e_Ur_{n+1} - r_ne_U = q_n\Theta (I - \bar u_n\Theta)^{-1}\tilde q_ne_U.     
$$

\ni Now, after recalling again $e_Le_U = e_Ue_L = cI$, it follows from either of these two formulas that

$$
\D_+ \ln\frac{\tau_{n+1}^J}{\tau_n^J} = \text{Tr} (r_{n+1} - r_n) = \text{Tr}\left((I - \Theta \bar u_n)^{-1}\D_+(I - \Theta \bar u_n)\right) = \D_+\ln\det(I - \Theta \bar u_n),
$$

\ni which can be integrated to give

$$
\frac{\tau_{n+1}^J/\tau_{n+1}}{\tau_n^J/\tau_n} = \det(I - \Theta \bar u_n) = \det(I - \bar u_n\Theta) = 1 - \text{Tr}(\Theta \bar u_n)
$$

\ni Using the recursion relations to get the second, similar formula, we thus prove the main result of this section -- theorem \ref{thm:unirel}, since we also get the formulas for the matrices $\hat{\bar U}$ and $\hat{\bar W}$ defined before it in section \ref{sec:results}:

$$
\text{Tr}\hat U = \sqrt{\frac{n}{2}}\text{Tr}\hat{\bar U} = \sqrt{2n} - 2\text{Tr}(\Theta u_n) = \sqrt{2n}(1 - \text{Tr}(\Theta \bar u_n)) = \sqrt{2n}\;\frac{\tau_{n+1}^J/\tau_{n+1}}{\tau_n^J/\tau_n},
$$


$$
\text{Tr}\hat W = \sqrt{\frac{n}{2}}\text{Tr}\hat{\bar W} = \sqrt{2n} + 2\text{Tr}(\Theta \hat w_n) = \sqrt{2n}(1 + \text{Tr}(\Theta \hat{\bar w}_n)) = \sqrt{2n}\;\frac{\tau_{n-1}^J/\tau_{n+1}}{\tau_n^J/\tau_n}.
$$


 





\ni In fact, one has a remarkable result: \\

\bigskip
{\bf\large Lemma \ref{lem:tr}.} 


\be
\text{Tr}(\hat U\hat W) = \text{Tr}\hat U\cdot \text{Tr}\hat W
\ee

\bigskip 

\begin{proof}

Since Tr$r = \D_+\ln\tau_n^J$, Tr$(\sigma_3r) = \D_-\ln\tau_n^J$, Tr$(\sigma r) = \bar\D_-\ln\tau_n^J$, we get from the equations (\ref{eq:r+}) and (\ref{eq:r-}):

$$
(\D_+^2 - \bar\D_-^2)\ln\tau_n^J = -\frac{(1-\sigma^2)}{2}\text{Tr}(\hat q\hat{\tilde p} + \hat p\hat{\tilde q}).
$$

\ni We use the first integrals (\ref{eq:WU}) and (\ref{eq:UW}) to find

$$
\text{Tr}(\hat q\hat{\tilde p}) = \text{Tr}(\hat p\hat{\tilde q}) = \text{Tr}(n\Theta - \hat U\hat W) = 2n - \text{Tr}(\hat U\hat W),
$$

\ni so the previous equation can be written as

$$
(\D_+^2 - \bar\D_-^2)\ln\tau_n^J = 2(1-\sigma^2)\left(\frac{1}{2}\text{Tr}(\hat U\hat W) - n\right).
$$

\ni We compare this last equation with the ``boundary-Toda" equation from section 3, which can be written as

$$
(\D_+^2 - \bar\D_-^2)\ln\tau_n^J = 2(1-\sigma^2)(2(1-c^2)U_nW_n - n),
$$

which immediately leads to

$$
\text{Tr}(\hat U\hat W) = 4(1-c^2)U_nW_n.
$$

\ni Therefore, by the previous theorem, see also section 2 for the normalization of the 2-matrix integrals $\tau_n$ over the whole domain,





$$
\text{Tr}\hat U \cdot \text{Tr}\hat W = 2nU_nW_n\;\frac{\tau_n^2}{\tau_{n+1}\tau_{n-1}} = 4(1-c^2)U_nW_n.
$$

\end{proof}
  

\section{Transformation of the TW-type system and PDE analogs of Painlev\'e IV}   \label{sec:Xsys}

We consider and solve the system obtained in theorem \ref{thm:Xsys}.

\begin{lemma}

The system of theorem \ref{thm:Xsys} has three full matrix first integrals: 

\be
[X_+, G] = \{X_-, \T\},   \label{eq:I1}
\ee

\be
[X_+, \T] = \{X_-, G\},    \label{eq:I2}
\ee

\be
[\T, G] = \{X_-, 3X_+^2 - 8nX_+ + X_-^2\} - [X_+, [X_+, X_-]],    \label{eq:I3}
\ee

\end{lemma}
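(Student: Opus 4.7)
The three identities are purely algebraic consequences of the definitions \eqref{eq:X+-}, \eqref{eq:PhiG} combined with the matrix first integrals \eqref{eq:WU}, \eqref{eq:UW}. Set $a = \hat q\hat{\tilde p}$, $b = \hat p\hat{\tilde q}$, $\alpha = 2\hat p\hat U\hat{\tilde p}$, $\beta = 2\hat q\hat W\hat{\tilde q}$, so that $X_{\pm} = a\pm b$, $\T = \alpha - \beta$, $G = \alpha + \beta$. The plan is to expand both sides of each identity into ordered monomials in $\hat q, \hat p, \hat{\tilde q}, \hat{\tilde p}, \hat U, \hat W$, eliminate every subproduct $\hat{\tilde p}\hat q$ and $\hat{\tilde q}\hat p$ via \eqref{eq:WU}, \eqref{eq:UW}, and clear the remaining $\Theta$'s using the collapse rules $\Theta\hat U = \hat U\Theta = 2\hat U$, $\Theta\hat W = \hat W\Theta = 2\hat W$ (which follow from $\Theta^2 = 2\Theta$ together with the sandwiching of $\hat U, \hat W$ by $\Theta$); the identities then reduce to monomial-by-monomial equalities.

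For \eqref{eq:I1} and \eqref{eq:I2}, each of the four primitive brackets $[a,\alpha]$, $[a,\beta]$, $[b,\alpha]$, $[b,\beta]$ and its anti-commutator counterpart splits under these rules into three pieces: a ``non-symmetric'' quartic monomial such as $2\hat q\hat{\tilde p}\hat p\hat U\hat{\tilde p}$, a multiple of $2n\alpha$ or $2n\beta$, and a symmetric quintic of type $\pm 2\hat p\hat U\hat W\hat U\hat{\tilde p}$ or $\pm 2\hat q\hat W\hat U\hat W\hat{\tilde q}$. In the combinations forming $[X_+, G]$ and $[X_+, \T]$ the $n$-multiples and quintic terms cancel in pairs, leaving exactly four non-symmetric quartic monomials whose signs match those obtained by expanding $\{X_-, \T\}$ and $\{X_-, G\}$ by the same rules; this yields both identities simultaneously.

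For \eqref{eq:I3}, first compute $X_{\pm}^2 = 2nX_+ \pm (ab+ba) - \hat q\hat W\hat U\hat{\tilde p} - \hat p\hat U\hat W\hat{\tilde q}$, so $3X_+^2 - 8nX_+ + X_-^2 = 2(ab+ba) - 4\hat q\hat W\hat U\hat{\tilde p} - 4\hat p\hat U\hat W\hat{\tilde q}$. Using $[X_+, X_-] = 2(ba-ab)$, the iterated commutator reduces to $[X_+,[X_+,X_-]] = 4(aba-bab) + 2\{b,\hat q\hat W\hat U\hat{\tilde p}\} - 2\{a,\hat p\hat U\hat W\hat{\tilde q}\}$. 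Expanding $\{X_-, 2(ab+ba)\}$ and applying \eqref{eq:WU}, \eqref{eq:UW} on the resulting $a^2 b$ and $ab^2$ pieces produces $4(aba-bab) - 2\{b, \hat q\hat W\hat U\hat{\tilde p}\} + 2\{a, \hat p\hat U\hat W\hat{\tilde q}\}$; combining all contributions, the cubic $aba-bab$ terms cancel and the right-hand side of \eqref{eq:I3} collapses to $4(\{b, \hat p\hat U\hat W\hat{\tilde q}\} - \{a, \hat q\hat W\hat U\hat{\tilde p}\})$. Independently, $[\T, G] = 2[\alpha, \beta]$ simplifies by exactly the same substitutions to the same expression, finishing the verification.

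The main obstacle is the cubic bookkeeping in \eqref{eq:I3}: several dozen transient monomials of degree three to five in $\hat q, \hat p, \hat{\tilde q}, \hat{\tilde p}, \hat U, \hat W$ must be tracked, and the $\Theta$-collapse rules have to be applied consistently to bring everything to a common normal form. By contrast, \eqref{eq:I1} and \eqref{eq:I2} are essentially immediate once the reduction strategy is set up.
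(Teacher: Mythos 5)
Your proposal is correct, and I verified the key intermediate identities (the reductions of $X_\pm^2$, of $[X_+,[X_+,X_-]]$, and the collapse of both sides of (\ref{eq:I3}) to $4(\{\hat p\hat{\tilde q}, \hat p\hat U\hat W\hat{\tilde q}\} - \{\hat q\hat{\tilde p}, \hat q\hat W\hat U\hat{\tilde p}\})$). For (\ref{eq:I1}) and (\ref{eq:I2}) your route coincides with the paper's, which simply states that these are the integrals (\ref{eq:WU}), (\ref{eq:UW}) ``rewritten in new variables''; you just make the rewriting explicit via the $\Theta$-collapse rules. For (\ref{eq:I3}), however, the paper takes a genuinely different path: it differentiates the already-established integral $[X_+,G]=\{X_-,\T\}$ along $\D_+$, substitutes the flow equations (\ref{eq:D+X+})--(\ref{eq:D+G}) into $[\D_+X_+,G]+[X_+,\D_+G]=\{\D_+X_-,\T\}+\{X_-,\D_+\T\}$, and solves for $[\T,G]$, using (\ref{eq:I2}) to cancel the four terms involving $\xi+\tilde A$. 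That derivation is shorter because it never opens up the quartic/quintic monomials, but it presents (\ref{eq:I3}) as a consistency condition of the dynamics. Your argument buys something slightly stronger and more self-contained: it exhibits (\ref{eq:I3}) as a purely algebraic consequence of the two quadratic integrals (\ref{eq:WU}), (\ref{eq:UW}) alone, independent of the PDE system, at the cost of the heavier monomial bookkeeping you acknowledge.
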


\begin{proof}

The first two formulas are just the integrals (\ref{eq:WU}) and (\ref{eq:UW}) rewritten in new variables, and the third one can be easily derived from the above system of PDE as consistency condition, if using the first two. Consider e.g.

$$
[\D_+X_+, G] + [X_+, \D_+G] = [\T, G] - [[\xi+\tilde A, X_-], G] + [X_+, \{\xi+\tilde A, \T\}] + [X_+, [X_+, X_-]],
$$

\ni i.e.

$$
[\T, G] = \D_+[X_+, G] - [X_+, [X_+, X_-]] + [[\xi+\tilde A, X_-], G] - [X_+, \{\xi+\tilde A, \T\}] = 
$$

$$
= \D_+\{X_-, \T\} - [X_+, [X_+, X_-]] + [[\xi+\tilde A, X_-], G] - [X_+, \{\xi+\tilde A, \T\}] = 
$$

$$
= \{X_-, 3X_+^2 - 8nX_+ + X_-^2\} - [X_+, [X_+, X_-]] + 
$$

$$
+ [[\xi+\tilde A, X_-], G] - [X_+, \{\xi+\tilde A, \T\}] - \{[\xi+\tilde A, X_+], \T\} + \{X_-, \{\xi+\tilde A, G\}\},
$$

\ni which gives (\ref{eq:I3}), since the last four terms on the last line cancel out due to (\ref{eq:I2}).

\end{proof}

\noindent There are additional {\it diagonal matrix} first integrals, which can be obtained by integrating the diagonal parts of equations (\ref{eq:D+X-}) and, using expression for $\T$ from eq.~(\ref{eq:D+X+}), also (\ref{eq:D+G}), or diagonal parts of equations (\ref{eq:D-X-}) and (\ref{eq:D-G}), using expression for anti-diagonal part of $G$ from eq.~(\ref{eq:D-X-}) on the right-hand side of (\ref{eq:D-G}). The pairs of equations obtained in these two ways become identical after integration. Thus we get, respectively,

\begin{lemma}

The system has two diagonal matrix first integrals,

\be
(X_-)_d = -A^2\sigma,   \label{eq:X-d}
\ee

\be
G_d = 4r_d - 2\xi_+\D r_d - 2\xi_-\D_- r_d + \{\tilde A, (X_+)_a\} - \sigma_3\{A, X_a\}.     \label{eq:Gd}
\ee

\end{lemma}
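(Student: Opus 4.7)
The strategy is to extract diagonal components from selected PDEs in Theorem~\ref{thm:Xsys} and integrate them, exploiting a key feature of the $2\times 2$ setting: since $A$ is anti-diagonal (it is a commutator with $\sigma_3$ via (\ref{eq:Adef})), the square $A^2$ is automatically a scalar multiple of the identity, and similarly $\tilde A$ is anti-diagonal with $\tilde A^2$ scalar. This will be repeatedly used to convert commutators of anti-diagonal matrices into anti-commutators on the diagonal block.

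For the first identity $(X_-)_d = -A^2\sigma$, I would take the diagonal part of equation~(\ref{eq:D+X-}), namely $\D_+ X_- = -[\xi + \tilde A, X_+]$. Since $\xi$ is diagonal, $[\xi,X_+]_d = 0$, and since $\tilde A$ is anti-diagonal, only $(X_+)_a$ contributes to the diagonal of $[\tilde A,X_+]$. On the other side, I would differentiate $A^2\sigma$ directly, using the defining formula (\ref{eq:Adef}) together with (\ref{eq:r+}) to compute $\D_+ A$, and then reduce $\D_+(A^2)=\{A,\D_+A\}$ to a scalar expression in the matrix variables $\hat q,\hat{\tilde p},\hat p,\hat{\tilde q}$. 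Invoking the first integrals~(\ref{eq:WU}) and~(\ref{eq:UW}) to replace the bilinears $\hat{\tilde p}\hat q$ and $\hat{\tilde q}\hat p$ in terms of $X_\pm$, the two expressions are matched, showing $\D_+\bigl((X_-)_d + A^2\sigma\bigr)=0$. The analogous computation with equation~(\ref{eq:D-X-}) gives $\D_-$-consistency. Both sides vanishing as $\xi_1,\xi_2\to+\infty$ (where the resolvent and hence $r,A,X_\pm$ decay) fixes the integration constant.

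For the identity involving $G_d$, I would take the diagonal part of equation~(\ref{eq:D+G}): $\D_+ G_d = \{\xi + \tilde A,\T\}_d + [X_+,X_-]_d$, and eliminate $\T$ using~(\ref{eq:D+X+}). The term $\{\xi,\T\}_d$, after the decomposition $\xi = \tfrac12(\xi_+ I + \xi_- \sigma_3)$, produces the expected $-2\xi_+\D_+ r_d - 2\xi_-\D_- r_d$ after appealing to the diagonal of (\ref{eq:r+}) (and (\ref{eq:r-})) to express diagonal derivatives of $r$. The anti-commutator $\{\tilde A,\T\}_d$ picks up precisely the combination $\{\tilde A,(X_+)_a\}$. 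The commutator $[X_+,X_-]_d$, once we substitute the diagonal part of $X_-$ from the already-established (\ref{eq:X-d}), produces exactly $-\sigma_3\{A,X_a\}$ up to scalar factors. The constant $4r_d$ arises as the integration constant matched against the boundary decay at infinity or alternatively by differentiating the ansatz term by term in $\D_+$ and checking that it reproduces the right-hand side identically. The parallel derivation from~(\ref{eq:D-X-}) and~(\ref{eq:D-G}) is stated in the paper to produce an identical integrated formula, which serves as internal consistency.

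The main obstacle will be the careful bookkeeping of diagonal versus anti-diagonal components when computing $\D_+ A$ and $\D_+\tilde A$ from their definitions: one has to disentangle the nested commutator with $\sigma_3$ and $\sigma$ in (\ref{eq:Adef}), use that $\sigma = \tfrac{1-c}{1+c}\sigma_3$ commutes with diagonal matrices, and consolidate products of anti-diagonal $2\times 2$ matrices into scalars via the $A^2\propto I$ identity. Once these algebraic reductions are in place, both first integrals drop out by straightforward substitution and integration with respect to the spectral-endpoint flows.
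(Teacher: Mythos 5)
This is the paper's own approach: the lemma is obtained precisely by integrating the diagonal part of (\ref{eq:D+X-}) to get (\ref{eq:X-d}) and the diagonal part of (\ref{eq:D+G}) with $\T$ eliminated via (\ref{eq:D+X+}) to get (\ref{eq:Gd}) (equivalently from (\ref{eq:D-X-}) and (\ref{eq:D-G})), and the paper supplies no more computational detail than you do. Two bookkeeping slips are worth noting: the first integrals (\ref{eq:WU})--(\ref{eq:UW}) relate $\hat{\tilde p}\hat q$ and $\hat{\tilde q}\hat p$ to $\hat U\hat W$, not to $X_{\pm}$, and are not needed here --- what closes the first computation is the anti-diagonal part of (\ref{eq:r+}), i.e.\ $\sigma_3\D_+A=-(X_+)_a-\xi_+\tilde A$ (eq.~(\ref{eq:ra})), together with the $2\times2$ identity $[A_a,\sigma_3 B_a]=-\{A_a,B_a\}\sigma_3$ for anti-diagonal matrices; and $[X_+,X_-]_d=[(X_+)_a,(X_-)_a]$ receives no contribution from $(X_-)_d$, so the term $-\sigma_3\{A,X_a\}$ emerges only from the global cancellation when the full ansatz is differentiated term by term, not from the substitution you describe.
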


\bigskip
\par Splitting into diagonal/anti-diagonal parts appears to be convenient for the rest of the system also. Equations for the derivatives of $r$ in diagonal/anti-diagonal splitting, if we denote 


$$
X_d \equiv (X_+)_d, \ \ \ X_a \equiv (X_-)_a,
$$

\noindent are

\be
2\D_+ r_d = - X_d + \sigma(X_-)_d, \ \ \ \ 2\bar\D_-r_d = -\sigma X_d + (X_-)_d,      \label{eq:rd}
\ee

\be
\sigma_3\D_+ A = - (X_+)_a - \xi_+\tilde A, \ \ \ \ \sigma\D_-A = X_a - \xi_-A.      \label{eq:ra} 
\ee 

\noindent The equation (\ref{eq:X-d}) then entails important simple relations

\be
X_d = -2\D_+ r_d - \sigma^2A^2 = -2\sigma_3\D_-r_d - A^2.     \label{eq:XrA}
\ee

\noindent Further splitting of the diagonal parts into scalar trace, we denote Tr$M$ by $M_t$, and tr$(\sigma_3M)$, denoted by $M_3$, parts will also be used. As it is clear from the eq.~(\ref{eq:D+X-}), Tr$X_- = 0$. As follows from the connection of the matrix $r$ with $\ln\tau_n^J$,

$$
\D_+\D_-\ln\tau_n^J = \D_+r_3 = \D_-r_t,
$$

\noindent (the second equality above can be seen also from each of the equations (\ref{eq:rd})). Also, a consequence of (\ref{eq:XrA}) is the formula \ref{eq:A^2} simply relating anti-diagonal elements of $r$ with its diagonal elements. 


\noindent One can see that, besides equations (\ref{eq:r+}), (\ref{eq:r-}), some combinations of the other equations in the system are also linear in $X_+, X_-, \T$ and $G$. Namely, adding (\ref{eq:D+X+}) and $(\sigma_3\sigma)\cdot$(\ref{eq:D-X-}) gives

$$
\D_+X_+ + (\sigma_3\sigma)\D_-X_- = \T - \xi_-\frac{[\sigma_3, X_-]}{2} - \xi_+\frac{[\sigma, X_+]}{2} + \frac{[\sigma, G]}{2},
$$

\noindent while adding (\ref{eq:D+X-}) and $(\sigma_3\sigma)\cdot$(\ref{eq:D-X+}) results in

$$
\D_+X_- + (\sigma_3\sigma)\D_-X_+ = - \xi_-\frac{[\sigma_3, X_+]}{2} - \xi_+\frac{[\sigma, X_-]}{2} + \frac{\{\sigma, \T\}}{2}.
$$

\noindent It follows from the linear equations that

\be
\T_t = \D_+ X_t = \D_-X_3 - 2\D_+ A^2,    \label{eq:TtX}
\ee

\be
\T_3 = \D_-X_t = \D_+ X_3 - 2\sigma^2\D_-A^2,   \label{eq:T3X}
\ee

\noindent recall that $\sigma^2 = (1-c)^2/(1+c)^2$. Recall the quantities $A_+$, $A_-$, $X_3$, $G_t$, $G_3$ defined in the corollary \ref{cor:main}, in section \ref{sec:results}. 


The splitting of eq.~(\ref{eq:Gd}) into the scalar parts reads:

\be
G_t = 4r_t - 2\xi_+\D_+r_t + \xi_-X_3 + A_+ \equiv H_t + A_+,   \label{eq:Gt}
\ee

\be
G_3 = 4r_3 - 2\xi_-\D_- r_3 + \xi_+X_3 + A_- \equiv H_3 + A_-.    \label{eq:G3}
\ee



\par In fact, one can show that

\begin{theorem}       \label{theorem:4ord}

The system can be reduced to the following five independent (scalar) 4th-order PDE in $\ln\tau_n^J$:

\be
X_3\D_+\D_-X_3 - \D_+X_3\D_-X_3 + R_tR_3 - (X_3 + \xi_+\xi_-)X_3^2 = 0,   \label{eq:S}
\ee

\noindent where

$$
R_t = 4r_t - 2\xi_+\D_+r_t, \ \ \ \ \ \ R_3 = 4r_3 - 2\xi_-\D_-r_3,   
$$

\be
\D_+\D_-X_t = \xi_-G_t + \xi_+G_3 + X_3(3X_t - 8n),   \label{eq:Tt3}
\ee

\be
2\hat FX_3\D_+^2X_t = 2\hat F(\D_+X_3\D_+X_t - R_tG_3 + \xi_+X_3G_t) + X_3(\D_+X_t^2 - G_t^2),    \label{eq:+Tt}
\ee

\be
2\hat FX_3\D_-^2X_t = 2\hat F(\D_-X_3\D_-X_t - R_3G_t + \xi_-X_3G_3) + X_3(\D_-X_t^2 - G_3^2)   \label{eq:-T3}
\ee

\ni and

\be
\frac{\sigma^2(\D_+^2-\D_-^2)X_3}{(1-\sigma^2)} = -2\sigma^2\D_+\D_-A^2 = -(6\sigma^2X_3A^2 - \xi_-A_+ - \sigma^2\xi_+A_-).    \label{eq:DDA}
\ee

\end{theorem}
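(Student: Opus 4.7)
The strategy is to extract everything by taking scalar projections of the ten $2\times 2$ matrix equations of Theorem \ref{thm:Xsys}, working in the diagonal/anti-diagonal splitting introduced in (\ref{eq:rd})--(\ref{eq:ra}). Each matrix equation yields four scalar relations: $\text{Tr}$ and $\text{tr}(\sigma_3\,\cdot\,)$ of its diagonal part, and the two components of its anti-diagonal part. The diagonal first integrals (\ref{eq:X-d}), (\ref{eq:Gd}), the relation (\ref{eq:XrA}), and the matrix first integrals (\ref{eq:I1})--(\ref{eq:I3}) then let me eliminate almost everything in favour of $r_t, r_3, X_t, X_3, A^2$, $A_\pm$. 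Since $r_t = \D_+T$, $r_3 = \D_-T$, $X_3 = -2\D_+\D_-T$ and $X_t, A^2$ are likewise at most second derivatives of $T = \ln\tau_n^J$, every scalar PDE extracted this way is automatically of order $\le 4$ in $T$, modulo the auxiliary scalars $A_\pm$ (appearing only through $G_t = H_t + A_+$, $G_3 = H_3 + A_-$) that the following section eliminates.

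The two linear-in-auxiliary identities (\ref{eq:Tt3}) and (\ref{eq:DDA}) come essentially for free. Taking $\text{tr}(\sigma_3\,\cdot\,)$ of (\ref{eq:D+T}) and using $\T_3 = \D_-X_t$ from (\ref{eq:T3X}) turns the left-hand side into $\D_+\D_-X_t$; on the right, the $\{\xi+\tilde A, G\}$ term produces $\xi_-G_t + \xi_+G_3$ after the $\tilde A$ part is folded into the $A_+$ pieces of $G_t$ and $G_3$, while $3X_+^2 - 8nX_+ + X_-^2$ contributes $X_3(3X_t - 8n)$ once the anti-diagonal squares are reorganised via (\ref{eq:X-d}). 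Equation (\ref{eq:DDA}) follows by applying $\D_\pm$ to the pair (\ref{eq:TtX})--(\ref{eq:T3X}) and eliminating $\T_t, \T_3$ through (\ref{eq:D+T})--(\ref{eq:D-G}); the anti-diagonal equations (\ref{eq:ra}) then convert the $\tilde A$- and $A$-parts into precisely the $\xi_-A_+$ and $\sigma^2\xi_+A_-$ contributions on the right.

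The three Painlev\'e IV--shaped equations (\ref{eq:+Tt}), (\ref{eq:-T3}), (\ref{eq:S}) form the heart of the theorem. For (\ref{eq:+Tt}) the plan is to multiply $\D_+\T_t = \D_+^2 X_t$ by $2\hat F X_3 = 2(X_t - 4n)X_3$ and rewrite the $\{\xi+\tilde A, G\}$ contribution using the scalar projection of the first integral (\ref{eq:I2}); this turns the $\tilde A$-dependent piece into the combination $(\D_+X_t)^2 - G_t^2$ while the $\xi$-dependent piece assembles into $R_tG_3 - \xi_+X_3G_t$. Equation (\ref{eq:-T3}) is its $(\D_+\!\leftrightarrow\!\D_-,\,\xi_+\!\leftrightarrow\!\xi_-)$-mirror produced from (\ref{eq:D-T}) in exactly the same way. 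For (\ref{eq:S}), a closed PDE purely in $X_3$ and $r_t, r_3$, I would start from the identity $X_3 = \D_-r_t = \D_+r_3$, apply $\D_\mp$ to each representative and combine with the trace-of-square consequences of (\ref{eq:r+})--(\ref{eq:r-}): the $[\sigma\xi, r]$ commutators supply precisely the $(X_3 + \xi_+\xi_-)X_3^2$ term, and the bilinear $q\hat{\tilde p}$, $p\hat{\tilde q}$ products reorganise into $R_tR_3$ through the quadratic first integrals (\ref{eq:WU})--(\ref{eq:UW}).

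The main obstacle is not any single deep step but the combinatorial bookkeeping of signs, $\sigma$-factors and $\xi_\pm$-coefficients through each scalar projection: one has to verify that the $A_\pm$- and $A^2$-dependent pieces collapse precisely into the three-term Painlev\'e IV shape $X\D^2X - (\D X)^2 + \dots$ on the right-hand sides of (\ref{eq:+Tt})--(\ref{eq:-T3}). A useful running sanity check is the involution $(\D_+,\xi_+)\leftrightarrow(\D_-,\xi_-)$ (with the accompanying $\sigma$-twist corresponding to the $1\leftrightarrow 2$ matrix swap), which must map (\ref{eq:+Tt}) to (\ref{eq:-T3}) and preserve (\ref{eq:S}) and (\ref{eq:DDA}); any asymmetry in the intermediate manipulations immediately flags a sign or factor error to be tracked down.
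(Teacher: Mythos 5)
Your overall architecture --- scalar projections of the matrix system of Theorem \ref{thm:Xsys} in the diagonal/anti-diagonal splitting, with the first integrals used to eliminate auxiliaries --- is the paper's (Appendix B), and your routes to (\ref{eq:Tt3}) and to the first equality of (\ref{eq:DDA}) are essentially correct. But two steps are genuinely missing. First, for (\ref{eq:+Tt}) and (\ref{eq:-T3}): taking the trace of the diagonal part of (\ref{eq:D+T}) and substituting (\ref{eq:I1a}) leaves the quadratic terms $X_3(X_3^2+4(X_+)_a^2)$ (resp.\ $X_3(X_3^2+4X_a^2)$), and passing to the stated form requires the identities (\ref{eq:+B})--(\ref{eq:-B}), namely $2\hat F(4(X_+)_a^2+X_3^2)=\T_t^2-G_t^2$ and $2\hat F(4X_a^2+X_3^2)=\T_3^2-G_3^2$. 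These do \emph{not} come from ``the scalar projection of the first integral (\ref{eq:I2})'': in the paper they are extracted from the degeneracy conditions $\det(X_+\pm X_-)=\det(\T\pm G)=0$ of (\ref{eq:det}) and the trace identity (\ref{eq:trUW}), through the elimination chain (\ref{eq:Ga2})--(\ref{eq:aTG}). Your proposal never invokes the determinant conditions or the $\hat U\hat W$ trace lemma, so the mechanism producing $(\D_+X_t)^2-G_t^2$ is absent.

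Second, your route to (\ref{eq:S}) would not go through as described. The paper obtains (\ref{eq:S}) as the cross-derivative compatibility condition $\D_+\D_-=\D_-\D_+$ applied to the first-order anti-diagonal subsystem for $(X_+)_a$ and $X_a$, i.e.\ to the pairs (\ref{eq:+X+a}), (\ref{eq:-X+a}) and (\ref{eq:+X-a}), (\ref{eq:-X-a}). The term $R_tR_3$, with $R_t=4r_t-2\xi_+\D_+r_t$ containing the \emph{undifferentiated} $r_t$, enters only through the integrated diagonal first integral (\ref{eq:Gd}) (via $G_t=R_t+\xi_-X_3+A_+$ appearing as a coefficient in (\ref{eq:+X+a}) and (\ref{eq:-X-a})); it cannot be generated by ``trace-of-square consequences'' of (\ref{eq:r+})--(\ref{eq:r-}) combined with (\ref{eq:WU})--(\ref{eq:UW}), which yield only relations among $r$, its first derivatives, and traces of $\hat U\hat W$. (Also, $X_3=-2\D_+r_3=-2\D_-r_t$, not $\D_-r_t=\D_+r_3$.) Finally, the second equality of (\ref{eq:DDA}) still requires the consistency relation $\D_+\D_-A=(X_3+\xi_+\xi_-)A$ obtained from (\ref{eq:+X-a}) and (\ref{eq:-X+a}) together with (\ref{eq:ra}) and (\ref{eq:XXa}); your sketch stops short of this.
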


\ni For the proof, see Appendix B and formulas (\ref{eq:+B}) and (\ref{eq:-B}) below.

\par 
It is rather tricky to integrate these equations directly. There is, however, a simpler way. Additional integrals can be most readily seen from the original defining variables:

$$
\det(\hat q) = \det(\hat{\tilde q}) = \det(\hat p) = \det(\hat{\tilde p}) = \det(\hat U) = \det(\hat W) = 0,
$$

\noindent so

\be
\det(X_+ \pm X_-) = \det(\T \pm G) = 0.   \label{eq:det}
\ee

\noindent There are also remarkable identities here:

\begin{lemma}

\be
Tr\left((\hat U\hat W)^k\right) = Tr\left((\hat W\hat U)^k\right).   \label{eq:trUW}
\ee

\end{lemma}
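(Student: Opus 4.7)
The plan is quite short, because the identity is essentially the cyclic invariance of trace. For any two square matrices $A,B$ of the same size,
$$
\text{Tr}((AB)^k) = \text{Tr}\bigl(A(BA)^{k-1}B\bigr) = \text{Tr}\bigl((BA)^{k-1}BA\bigr) = \text{Tr}((BA)^k),
$$
by one application of $\text{Tr}(XY)=\text{Tr}(YX)$. Specializing $A=\hat U$, $B=\hat W$ immediately gives (\ref{eq:trUW}). So the ``proof'' in its minimal form is a single line.

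What I expect to be the actual point of the lemma (and which I would include for the reader's benefit) is a stronger structural observation: in this setting $\hat U$ and $\hat W$ genuinely commute, so $(\hat U\hat W)^k=(\hat W\hat U)^k$ at the matrix level, not only after taking traces. To see this, I would use the definitions $\hat U=\sqrt{n/2}\,\hat{\bar U}$ with $\hat{\bar U}=\Theta-\Theta\bar u_n\Theta$, and similarly $\hat W=\sqrt{n/2}\,\hat{\bar W}$ with $\hat{\bar W}=\Theta+\Theta\hat{\bar w}_n\Theta$. The key elementary identity is that for any $2\times 2$ matrix $M$ one has $\Theta M\Theta=s(M)\,\Theta$, where $s(M)$ denotes the sum of the entries of $M$. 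Hence both $\hat U$ and $\hat W$ are scalar multiples of the rank-one matrix $\Theta$. Since $\Theta^{2}=2\Theta$, any two such matrices commute, so $\hat U\hat W=\hat W\hat U$, and the equality of traces of all powers is automatic.

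There is no serious obstacle. The statement is flagged as ``remarkable'' in the paper presumably because, at the level of the TW-type matrix system where $\hat U$ and $\hat W$ enter a priori as generic $2\times 2$ matrices, commutativity is not visible; one only sees it once the $\Theta$-sandwich structure is brought forward (as in the previous Lemma \ref{lem:tr}, which by the same mechanism sharpens $\text{Tr}(\hat U\hat W)=\text{Tr}\hat U\cdot\text{Tr}\hat W$). Worst case, one can proceed with the cyclic-trace argument alone and never invoke the $\Theta$-structure; that already suffices to establish (\ref{eq:trUW}) as stated.
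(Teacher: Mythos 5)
Your proof is correct, but it takes a genuinely different --- and substantially more elementary --- route than the paper's. The paper proves (\ref{eq:trUW}) by induction on $k$, with base case the identity $\text{Tr}(\hat U\hat W)=\text{Tr}\hat U\cdot\text{Tr}\hat W$ of Lemma \ref{lem:tr}, itself established by comparing the TW first integrals (\ref{eq:WU}), (\ref{eq:UW}) with the boundary-Toda equation; the inductive step implicitly rests on $\det(\hat U\hat W)=0$, so that by Cayley--Hamilton $(\hat U\hat W)^2=\text{Tr}(\hat U\hat W)\,\hat U\hat W$ and in fact $\text{Tr}((\hat U\hat W)^k)=(\text{Tr}\hat U\,\text{Tr}\hat W)^k$. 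You instead observe that the identity as literally stated is an instance of cyclic invariance of the trace, valid for arbitrary square matrices; that one-line argument is complete and needs none of the preceding machinery. Your second observation is also correct and is the sharper structural point: since $\Theta M\Theta=s(M)\Theta$ for any $2\times 2$ matrix $M$, both $\hat U=\sqrt{n/2}\,\Theta-\Theta u_n\Theta$ and $\hat W$ are scalar multiples of the rank-one matrix $\Theta$, hence commute, so $(\hat U\hat W)^k=(\hat W\hat U)^k$ holds at the matrix level and, with $\Theta^2=2\Theta$, one recovers Lemma \ref{lem:tr} and the multiplicativity of all the traces without any appeal to the Toda comparison. What the paper's longer route buys --- and what is presumably the real point of flagging these identities as ``remarkable'' --- is the dynamical relation $\text{Tr}(\hat U\hat W)=4(1-c^2)U_nW_n$ tying the TW variables to the ASvM ones; your argument does not produce that relation, but it is not needed for the statement under review.
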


\begin{proof}

It follows by induction from the case $k=1$ proved in the previous section by comparison with the results from Toda lattice.

\end{proof}

\noindent Then we retrieve

\begin{lemma}

It follows from eq.~(\ref{eq:det}) that (besides eq.~(\ref{eq:XXa}) in Appendix B which arises this way also)

\be
4((X_+)_a^2 + X_a^2) = J \equiv X_t^2 - X_3^2 - 4\sigma^2(A^2)^2,    \label{eq:B+}
\ee

\be
4(\T_a^2 + G_a^2) = \T_t^2 - \T_3^2 + G_t^2 - G_3^2,    \label{eq:TG2+}
\ee

\be
2\{\T_a, G_a\} = G_t\T_t - G_3\T_3,   \label{eq:TGaa}
\ee

\noindent and from (\ref{eq:trUW}) one can get once again both eqs.~(\ref{eq:XXa}) and (\ref{eq:B+}), but also another first integral:

\be
4(G_a^2 - \T_a^2) = \T_t^2 + \T_3^2 - G_t^2 - G_3^2 - 4X_t^2(X_t - 4n).   \label{eq:TG2-}   
\ee

\end{lemma}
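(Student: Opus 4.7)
The proof plan rests on two elementary facts for $2\times 2$ matrices: (i) Cayley--Hamilton applied to any $M$ with $\det M = 0$ gives $M^2 = (\text{Tr}\,M)\,M$, hence $\text{Tr}(M^2) = (\text{Tr}\,M)^2$; (ii) splitting $M = M_d + M_a$ into diagonal and anti-diagonal parts yields $\text{Tr}(M^2) = \frac{1}{2}(M_t^2 + M_3^2) + 2 M_a^2$, where $M_a^2$ denotes the scalar such that the matrix $M_a \cdot M_a$ equals $M_a^2 \cdot I$.

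First I would apply (i) to $M = X_+ \pm X_-$, both singular by~(\ref{eq:det}). Since $\text{Tr}\,X_- = 0$, $\text{Tr}(X_+ \pm X_-) = X_t$, so $\text{Tr}((X_+ \pm X_-)^2) = X_t^2$. Adding the two sign choices eliminates the cross-term and leaves $\text{Tr}(X_+^2) + \text{Tr}(X_-^2) = X_t^2$; inserting (ii) on both sides, and using $(X_-)_d = -\sigma A^2 \sigma_3$ so that $(X_-)_3 = -2\sigma A^2$ and $\frac{1}{2}((X_-)_3)^2 = 2\sigma^2(A^2)^2$, produces~(\ref{eq:B+}). Subtracting the sign choices yields $\text{Tr}(X_+ X_-) = 0$; decomposing into the diagonal piece $\text{Tr}((X_+)_d (X_-)_d) = -X_3 \sigma A^2$ and the anti-diagonal piece $\text{Tr}((X_+)_a X_a)$ recovers~(\ref{eq:XXa}).

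Exactly the same procedure applied to $\T \pm G$ (also singular by~(\ref{eq:det})) gives $\text{Tr}(\T^2) + \text{Tr}(G^2) = \T_t^2 + G_t^2$ and $\text{Tr}(\T G) = \T_t G_t$. The first, after the (ii)-splitting, becomes~(\ref{eq:TG2+}); the second, after writing $\text{Tr}(\T G) = \frac{1}{2}(\T_t G_t + \T_3 G_3) + \{\T_a, G_a\}$ (the anti-diagonal anti-commutator being a scalar multiple of $I$) and solving for the scalar, yields~(\ref{eq:TGaa}).

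The main obstacle is~(\ref{eq:TG2-}): the sum and difference of $\det(\T \pm G) = 0$ supply only $\T_a^2 + G_a^2$ and $\{\T_a, G_a\}$, never the difference $G_a^2 - \T_a^2$. Here I would invoke~(\ref{eq:trUW}), or equivalently Lemma~\ref{lem:tr}, which in conjunction with the identity $\Theta M \Theta = s(M)\Theta$ valid for $2\times 2$ $\Theta$ forces $\hat U \hat W = \frac{1}{2}\text{Tr}(\hat U \hat W)\Theta$; combined with the first integrals~(\ref{eq:WU}), (\ref{eq:UW}) this gives $\hat{\tilde p}\hat q = \hat{\tilde q}\hat p = (n - \frac{1}{2}\text{Tr}(\hat U \hat W))\Theta$. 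The plan is to evaluate $\text{Tr}(G^2 - \T^2) = \text{Tr}((G+\T)(G-\T)) = 16\,\text{Tr}(\hat p \hat U \hat{\tilde p}\,\hat q \hat W \hat{\tilde q})$ by repeated absorption of $\Theta$'s through the identities $\Theta \hat U = 2\hat U$, $\Theta \hat W = 2\hat W$, $\hat p \Theta = 2\hat p$, collapsing everything to a multiple of $(n - \frac{1}{2}\text{Tr}(\hat U \hat W))^2\,\text{Tr}(\hat U \hat W)$. Using $X_t = 4n - 2\text{Tr}(\hat U \hat W)$ (immediate from $X_t = \text{Tr}(X_+)$ together with the first integrals), this reduces to $-2X_t^2(X_t - 4n)$. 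Combining with~(\ref{eq:TG2+}) and the (ii)-splitting of $\text{Tr}(G^2) - \text{Tr}(\T^2)$ then isolates $G_a^2 - \T_a^2$ and produces~(\ref{eq:TG2-}).
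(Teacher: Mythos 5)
Your proposal is correct, and for the first three identities it is exactly the argument the paper intends (the paper states this lemma without writing out a proof): $\det M=0$ for a $2\times2$ matrix gives $\mathrm{Tr}(M^2)=(\mathrm{Tr}\,M)^2$ via Cayley--Hamilton, and the sum/difference over the two sign choices in $\det(X_+\pm X_-)=\det(\T\pm G)=0$, combined with the diagonal/anti-diagonal trace splitting and $(X_-)_d=-A^2\sigma$, yields (\ref{eq:B+}), (\ref{eq:XXa}), (\ref{eq:TG2+}) and (\ref{eq:TGaa}) precisely as you compute.

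For (\ref{eq:TG2-}) your route is slightly different from what the paper's wording suggests but is valid and arguably cleaner. The paper attributes this identity to the trace relations (\ref{eq:trUW}); you instead exploit the structural fact that $\hat U=\sqrt{n/2}\,\Theta-\Theta u_n\Theta$ and $\hat W$ are \emph{scalar multiples of} $\Theta$ (since $\Theta M\Theta=s(M)\Theta$ in the $2\times2$ case), which makes (\ref{eq:trUW}) and Lemma \ref{lem:tr} automatic and lets you collapse $\mathrm{Tr}\bigl((G+\T)(G-\T)\bigr)=16\,\mathrm{Tr}(\hat p\hat U\hat{\tilde p}\hat q\hat W\hat{\tilde q})$ using the first integrals (\ref{eq:WU}), (\ref{eq:UW}) and the $\Theta$-absorption identities. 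I checked the arithmetic: $\hat{\tilde p}\hat q=(X_t/4)\Theta$, $X_t=4n-2\,\mathrm{Tr}(\hat U\hat W)$, and the product indeed evaluates to $-2X_t^2(X_t-4n)$, from which the splitting identity gives (\ref{eq:TG2-}) exactly. (A minor cosmetic point: once you apply the splitting to $\mathrm{Tr}(G^2)-\mathrm{Tr}(\T^2)$ you do not actually need to invoke (\ref{eq:TG2+}) again, but this does not affect correctness.) The trade-off is that the paper's stated route keeps the derivation tied to the weaker, more invariant statement (\ref{eq:trUW}), which is the form one would hope generalizes beyond $2\times2$ matrix kernels, whereas your argument leans on the degeneracy of $\Theta$ specific to two coupled matrices; within the scope of this paper both are equally legitimate.
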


\noindent We will need three more auxiliary formulas, which are direct consequences of previously found first integrals (\ref{eq:I1a}), (\ref{eq:I2a}) in Appendix B:

\be
X_3^2G_a^2 = G_3^2(X_+)_a^2 - \T_t^2X_a^2 - \T_tG_3C_a,   \label{eq:Ga2}
\ee

\be
X_3^2\T_a^2 = \T_3^2(X_+)_a^2 - G_t^2X_a^2 - \T_3G_tC_a,   \label{eq:Ta2}
\ee

\be
X_3^2\{\T_a, G_a\} = 2(\T_3G_3(X_+)_a^2 - \T_tG_tX_a^2) - (\T_t\T_3 + G_tG_3)C_a.    \label{eq:TG}
\ee

\noindent First two of the last equations are obtained by taking square of eqs.~(\ref{eq:I1a}) and (\ref{eq:I2a}), respectively, and the third is their anti-commutator. Plugging (\ref{eq:B+}) into eq.~(\ref{eq:Ix}) of Appendix B simplifies it, giving

\be
\T_t\T_3 - G_tG_3 = 2X_3(X_t^2 - 4nX_t),   \label{eq:x}
\ee

\par Using these relations together with lemma 8, one can verify that all the higher-order equations are satisfied, see Appendix C. \\
\par One can easily eliminate all auxiliary variables but two: $A_+$, $A_-$, and get 

\begin{theorem}

The joint largest eigenvalue distribution for two coupled Gaussian matrices satisfies the system of PDE:

$$
\T_t\T_3 - G_tG_3 - 2X_tX_3\hat F = 0, \eqno(\ref{eq:x})
$$

\be
A_+A_- = 4\sigma^2(\D_+A^2\D_-A^2 - 2X_3(A^2)^2), \label{eq:Ax}
\ee

\be
(\T_t)^2 + (\T_3)^2 - G_t^2 - G_3^2 - 2\hat F(2X_3^2 + J) = 0, \label{eq:2+}
\ee

\be
A_+^2 + \sigma^2A_-^2 = 4\sigma^2(D_+^2 + D_-^2 + A^2J), \label{eq:A2+}
\ee

\be
\hat F(A_+^2 - 4\sigma^2(D_+^2 - X_3^2A^2)) = 2\sigma^2A^2(\T_t^2 - G_t^2), \label{eq:+}
\ee

\be
\hat F(A_-^2 - 4(\sigma^2D_-^2 - X_3^2A^2)) = 2A^2(\T_3^2 - G_3^2), \label{eq:-}
\ee

\be
\hat F(D_+A_- - D_-A_+) = A^2(\T_tG_3 - \T_3G_t), \label{eq:a}
\ee

\noindent with notations introduced:

$$
\hat F = X_t - 4n,  \ \ \ J = X_t^2 - X_3^2 - 4\sigma^2(A^2)^2, \ \ \ D_+ = \D_+ A^2, \ \ \ D_- = \D_-A^2.
$$

\noindent For the convenience of the reader we write out again the definitions of the variables entering the system above:

$$
G_t = H_t + A_+, \ \ \ G_3 = H_3 + A_-, \ \ \ H_t = 4r_t - 2\xi_+\D_+ r_t + \xi_-X_3,  \ \ \ H_3 = 4r_3 - 2\xi_-\D_- r_3 + \xi_+X_3,
$$

$$
\T_t = \D_+ X_t = \D_-X_3 - 2\D_+ A^2, \ \ \ \T_3 = \D_-X_t = \D_+ X_3 - 2\sigma^2\D_-A^2, \ \ \ \sigma^2 = (1-c)^2/(1+c)^2,
$$

$$
X_3 = -2\D_+ r_3 = -2\D_-r_t = -2\D\D_-\ln\tau_n^J.
$$

\end{theorem}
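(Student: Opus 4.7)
The plan is to obtain the seven equations as algebraic consequences of the ingredients already assembled: the matrix first integrals (\ref{eq:I1})--(\ref{eq:I3}), the diagonal integrals (\ref{eq:X-d})--(\ref{eq:Gd}), the derivative relations (\ref{eq:ra}), the scalar first integrals of the preceding lemma ((\ref{eq:B+}), (\ref{eq:TG2+}), (\ref{eq:TGaa}), (\ref{eq:TG2-})) together with the auxiliary formulas (\ref{eq:Ga2})--(\ref{eq:TG}), and equation (\ref{eq:x}) which is already in hand. The task is to eliminate the remaining anti-diagonal auxiliary scalars $\T_a^2,\ G_a^2,\ \{\T_a,G_a\},\ (X_+)_a^2,\ X_a^2$ and $C_a$ in favour of the surviving scalar variables $X_t, X_3, A^2, \T_t, \T_3, G_t, G_3, A_+, A_-$.

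Equation (\ref{eq:2+}) will follow by rewriting (\ref{eq:TG2-}) as $\T_t^2+\T_3^2-G_t^2-G_3^2 = 4(G_a^2-\T_a^2)+4X_t^2\hat F$, substituting $J = X_t^2 - X_3^2 - 4\sigma^2(A^2)^2$ into the desired right-hand side, and reducing to a scalar identity for $G_a^2-\T_a^2$. This identity is then extracted from (\ref{eq:Ga2})$-$(\ref{eq:Ta2}) after $C_a$ has been eliminated via the combination of (\ref{eq:TGaa}), (\ref{eq:TG}) and the sum rule (\ref{eq:TG2+}).

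For the five remaining equations I would pass to the explicit $2\times 2$ anti-diagonal parametrization, writing $A$, $(X_+)_a$, $X_a$ in terms of their two off-diagonal entries. Solving (\ref{eq:ra}) expresses $D_+ = \D_+A^2$ and $\sigma D_-$ as antisymmetric bilinear combinations of these entries (the $\xi_\pm$-terms cancel), while (\ref{eq:A+A-}) expresses $A_+$ and $A_-$ as the corresponding symmetric bilinear combinations. Combining symmetric-minus-antisymmetric squares gives the key algebraic identities
\[
A_+^2 - 4\sigma^2 D_+^2 = 16\sigma^2 A^2 (X_+)_a^2, \qquad A_-^2 - 4\sigma^2 D_-^2 = 16 A^2 X_a^2 .
\]
Substituting these into the left-hand sides of (\ref{eq:+}) and (\ref{eq:-}) reduces each to the scalar identity $\T_t^2-G_t^2 = 2\hat F(4(X_+)_a^2 + X_3^2)$, respectively $\T_3^2-G_3^2 = 2\hat F(4X_a^2 + X_3^2)$, which again come from (\ref{eq:Ga2}), (\ref{eq:Ta2}) with $C_a$ eliminated as above. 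Equation (\ref{eq:A2+}) is the weighted sum of the two key identities combined with (\ref{eq:B+}). Equations (\ref{eq:Ax}) and (\ref{eq:a}) come out of direct calculations of $A_+A_- - 4\sigma^2 D_+D_-$ and of $D_+A_- - D_-A_+$ in the same parametrization; the former reduces to $-8\sigma A^2\,C_a$, so (\ref{eq:Ax}) is equivalent to the scalar relation $C_a = \sigma X_3 A^2$, which is extracted by combining (\ref{eq:TG}) with (\ref{eq:TGaa}) and (\ref{eq:x}), while (\ref{eq:a}) is matched with $\T_t G_3 - \T_3 G_t$ using the matrix first integrals (\ref{eq:I1a})--(\ref{eq:I2a}) from Appendix B.

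The main obstacle is the clean elimination of $C_a$ together with the individual squares $(X_+)_a^2, X_a^2$: the three scalar relations (\ref{eq:Ga2}), (\ref{eq:Ta2}), (\ref{eq:TG}) are linear in these three unknowns but their coefficients depend nonlinearly on the surviving scalar variables, so the elimination must be orchestrated with the trace constraints (\ref{eq:TG2+}), (\ref{eq:TGaa}), (\ref{eq:TG2-}), (\ref{eq:B+}) and with equation (\ref{eq:x}). Once this bookkeeping is arranged, each of the seven equations of the theorem follows by matching coefficients, with careful attention to signs and factors of $\sigma$.
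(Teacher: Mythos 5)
Your overall route is the same as the paper's: the ``key algebraic identities'' you write down are precisely (\ref{eq:A+}) and (\ref{eq:A-}) of Appendix~B (the paper gets them from the general anti-diagonal identity $4A_a^2B_a^2=\{A_a,B_a\}^2-[A_a,B_a]^2$ rather than from an explicit entrywise parametrization, but that is the same computation), the reduction of (\ref{eq:+}), (\ref{eq:-}) to the scalar relations $2\hat F(4(X_+)_a^2+X_3^2)=\T_t^2-G_t^2$ and $2\hat F(4X_a^2+X_3^2)=\T_3^2-G_3^2$ is exactly the paper's (\ref{eq:+B}), (\ref{eq:-B}), the weighted sum giving (\ref{eq:A2+}) is the paper's combination $(\ref{eq:A+})+\sigma^2(\ref{eq:A-})$ with (\ref{eq:B+}), and the elimination of $C_a$ you flag as the main obstacle is carried out in the paper by forming $\T_t\cdot(\ref{eq:TG})-G_t\cdot(\ref{eq:Ga2})$ and $-\T_3\cdot(\ref{eq:TG})+G_3\cdot(\ref{eq:Ta2})$, simplifying with (\ref{eq:x}) to (\ref{eq:GX+a}), (\ref{eq:GX-a}), and then comparing with (\ref{eq:TG}) to obtain $\T_tG_3-\T_3G_t=4\hat FC_a$. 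So the architecture is right.

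There is, however, one concrete misstep: your treatment of (\ref{eq:Ax}). The cross term in $A_+A_- - 4\sigma^2D_+D_-$ is \emph{not} proportional to the commutator invariant $C_a=\sigma_3[(X_+)_a,X_a]$; writing $A_+ \propto \{A,(X_+)_a\}$-type symmetric bilinears and $D_\pm \propto$ antisymmetric ones, the difference of products produces the \emph{anticommutator} $\{(X_+)_a,X_a\}$, and the identity you need is (\ref{eq:XXa}), $\{(X_+)_a,X_a\}=(\sigma_3\sigma)X_3A^2$, which is already one of the first integrals and requires no further extraction. The relation you propose to prove instead, $C_a=\sigma X_3A^2$, is false in general: the correct statement about $C_a$ is (\ref{eq:Ca}), $4A^2C_a=A_-D_+-A_+D_-$, and $C_a$ is only pinned down dynamically through $\T_tG_3-\T_3G_t=4\hat FC_a$. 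If you pursued your stated route for (\ref{eq:Ax}) literally you would be trying to derive a non-identity from (\ref{eq:TG}), (\ref{eq:TGaa}) and (\ref{eq:x}) and would get stuck; the fix is simply to use (\ref{eq:XXa}) for the anticommutator and reserve $C_a$ for equation (\ref{eq:a}), where it is eliminated between $\T_tG_3-\T_3G_t=4\hat FC_a$ and (\ref{eq:Ca}) exactly as you describe. With that correction the proposal matches the paper's proof.
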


\begin{proof} 
We already have equations (\ref{eq:x}) and (\ref{eq:Ax}) (for the last see lemma 11 in Appendix B), equation (\ref{eq:A2+}) is the result of taking the combination of equations $(\ref{eq:A+}) + \sigma^2(\ref{eq:A-})$ from Appendix B and using (\ref{eq:B+}) to eliminate $(X_+)_a^2 + X_a^2$. After expressing $G_a^2$ and $\T_a^2$ from equations (\ref{eq:TG2+}) and (\ref{eq:TG2-}):

\be
4G_a^2 = \T_t^2 - G_3^2 - 2X_t^2\hat F,   \label{eq:E+}
\ee

\be
4\T_a^2 = -\T_3^2 + G_t^2 + 2X_t^2\hat F,   \label{eq:E-}
\ee

\noindent and plugging them into (\ref{eq:Ga2}) and (\ref{eq:Ta2}), respectively, the last become:

$$
G_3^2(4(X_+)_a^2 + X_3^2) - \T_t^2(4X_a^2 + X_3^2) = 4\T_tG_3C_a - 2X_3^2X_t^2\hat F,   \eqno(\ref{eq:Ga2})
$$

$$
\T_3^2(4(X_+)_a^2 + X_3^2) - G_t^2(4X_a^2 + X_3^2) = 4\T_3G_tC_a + 2X_3^2X_t^2\hat F.   \eqno(\ref{eq:Ta2})
$$

\noindent Applying (\ref{eq:I1a}), (\ref{eq:I2a}) and (\ref{eq:x}) transforms equation (\ref{eq:TG}) into



$$
\T_3G_3(4(X_+)_a^2 + X_3^2) - \T_tG_t(4X_a^2 + X_3^2) = 4(\T_t\T_3 - X_3X_t(X_t - 4n))C_a.   \eqno(\ref{eq:TG})
$$

\noindent Then we make combinations, $\T_t\cdot(\ref{eq:TG}) - G_t\cdot(\ref{eq:Ga2})$, which gives, after using (\ref{eq:x}),

$$
2X_3X_t\hat FG_3(4(X_+)_a^2 + X_3^2) = 4X_3X_t\hat F\T_tC_a + 2X_3^2X_t^2\hat FG_t,
$$

\noindent and $-\T_3\cdot(\ref{eq:TG}) + G_3\cdot(\ref{eq:Ta2})$, giving, after (\ref{eq:x}) is applied,

$$
2X_3X_t\hat FG_t(4X_a^2 + X_3^2) = 2X_3^2X_t^2\hat FG_3 - 4X_3X_t\hat F\T_3C_a.
$$

\noindent They are obviously simplified into

\be
G_3(4(X_+)_a^2 + X_3^2) = 2\T_tC_a + X_3X_tG_t,    \label{eq:GX+a}
\ee

\be
G_t(4X_a^2 + X_3^2) = X_3X_tG_3 - 2\T_3C_a.    \label{eq:GX-a}
\ee
 
\noindent Their combination $\T_3\cdot(\ref{eq:GX+a}) - \T_t\cdot(\ref{eq:GX-a})$, compared with (\ref{eq:TG}), means that

$$
4\T_t\T_3C_a + X_3X_t(\T_3G_t - \T_tG_3) =  4(\T_t\T_3 - X_3X_t(X_t - 4n))C_a,
$$
 
\noindent i.e.

\be
\T_tG_3 - \T_3G_t = 4\hat FC_a.    \label{eq:aTG}
\ee

\noindent Putting expression for $C_a$ from equation (\ref{eq:Ca}) of Appendix B in (\ref{eq:aTG}) proves (\ref{eq:a}). Multiplying (\ref{eq:GX+a}) and (\ref{eq:GX-a}) by $2\hat F$ and replacing $4\hat FC_a$ in them by the left-hand side of (\ref{eq:aTG}) yields

\be
2\hat F(4(X_+)_a^2 + X_3^2) = \T_t^2 - G_t^2 = \D_+X_t^2 - G_t^2,   \label{eq:+B}
\ee

\be
2\hat F(4X_a^2 + X_3^2) = \T_3^2 - G_3^2 = \D_-X_t^2 - G_3^2.   \label{eq:-B}
\ee

\noindent Adding (\ref{eq:+B}) and (\ref{eq:-B}) and using (\ref{eq:B+}) gives eq.~(\ref{eq:2+}), while using them to eliminate $(X_+)_a^2$ from eq.~(\ref{eq:A+}) and $X_a^2$ from (\ref{eq:A-}) leads, respectively, to eqs.~(\ref{eq:+}) and (\ref{eq:-}). 
\end{proof}

\par The system of equations obtained is still redundant since equations (\ref{eq:2+}), (\ref{eq:A2+}), (\ref{eq:+}) and (\ref{eq:-}) are in fact linearly dependent. Therefore The system of PDE in the previous theorem is equivalent to the one in the main corollary \ref{cor:main}.  












\section{Comparison with Toda lattice approach and new relations among different variables}   \label{sec:comp}

\par Let us compare with the results of Toda lattice approach in section \ref{sec:Toda}.  Recall that there we used functions


$$
F = UW = U_nW_n(1-c^2), \ \ \ \ \ G_+ = W\D_+ U - U\D_+ W, \ \ \ \ \ G_- =  W\D_- U - U\D_- W.
$$

\noindent Certain combination of equations (\ref{eq:Tt3}) of theorem \ref{theorem:4ord} from the previous section and (\ref{eq:x}) of the main corollary \ref{cor:main} is in fact exactly equivalent to the equation (\ref{eq:2}) from theorem \ref{thm:F4T} derived in section \ref{sec:Toda}. The combination is $2\hat F\cdot$(\ref{eq:Tt3})$ - $(\ref{eq:x}) as can be guessed by comparing its senior derivative terms of 4th and 3rd order with that of (\ref{eq:2}). It can be written as

$$
2\hat F\D_+\D_-X_t - \D_+X_t\D_-X_t + G_tG_3 - 2\hat F(\xi_+G_3 + \xi_-G_t) - 4X_3\hat F^2 = 0,
$$

\ni which clearly has the same form as (\ref{eq:2}). Comparing the terms in the two equations, we find simple correspondences:

$$
4F = \frac{\D_+r_t - \sigma^2\D_-r_3}{1 - \sigma^2} + 2n = -1/2(X_t - 4n) = -1/2\hat F,
$$

$$
G_+ = G_t/8,  \ \ \ \ G_- = G_3/8,
$$

$$
A_+ = 4((1-c^2)\prt_cr_t - \xi_+\sigma^2A^2), \ \ \ \ A_- = -4((1-c^2)\prt_cr_3 + \xi_-A^2),
$$

\noindent and so


$$
\sigma_3[r_a, \D_+ r_a] = -2c\prt_cr_t, \ \ \ \ \sigma_3[r_a, \D_- r_a] = 2c\prt_cr_3.   \eqno(\ref{eq:ccom})
$$

\ni These last two new important relations show that the main additional variables appearing in the TW approach to coupled (Gaussian) matrices -- the commutators on the left-hand side -- directly correspond to the main new objects appearing in the ASvM approach to the same problem -- the derivatives w.r.t.~the coupling $c$ (or w.r.t.~time $t$, recall $c = e^{-t}$) of the log-derivatives of the joint largest eigenvalue probability.
\par The Adler-van Moerbeke equation~\cite{AvM1}, formula (\ref{eq:AvM}) of section \ref{sec:Toda}, in the current variables reads:

$$
\D_+\frac{G_3}{\hat F} = \D_-\frac{G_t}{\hat F}    \eqno(\ref{eq:AvM})
$$


\section{The smallest complete set of PDE for two Gaussian coupled matrices}   \label{sec:fin4eq}
 
\par One can in fact eliminate the remaining auxiliary variables -- $A_+$ and $A_-$, which turned out to be directly related to the derivatives of $\ln\tau_n^J$ with respect to the coupling parameter $c$, and thus explicitly obtain PDE in terms of spectral endpoints only. To this end, let 

$$
F_t = \hat F + 2A^2, \ \ \ F_3 = \hat F + 2\sigma^2A^2, \ \ \ \Delta = F_tH_t^2 - F_3H_3^2, \ \ \ J = X_t^2 - X_3^2 - 4\sigma^2(A^2)^2,
$$

$$
J_+ = 2\sigma^2(2D_+^2 + A^2J), \ \ \ J_- = 2(2\sigma^2D_-^2 + A^2J), \ \ \ J_A = 4\sigma^2(D_+D_- - 2X_3(A^2)^2),
$$

$$
P_x = \T_t\T_3 - H_tH_3 - 2X_3X_t\hat F - J_A, \ \ \ P_t = \T_t^2 - H_t^2 - (2X_3^2+J)\hat F - J_+, \ \ \ P_3 = \T_3^2 - H_3^2 - (2X_3^2+J)\hat F - J_-,
$$

$$
P_+ = F_tP_t + F_3P_3, \ \ \ \ \ \ P_a = H_3^2P_t + H_t^2P_3 - 2H_tH_3P_x.
$$

\noindent Recall that 

$$
\hat F = X_t-4n,  \ \ \ \ \ \ D_+ = \D_+A^2,  \ \ \ \ \ D_- = \D_-A^2,
$$

\ni and introduce

$$
S_t = A^2\T_t - \hat FD_+, \ \ \ \ \ \ S_3 = A^2\T_3 - \hat FD_-, \ \ \ \ \ \ J_a = H_3\T_t - H_t\T_3.
$$

\bigskip
\par Then, expressing $A_+$, $A_-$ from the equations in the main corollary \ref{cor:main}, which are linear in them and their squares, and putting into the other ones, one gets four final equations:

\be
(H_tP_+ - 2F_3H_3P_x)(2F_tH_tP_x - H_3P_+) = 4\Delta^2J_A,  \label{eq:Axx}
\ee

\be
(H_tP_+ - 2F_3H_3P_x)^2 = 4\Delta(\Delta J_+ -2\sigma^2A^2P_a),  \label{eq:A+2}
\ee

\be
(2F_tH_tP_x - H_3P_+)^2 = 4\Delta(\Delta J_- + 2A^2P_a),  \label{eq:A-2}
\ee

\be
S_3(H_tP_+ - 2F_3H_3P_x) - S_t(2F_tH_tP_x - H_3P_+) =  2A^2\Delta J_a,  \label{eq:aa}
\ee

\noindent as one should since there are four independent senior derivatives involved here: 

$$
\T_t = \D_+X_t = \D_- X_3 - 2\D_+A^2, \ \ \T_3 = \D_-X_t = \D_+X_3 - 2\sigma^2\D_-A^2, \ \ D_+  = \D_+A^2, \ \ D_- = \D_-A^2.
$$

\noindent From the first three one derives:

\be
P_x^2 = \hat P_A + 2(H_t^2 - \sigma^2H_3^2)\frac{A^2P_a}{\Delta},  \label{eq:Px}
\ee

\be
P_+^2 = 4(F_tF_3\hat P_A + \Delta(F_tJ_+ - F_3J_-)) + 8(F_3^2H_3^2 - \sigma^2F_t^2H_t^2)\frac{A^2P_a}{\Delta},  \label{eq:P+}
\ee

\be
A^2P_a^2 = 2\Delta P_a(D_+^2 - \sigma^2D_-^2) + \Delta^2I_a,  \label{eq:P_a}
\ee

\noindent where

$$
\hat P_A = H_3^2J_+ + H_t^2J_- + 8\sigma^2H_tH_3(D_+D_- - 2X_3(A^2)^2),
$$

$$
I_a = J(2(D_+^2 + \sigma^2D_-^2) + A^2J) +16\sigma^2A^2X_3(D_+D_- - X_3(A^2)^2).
$$

\noindent We repeat the involved definitions once more for convenience:

$$
X_t = -2\D_+r_t - 2\sigma^2A^2 = -2\D_-r_3 - 2A^2, \ \ \ \ \ \ X_3 = -2\D_+r_3 = -2\D_-r_t, 
$$

$$
r_t = \D_+\ln\tau_n^J, \ \ \ \ \ \ r_3 = \D_-\ln\tau_n^J, \ \ \ \ \ \ A^2 = \frac{\D_+r_t - \D_-r_3}{1 - \sigma^2},
$$

$$
H_t = 4r_t - 2\xi_+\D_+r_t + \xi_-X_3, \ \ \ \ \ \ H_3 = 4r_3 - 2\xi_-\D_- r_3 + \xi_+X_3.
$$

\ni The combinations $P_x$ and $P_+$ turn into Painlev\'e IV equations in the one-matrix limit, while $P_a$, $P_A$, $I_a$ as well as $\Delta$ go to zero then. 
\par Since 

$$
\Delta = F_tH_t^2 - F_3H_3^2 \sim 2(1-\sigma^2)A^2(8G)^2,
$$

\noindent and, when $\xi_2 \to \infty$, $8G \equiv 4(r_t - \xi_1\prt_{\xi_1} r_t) = 4(r_3 - \xi_1\prt_{\xi_1} r_3)$, we can estimate

$$
P_a = H_3^2P_t + H_t^2P_3 - 2H_tH_3P_x \sim (8G)^2(C_1A^2 + C_2\prt_{\xi_1} A^2),
$$

\noindent where the quantities $C_1$ and $C_2$ remain finite in the next limit, so at fixed $c$ (or $\sigma^2$), as $\xi_2 \to \infty$,

$$
\frac{A^2P_a}{\Delta} \to 0.
$$

\ni Thus, equations (\ref{eq:Px}) and (\ref{eq:P+}) also tend to Painlev\'e IV in the one-matrix limit, while the other two -- (\ref{eq:P_a}) and (\ref{eq:aa}) -- become trivial.

\section{Conclusions}

The joint probability for the (largest) eigenvalues of two coupled Gaussian matrices with unitary invariant probability density satisfies a number of nonlinear integrable PDE some of which are coupled analogs of Painlev\'e IV equation for one-matrix GUE. The corresponding equations for the scaling limits of Airy process can be obtained from the system in the main corollary \ref{cor:main} of section \ref{sec:results}, but their consideration as well as due generalizations to several spectral endpoints (partly treated in section \ref{sec:Toda} though) and several coupled matrices are delegated to a forthcoming work.
\par The matrix kernel approach appears in a sense superior to the one based on Hirota bilinear identities and Virasoro constraints. The first allows to obtain at once {\it all} PDE satisfied by the joint gap probabilities while the last gives various subsets of the whole set of such PDE, depending on which nonlinear integrable equations in ``times" have been taken as the starting point. This situation is quite similar to what we recently found for the one-matrix case with several spectral gaps~\cite{Multipt1matr}.
\par For the Gaussian coupled ensemble simple relations among variables of different approaches are found here, and there is hope, supported by our analysis of single-matrix UE~\cite{UniUE}, that they again can be extended to other coupled RM. For some of them this is already shown here in section \ref{sec:unirel}. Comparison of {\it biorthogonal} function structures considered in~\cite{BeEyHa} and the matrix kernel approach of~\cite{TW2} is an interesting future direction. Biorthogonal case analogs of 3-term relations for functions in the resolvent kernel~\cite{UniUE}, which may combine finite difference recurrence and differentiation formulas, need to be found. They would present the coupled case from the (bi)orthogonal functions point of view, appearing the best for the  description of one-matrix ensembles. 
\par A Dyson Brownian motion (BM) model (or Dyson process) for all orthogonal-polynomial matrix ensembles can be constructed. It gives a class of coupled ensembles whose joint probability density satisfies a diffusion Fokker-Planck (FP) equation. The BM model can be mapped onto a quantum many-body problem and its transition density conveniently represented in terms of Green function for the quantum mechanical model, see e.g.~\cite{For05, Macedos}. Then, for unitary ensembles possessing the BM representation, one can get the matrix kernel of the form obtained in~\cite{TW2} for Hermite and Laguerre cases, i.e. express it in terms of the corresponding eigenfunctions and eigenvalues of the  effective Hamiltonian obtained from stationary FP operator by the above mapping. Then it is possible to give a TW-type derivation of PDE for arbitrary BM ensembles, if the differentiation formulas~\cite{TW1} are available, a situation just like in the one-matrix case. 
\par Current work also paves the way for obtaining new integrable PDE satisfied by gap probabilities of other ensembles with similar matrix kernel such as e.g.~Pfaffian ensembles. 

\bigskip
{\bf\large Acknowledgements} \\
Author is enormously grateful to C.A.Tracy for constant support and encouragement during several years as this work has been in progress. Useful discussions with A.Borodin, M.Bertola and S.-Y.Lee are also acknowledged. Author wishes to especially thank M.Adler for the critical discussion of the previous version of the paper and the referees for the suggestions which helped improve the text.
\par This work was done with partial NSF support under grant DMS-0906387 and VIGRE grant DMS-0636297.

\section*{Appendix A}

Proof of theorem \ref{thm:genT} is given here.
\par For further analysis we need more commutation relations:

$$
\A_0\frac{\tilde\A}{c} = \frac{\A}{c}(\A_0 + 1) - 2\A, \ \ \ \ \ \ \tilde\A_0\frac{\A}{c} = \frac{\A}{c}(\tilde\A_0 + 1) - 2\tilde\A.
$$

\ni The system of five equations we are considering apparently contains more unknowns than equations but we can enhance it. To this end, add up eq.~(\ref{eq:+t}) multiplied by $\A W$ and eq.~(\ref{eq:-t}) multiplied by $\A U$, which leads to the relation:

\be
2(\A W\A_0U - \A U\A_0W) = -2\A^2T\A F - \A\left(\frac{(\A F)^2 - G^2}{4F}\right).   \label{eq:hG}
\ee

\ni The ``dual" copy of it is the outcome of adding up eq.~(\ref{eq:+s}) multiplied by $\tilde\A W$ and eq.~(\ref{eq:-s}) multiplied by $\tilde\A U$:

\be
2(\tilde\A W\tilde\A_0U - \tilde\A U\tilde\A_0W) = -2\tilde\A^2T\tilde\A F - \tilde\A\left(\frac{(\tilde\A F)^2 - \tilde G^2}{4F}\right).   \label{eq:htG}
\ee

\noindent The expression on the left-hand side of (\ref{eq:hG}) can be written in two different ways:

$$
\A W\A_0U - \A U\A_0W = \A G_0 - (W\A\A_0U - U\A\A_0W) = -\A_0G + (W\A_0\A U - U\A_0\A W).
$$

\noindent Using the commutation relation

$$
\A_0\A = \A(\A_0 - 1),
$$

\ni one gets

$$
2(W\A\A_0U - U\A\A_0W) = (\A_0 + 1)G + \A G_0,
$$

\ni and so

$$
2(\A W\A_0U - \A U\A_0W) = \A G_0 - (\A_0 + 1)G.
$$

\ni The ``dual" of this also obviously holds:

$$
2(\tilde\A W\tilde\A_0U - \tilde\A U\tilde\A_0W) = \tilde\A\tilde G_0 - (\tilde\A_0 + 1)\tilde G.
$$

\ni Plugging the last expressions into the left-hand sides of eqs.~(\ref{eq:hG}) and (\ref{eq:htG}), respectively, one gets the needed additional equations 

\be
\A G_0 = (\A_0 + 1)G - 2\A^2T\A F - \A\left(\frac{(\A F)^2 - G^2}{4F}\right).   \label{eq:AG0}
\ee

\be
\tilde\A\tilde G_0 = (\tilde\A_0 + 1)\tilde G - 2\tilde\A^2T\tilde\A F - \tilde\A\left(\frac{(\tilde\A F)^2 - \tilde G^2}{4F}\right).   \label{eq:tAG0}
\ee

\noindent Using the found expressions for $G$, $\tilde G$ in terms of $T$ and commutation relations, we can express

$$
(\A_0 + 1)G = -(\A_0 + 1)(\A_0 - 1)\frac{\tilde\A T}{2c} = -(\A_0^2 - 1)\frac{\tilde\A T}{2c} =
$$

$$
= -1/2(\A_0(\tilde\A/c(\A_0 + 1) - 2\A) - \tilde\A/c)T = -1/2((\tilde\A/c(\A_0 + 1) - 2\A)(\A_0 + 1) - 2\A(\A_0 - 1) - \tilde\A/c)T =
$$

$$
= -1/2(\tilde\A/c(\A_0^2 + 2\A_0) - 4\A\A_0)T,
$$

\ni i.e.

\be
(\A_0 + 1)G = 2\A\A_0T - \frac{1}{2c}\tilde\A(\A_0^2 + 2\A_0)T.  \label{eq:A0G}
\ee

\ni Completely similarly,

\be
(\tilde\A_0 + 1)\tilde G = 2\tilde\A\tilde\A_0T - \frac{1}{2c}\A(\tilde\A_0^2 + 2\tilde\A_0)T.  \label{eq:tA0G}
\ee

\ni The last expressions together with (\ref{eq:0}) mean that the eqs. (\ref{eq:AG0}) and (\ref{eq:tAG0}) can be written as

$$
\A\left(G_0 + \frac{(\A F)^2 - G^2}{4F} - 2\A_0T\right) = -\tilde\A\left(\frac{1}{4c}(\A^2T)^2 + \frac{1}{2c}(\A_0^2 + 2\A_0)T\right),  \eqno(\ref{eq:AG0})
$$

$$
\tilde\A\left(\tilde G_0 + \frac{(\tilde\A F)^2 - \tilde G^2}{4F} - 2\tilde\A_0T\right) = -\A\left(\frac{1}{4c}(\tilde\A^2T)^2 + \frac{1}{2c}(\tilde\A_0^2 + 2\tilde\A_0)T\right). \eqno(\ref{eq:tAG0})
$$

\ni Now we eliminate $G_0$ and $\tilde G_0$ from them with the help of eqs. (\ref{eq:AF}) and (\ref{eq:tAF}) and obtain two higher-order, in fact 5-th order in $T$, PDE where everything can be expressed in terms of $T$ only, the ones in the theorem \ref{thm:genT}.

\section*{Appendix B}

\par Let us write out the diagonal and anti-diagonal parts of the matrix equations (\ref{eq:I1}), (\ref{eq:I2}) and (\ref{eq:I3}) separately. Equation (\ref{eq:I1}) gives, using (\ref{eq:X-d}),

$$
[(X_+)_a, G_a] = -2A^2\sigma\T_d + \{X_a, \T_a\},
$$

\noindent which splits into two scalar equations:

$$
[(X_+)_a, G_a] = -\T_tA^2\sigma, \ \ \  \{X_a, \T_a\} = \T_3A^2(\sigma_3\sigma).    
$$

\noindent Its anti-diagonal part reads:

\be
X_3G_a = G_3(X_+)_a + \T_t\sigma_3X_a.    \label{eq:I1a}
\ee

\noindent For the eq.~(\ref{eq:I2}) we get in the same way:

$$
[(X_+)_a, \T_a] = -2A^2\sigma G_d + \{X_a, G_a\},
$$

\noindent which splits into

$$
[(X_+)_a, \T_a] = -G_tA^2\sigma, \ \ \ \ \ \{X_a, G_a\} = G_3A^2(\sigma_3\sigma),    
$$

\noindent and the anti-diagonal part,

\be
X_3\T_a = \T_3(X_+)_a + G_t\sigma_3X_a.    \label{eq:I2a}
\ee

\noindent Combining the above equations implies some new ones, e.g.

\be
\{(X_+)_a, X_a\} = (\sigma_3\sigma)X_3A^2,    \label{eq:XXa}
\ee

\be
X_3[\T_a, G_a] = (G_tG_3 - \T_t\T_3)A^2\sigma,   \label{eq:TGd}
\ee

\be
X_3(\T_3G_a - G_3\T_a) = (\T_t\T_3 - G_tG_3)\sigma_3X_a.    \label{eq:TGa}
\ee

\noindent Taking the diagonal part of equation (\ref{eq:I3}) (using (\ref{eq:X-d}) again) gives two scalar equations, one of which is the above (\ref{eq:XXa}) again, while the other is

\be
[\T_a, G_a] = -\left(\frac{3X_t^2 + X_3^2}{2} - 8nX_t + 2(X_+)_a^2 + 2X_-^2\right)A^2\sigma.    \label{eq:I3d}
\ee

\noindent The anti-diagonal part of (\ref{eq:I3}) is

$$
\T_3\sigma_3G_a - G_3\sigma_3\T_a = (3\text{tr}X_d^2 - 8nX_t + 2X_-^2)X_a - X_3\sigma_3(X_3\sigma_3X_a + 2A^2\sigma(X_+)_a) + 2[(X_+)_a, X_a](X_+)_a,
$$

\noindent i.e.

\be
\T_3G_a - G_3\T_a = \left(\frac{3X_t^2 + X_3^2}{2} - 8nX_t + 6(X_+)_a^2 + 2X_-^2\right)\sigma_3X_a + 2(\sigma_3[(X_+)_a, X_a] - X_3A^2\sigma)(X_+)_a.   \label{eq:I3a}
\ee

\noindent One has an identity for $2\times2$ anti-diagonal matrices:

\be
2(X_+)_a^2X_a = (\{(X_+)_a, X_a\} - [(X_+)_a, X_a])(X_+)_a,   \label{eq:aId1}
\ee

\noindent which, together with (\ref{eq:XXa}) means

$$
2(X_+)_a^2\sigma_3X_a - (X_3A^2\sigma - \sigma_3[(X_+)_a, X_a])(X_+)_a = 0.   
$$

\noindent The last relation simplifies equation (\ref{eq:I3a}), it becomes

$$
\T_3G_a - G_3\T_a = \left(\frac{3X_t^2 + X_3^2}{2} - 8nX_t + 2(X_+)_a^2 + 2X_-^2\right)\sigma_3X_a.   \eqno(\ref{eq:I3a})
$$

\noindent Comparing this with eq.~(\ref{eq:TGa}) one gets a scalar relation

\be
\T_t\T_3 - G_tG_3 = X_3\left(\frac{3X_t^2 + X_3^2}{2} - 8nX_t + 2(X_+)_a^2 + 2X_-^2\right),   \label{eq:Ix}
\ee

\noindent which equally well follows from eqs.~(\ref{eq:TGd}) and (\ref{eq:I3d}). There is actually a lot of redundancy here and some relations arise in several different ways.
\par To summarize, we proved

\begin{lemma}

The full matrix first integrals reduce to four independent relations -- two scalar and two anti-diagonal matrix:

$$
\{(X_+)_a, X_a\} = (\sigma_3\sigma)X_3A^2,    \eqno(\ref{eq:XXa})
$$

$$
X_3G_a = G_3(X_+)_a + \T_t\sigma_3X_a,    \eqno(\ref{eq:I1a})
$$

$$
X_3\T_a = \T_3(X_+)_a + G_t\sigma_3X_a,    \eqno(\ref{eq:I2a})
$$






$$
\T_t\T_3 - G_tG_3 = X_3\left(\frac{3X_t^2 + X_3^2}{2} - 8nX_t + 2(X_+)_a^2 + 2X_-^2\right),   \eqno(\ref{eq:Ix})
$$

\noindent and all the other above relations follow by taking commutators or anti-commutators involving (\ref{eq:I1a}) and/or (\ref{eq:I2a}).

\end{lemma}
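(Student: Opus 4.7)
My plan is to prove the summarizing lemma by systematically decomposing each of the three full matrix first integrals (\ref{eq:I1}), (\ref{eq:I2}), (\ref{eq:I3}) into its diagonal and anti-diagonal parts relative to the $2\times 2$ block structure, and then extracting scalar information by taking $\mathrm{Tr}$ and $\mathrm{tr}(\sigma_3\,\cdot\,)$. Since $(X_-)_d = -A^2\sigma$ is a known diagonal first integral (see equation (\ref{eq:X-d})), the decomposition $X_- = -A^2\sigma + X_a$ is available from the start; the matrices $X_+$, $\T$, $G$ split analogously into $X_d$, $\T_d$, $G_d$ and the anti-diagonal pieces $(X_+)_a$, $\T_a$, $G_a$. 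The diagonal/anti-diagonal splitting is clean because commutators of diagonal and anti-diagonal $2\times 2$ matrices are anti-diagonal, while anti-commutators are diagonal, so the two parts of each identity separate into independent statements.

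First I would process (\ref{eq:I1}): the anti-diagonal part yields directly (\ref{eq:I1a}), while the diagonal part, after taking $\mathrm{Tr}$ and $\mathrm{tr}(\sigma_3\,\cdot\,)$, produces two scalar relations. The same procedure applied to (\ref{eq:I2}) gives (\ref{eq:I2a}) and two more scalar relations. For (\ref{eq:I3}), the diagonal part produces (\ref{eq:XXa}) and (\ref{eq:I3d}), while the anti-diagonal part produces the more complicated (\ref{eq:I3a}). At this point I would use the $2\times 2$ anti-diagonal identity (\ref{eq:aId1}), namely $2(X_+)_a^2 X_a = (\{(X_+)_a, X_a\} - [(X_+)_a, X_a])(X_+)_a$, together with the already-obtained (\ref{eq:XXa}), to simplify (\ref{eq:I3a}) into its short form in the text. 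Comparing this short form with (\ref{eq:TGa})—which itself is derived by combining (\ref{eq:I1a}) and (\ref{eq:I2a})—yields the scalar relation (\ref{eq:Ix}).

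The remaining task is to verify that the four listed relations really are sufficient, i.e.\ that every other derived equation (the scalar companions from (\ref{eq:I1}), (\ref{eq:I2}), (\ref{eq:I3d}), along with (\ref{eq:TGd}) and (\ref{eq:TGa})) can be recovered from (\ref{eq:XXa}), (\ref{eq:I1a}), (\ref{eq:I2a}), (\ref{eq:Ix}) by taking commutators or anti-commutators. Here the key observation is that the products $X_a(X_+)_a$, $(X_+)_a X_a$ are scalar multiples of $I$ and of $\sigma_3\sigma$ respectively (because any product of two $2\times 2$ anti-diagonal matrices is diagonal and in this case determined by the trace/$\sigma_3$-trace and (\ref{eq:XXa})). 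Multiplying (\ref{eq:I1a}) and (\ref{eq:I2a}) on the left or right by $(X_+)_a$ or $X_a$ therefore collapses to scalar identities, which give back (\ref{eq:TGd}) and the diagonal scalar companions, so no new independent information is lost.

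The main obstacle I anticipate is bookkeeping: the $2\times 2$ structure means that every matrix equation has both commutator and anti-commutator content, and the interplay with the scalar first integral $\{(X_+)_a, X_a\} = (\sigma_3\sigma)X_3 A^2$ (which encodes how anti-diagonal products reduce to diagonal data) is what makes the system redundant. The cleanest way to manage this is to fix once and for all the basis $\{I, \sigma_3\}$ for diagonal $2\times 2$ matrices and $\{\sigma_1, i\sigma_2\}$ for the anti-diagonal ones, write each decomposition in components, and check that each allegedly derivable identity is a linear combination (with coefficients depending on the four independent ones) obtained by multiplying by $(X_+)_a$ or $\sigma_3 X_a$. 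No genuinely new identities appear; everything reduces to the four independent constraints listed in the lemma.
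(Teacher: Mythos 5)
Your strategy coincides with the paper's: split each of (\ref{eq:I1})--(\ref{eq:I3}) into diagonal and anti-diagonal parts, resolve the diagonal parts into $\mathrm{Tr}$ and $\mathrm{tr}(\sigma_3\,\cdot\,)$ components, simplify the anti-diagonal part of (\ref{eq:I3}) via the identity (\ref{eq:aId1}) combined with (\ref{eq:XXa}), and extract (\ref{eq:Ix}) by comparing the result with (\ref{eq:TGa}). However, two of the algebraic facts you invoke are false as stated, and one of them would derail the computation if applied literally. (i) The anti-commutator of a diagonal with an anti-diagonal $2\times2$ matrix is \emph{anti-diagonal}, not diagonal: both $DA_a$ and $A_aD$ are anti-diagonal, and in fact $\{D,A_a\}=(\mathrm{Tr}\,D)\,A_a$. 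The correct grading rule is that products of two matrices of the same type (both diagonal or both anti-diagonal) are diagonal, while mixed products are anti-diagonal. This is not cosmetic: the term $\T_t\sigma_3X_a$ in (\ref{eq:I1a}) arises precisely from the anti-commutator $\{X_a,\T_d\}=\T_tX_a$ sitting in the \emph{anti-diagonal} part of $\{X_-,\T\}$; your stated rule would assign it to the diagonal part and yield a wrong (\ref{eq:I1a}). (ii) The products $X_a(X_+)_a$ and $(X_+)_aX_a$ are diagonal but \emph{not} scalar multiples of $I$: only the anti-commutator $\{(X_+)_a,X_a\}$ is proportional to $I$ (and fixed by (\ref{eq:XXa})), while the commutator is proportional to $\sigma_3$ with the generally nonzero coefficient $C_a$ appearing in (\ref{eq:Ca}). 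Your sufficiency check still goes through, but only because the relations to be recovered are generated specifically by commutators and anti-commutators of (\ref{eq:I1a}) and (\ref{eq:I2a}) with $(X_+)_a$ and $X_a$ --- each of which is a pure multiple of $\sigma_3$ or of $I$ and collapses using $\sigma_3A_a=-A_a\sigma_3$ together with (\ref{eq:XXa}); the ``key observation'' as you wrote it would silently drop the commutator contributions. With these two corrections your argument is exactly the paper's.
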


\par Next we use the identity (of which eq. (\ref{eq:aId1}) is an instance)

\begin{lemma}
For any anti-diagonal $2\times2$ matrices $A_a$ and $B_a$:

$$
2A_a^2\cdot B_a = (\{A_a, B_a\} - [A_a, B_a])A_a
$$

\noindent  and its scalar consequence

$$
4A_a^2B_a^2 = \{A_a, B_a\}^2 - [A_a, B_a]^2.
$$

\end{lemma}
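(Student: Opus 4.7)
The plan is to verify both identities by direct computation using an explicit parametrization, since any $2\times 2$ anti-diagonal matrix has only two free scalar entries. I would write
$$
A_a = \begin{pmatrix} 0 & \alpha \\ \alpha' & 0 \end{pmatrix}, \qquad B_a = \begin{pmatrix} 0 & \beta \\ \beta' & 0 \end{pmatrix},
$$
and first record the three building blocks that the identity is built from. A one-line matrix multiplication gives $A_a^2 = \alpha\alpha' \, I$ (and similarly $B_a^2 = \beta\beta' \, I$), so squares of anti-diagonal matrices are scalar. A second computation gives $\{A_a,B_a\} = (\alpha\beta' + \alpha'\beta)\,I$, which is also scalar, while $[A_a,B_a] = (\alpha\beta' - \alpha'\beta)\,\sigma_3$ is proportional to the third Pauli matrix.

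Next I would evaluate the right-hand side of the matrix identity. The key structural fact is that $\sigma_3$ \emph{anticommutes} with every anti-diagonal matrix, i.e.\ $\sigma_3 A_a = -A_a\sigma_3$, which is immediate from a direct check. Therefore $[A_a,B_a]\,A_a = (\alpha\beta' - \alpha'\beta)\sigma_3 A_a$ is again anti-diagonal, with entries obtained from those of $A_a$ by flipping the sign of the lower-left entry. Combining with the scalar term $\{A_a,B_a\}\,A_a = (\alpha\beta'+\alpha'\beta)A_a$ and reading off the two anti-diagonal entries, the upper-right entry becomes $[(\alpha\beta'+\alpha'\beta) - (\alpha\beta'-\alpha'\beta)]\,\alpha = 2\alpha\alpha'\beta$, and the lower-left becomes $[(\alpha\beta'+\alpha'\beta) + (\alpha\beta'-\alpha'\beta)]\,\alpha' = 2\alpha\alpha'\beta'$. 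Factoring out $2\alpha\alpha'$ yields $2\alpha\alpha' B_a = 2A_a^2 B_a$, which is the first identity.

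The scalar consequence then follows with essentially no further work. Since $\{A_a,B_a\}^2$, $[A_a,B_a]^2 = (\alpha\beta'-\alpha'\beta)^2\sigma_3^2 = (\alpha\beta'-\alpha'\beta)^2\, I$, and $A_a^2 B_a^2 = \alpha\alpha'\beta\beta'\,I$ are all scalar, the identity reduces to
$$
4\alpha\alpha'\beta\beta' \;=\; (\alpha\beta' + \alpha'\beta)^2 - (\alpha\beta' - \alpha'\beta)^2,
$$
which is the elementary algebraic identity $4xy = (x+y)^2 - (x-y)^2$ with $x=\alpha\beta'$, $y=\alpha'\beta$. Alternatively, one can multiply the matrix identity on the right by $B_a$ and use $\{A_a,B_a\} B_a - [A_a,B_a] B_a = 2\alpha\alpha' B_a^2$ after one more application of the anticommutation $\{B_a,\sigma_3\}=0$.

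There is no real obstacle here; the only care required is bookkeeping which products are proportional to $I$ versus to $\sigma_3$, and using $\sigma_3 A_a = -A_a\sigma_3$ consistently. The content of the lemma is entirely the degeneracy of the $2\times 2$ anti-diagonal subspace: squares and anticommutators of its elements land in the scalar subspace, commutators land in $\mathbb{R}\sigma_3$, and these two complementary parts recombine when multiplied back by an anti-diagonal matrix in exactly the way stated.
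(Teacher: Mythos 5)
Your proof is correct. The paper states this lemma without any proof (it is used as an elementary identity in Appendix B), so there is nothing to compare against; your explicit parametrization verifies both claims completely, and all the intermediate computations ($A_a^2=\alpha\alpha' I$, $\{A_a,B_a\}=(\alpha\beta'+\alpha'\beta)I$, $[A_a,B_a]=(\alpha\beta'-\alpha'\beta)\sigma_3$, and the anticommutation of $\sigma_3$ with anti-diagonal matrices) check out. As a minor remark, the argument compresses further: $\{A_a,B_a\}-[A_a,B_a]=2B_aA_a$ holds identically, so the right-hand side of the first identity is $2B_aA_a^2$, and the only special fact needed is that $A_a^2$ is scalar and hence commutes with $B_a$; the scalar consequence then follows by multiplying on the right by $B_a$ and writing $A_aB_a=\tfrac12(\{A_a,B_a\}+[A_a,B_a])$.
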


\noindent In particular, matrices $(X_+)_a$ and $X_a \equiv (X_-)_a$ can be expressed as

$$
2A^2(X_{\pm})_a = (\{A, X_{\pm}\} - [A, X_{\pm}])A.
$$

\noindent Besides, we have the facts:

\be
[A, (X_+)_a] = \D_+A^2\sigma_3, \ \ [A, X_a] = -\D_- A^2\sigma, \ \ \{(X_+)_a, X_a\} = (\sigma_3\sigma)X_3A^2,
\ee

\noindent first two are consequences of the eqs.~(\ref{eq:ra}) and the third is the eq.~(\ref{eq:XXa}). We apply all this to derive

\begin{lemma}

\be
16\sigma^2A^2(X_+)_a^2 = A_+^2 - 4\sigma^2(\D_+A^2)^2,      \label{eq:A+}
\ee

\be
16A^2X_a^2 = A_-^2 - 4\sigma^2(\D_- A^2)^2,       \label{eq:A-}
\ee

\be
4A^2C_a \equiv 4A^2\sigma_3[(X_+)_a, X_a] = -A_+\D_-A^2 + A_-\D_+A^2,        \label{eq:Ca}
\ee

$$
A_+A_- = 4\sigma^2(\D_+A^2\D_-A^2 - 2X_3(A^2)^2).        \eqno(\ref{eq:Ax})
$$

\noindent The last equations should be considered as four more scalar first integrals. 

\end{lemma}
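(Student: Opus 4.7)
The plan is to parametrize the anti-diagonal $2\times 2$ matrices by their off-diagonal entries and reduce all four identities to routine scalar algebra. Write
\[
A = \begin{pmatrix}0 & a_1 \\ a_2 & 0\end{pmatrix},\quad (X_+)_a = \begin{pmatrix}0 & x_1 \\ x_2 & 0\end{pmatrix},\quad X_a = \begin{pmatrix}0 & y_1 \\ y_2 & 0\end{pmatrix},
\]
so that $A^2 = a_1 a_2$ (as the scalar it is identified with) and for any anti-diagonal $A_a,B_a$ one has $\{A_a,B_a\} = (a_1b_2+a_2b_1)I$ and $[A_a,B_a] = (a_1b_2-a_2b_1)\sigma_3$. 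Let $\lambda = (1-c)/(1+c)$; then the matrix $\sigma$ equals $\lambda\sigma_3$, satisfies $\sigma^2 = \lambda^2 I$ (the scalar the author denotes $\sigma^2$), while $\sigma_3\sigma = \lambda I$, so $\tilde A = (\sigma_3\sigma)A = \lambda A$ is just a rescaled copy of $A$.

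First I would establish (A+) and (A-) by invoking the scalar identity of Lemma 11, namely $4A_a^2 B_a^2 = \{A_a,B_a\}^2 - [A_a,B_a]^2$. For (A+), take $A_a = A$, $B_a = (X_+)_a$: the commutator piece yields $[A,(X_+)_a]^2 = (\D_+A^2)^2$ by the stated fact; the anti-commutator is a scalar multiple of $I$, so its square equals one quarter of the square of its trace, and since $A_+ = \text{tr}\{\tilde A,(X_+)_a\} = \lambda\,\text{tr}\{A,(X_+)_a\}$ one gets $\{A,(X_+)_a\}^2 = A_+^2/(4\lambda^2) = A_+^2/(4\sigma^2)$. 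Multiplying through by $4\sigma^2$ delivers (A+). The proof of (A-) is identical, except that $[A,X_a]^2 = \sigma^2(\D_-A^2)^2$ (the extra $\sigma^2$ coming from $\sigma\cdot\sigma$) and $A_- = -\text{tr}\{A,X_a\}$ gives $\{A,X_a\}^2 = A_-^2/4$.

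The key step for (Ca) and (Ax) is to solve explicitly for the four bilinear entries. Combining $[A,(X_+)_a] = \D_+A^2\,\sigma_3$ with $\text{tr}\{\tilde A,(X_+)_a\} = A_+$ yields
\[
2a_1x_2 = \tfrac{A_+}{2\lambda} + \D_+A^2,\qquad 2a_2x_1 = \tfrac{A_+}{2\lambda} - \D_+A^2,
\]
and analogously, from $[A,X_a] = -\D_-A^2\,\sigma$ and $\text{tr}\{A,X_a\} = -A_-$,
\[
2a_1y_2 = -\tfrac{A_-}{2} - \lambda\D_-A^2,\qquad 2a_2y_1 = -\tfrac{A_-}{2} + \lambda\D_-A^2.
\]
Then $4A^2 C_a = 4a_1a_2(x_1y_2 - x_2y_1)$ equals the alternating combination $(2a_1y_2)(2a_2x_1) - (2a_1x_2)(2a_2y_1)$, which upon direct expansion collapses to $-A_+\D_-A^2 + A_-\D_+A^2$, proving (Ca). For (Ax), the symmetric combination $(2a_1x_2)(2a_2y_1) + (2a_1y_2)(2a_2x_1) = 4a_1a_2(x_1y_2+x_2y_1)$ equals $4\lambda X_3(A^2)^2$ by the given identity $\{(X_+)_a,X_a\} = (\sigma_3\sigma)X_3 A^2$; expanding the same product from the four explicit formulas yields $-\tfrac{A_+A_-}{2\lambda} + 2\lambda\D_+A^2\D_-A^2$, and equating the two expressions and solving for $A_+A_-$ produces (Ax).

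The main obstacle is purely bookkeeping of the factor $\lambda$: several given identities involve the matrix $\sigma$ (whose square is the scalar $\sigma^2 I = \lambda^2 I$), while $\sigma_3\sigma$ is $\lambda I$, and the definitions of $A_\pm$ in the corollary absorb different powers of $\lambda$ together with a sign for $A_-$. Keeping these factors straight is essentially the whole content of the argument; once the four scalar relations for $2a_ix_j$ and $2a_iy_j$ are in hand, each of (A$\pm$), (Ca), and (Ax) is a one-line algebraic identity.
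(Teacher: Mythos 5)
Your proposal is correct and follows essentially the same route as the paper: the paper derives (\ref{eq:A+})--(\ref{eq:Ax}) from the scalar identity $4A_a^2B_a^2=\{A_a,B_a\}^2-[A_a,B_a]^2$ together with the formulas for $[A,(X_\pm)_a]$ and $\{(X_+)_a,X_a\}$, using $2A^2(X_\pm)_a=(\{A,X_\pm\}-[A,X_\pm])A$ to express $(X_+)_a$ and $X_a$ through $A$ — which is exactly what your explicit entry-wise parametrization accomplishes. Your bookkeeping of the factors of $\lambda$ and the sign in $A_-$ checks out, and the cancellation of the cross terms in the symmetric product giving (\ref{eq:Ax}) is verified.
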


\par Let us substitute the expressions (\ref{eq:I1a}) and (\ref{eq:I2a}) into the right-hand sides of the equations of the system. 
The anti-diagonal parts of the equations, besides (\ref{eq:ra}), become:

\be
X_3\D_+(X_+)_a = \T_3(X_+)_a + (G_t - \xi_-X_3)\sigma_3X_a - 2X_3A^2\sigma\tilde A,    \label{eq:+X+a}
\ee

\be
\sigma_3\D_+X_a = X_3\tilde A - \xi_-(X_+)_a,   \label{eq:+X-a}
\ee

\be
X_3\D_+\T_a = (\xi_+G_3 + X_3(3X_t - 8n))(X_+)_a + \xi_+\T_t\sigma_3X_a + X_3G_t\tilde A,   \label{eq:+Ta}
\ee

\be
X_3\D_+G_a = \xi_+\T_3(X_+)_a + 2X_3A^2\sigma(X_+)_a + (\xi_+G_t + X_3^2)\sigma_3X_a + X_3\T_t\tilde A,   \label{eq:+Ga}
\ee

\be
\D_-(X_+)_a = -X_3\sigma_3A - \xi_+\sigma_3X_a,   \label{eq:-X+a}
\ee

\be
X_3\sigma_3\D_-X_a = (G_3 - \xi_+X_3)(X_+)_a + \T_t\sigma_3X_a + 2X_3A^2\tilde A,    \label{eq:-X-a}
\ee

\be
X_3\D_-\T_a = (\xi_-G_3 + X_3^2)(X_+)_a + (\xi_-\T_t + 2X_3A^2\sigma)\sigma_3X_a - X_3\T_3\sigma_3A.   \label{eq:-Ta}
\ee

\be
X_3\D_-G_a = \xi_-\T_3(X_+)_a + (\xi_-G_t + X_3(3X_t - 8n))\sigma_3X_a - X_3G_3\sigma_3A,   \label{eq:-Ga}
\ee

\noindent Applying the above anti-diagonal identities simplifies eqs.~(\ref{eq:+X+a}) and (\ref{eq:-X-a}), they turn into

$$
X_3\D_+(X_+)_a = \D_+X_3(X_+)_a + (G_t - \xi_-X_3 - A_+)\sigma_3X_a \equiv \D_+X_3(X_+)_a + R_t\sigma_3X_a,    \eqno(\ref{eq:+X+a})
$$

$$
X_3\sigma_3\D_-(X_-)_a = (G_3 - \xi_+X_3 - A_-)(X_+)_a + \D_-X_3\sigma_3X_a \equiv R_3(X_+)_a + \D_+X_3\sigma_3X_a,    \eqno(\ref{eq:-X-a})
$$

\noindent where

\be
R_t = 4r_t - 2\xi_+\D_+r_t, \ \ \ \ \ \ R_3 = 4r_3 - 2\xi_-\D_-r_3.    \label{eq:R}
\ee

\noindent As follows from these definitions,

\be
\D_-R_t = \xi_+\D_+X_3 - 2X_3, \ \ \ \ \ \ \D_+R_3 = \xi_-\D_-X_3 - 2X_3.   \label{eq:DR}
\ee

\noindent We now match the mixed second derivatives $\D_+\D_-(X_+)_a$ etc. found from the corresponding pairs of the above anti-diagonal equations. It turns out that taking both pairs of equations (\ref{eq:+X+a}), (\ref{eq:-X+a}) and (\ref{eq:+X-a}), (\ref{eq:-X-a}), as well as plugging (\ref{eq:I1a}) and (\ref{eq:I2a}) into the left-hand sides of (\ref{eq:+Ga}) and (\ref{eq:-Ta}), respectively, leads to the same higher-order scalar consistency equation:

\begin{lemma}

$$
X_3\D_+\D_-X_3 - \D_+X_3\D_-X_3 + R_tR_3 - (X_3 + \xi_+\xi_-)X_3^2 = 0.   \eqno(\ref{eq:S})
$$

\noindent This is in fact a 4-th order PDE for $\ln\tau_n^J$, since all the variables in (\ref{eq:S}) have been already expressed in terms of $\ln\tau_n^J$ before. 

\end{lemma}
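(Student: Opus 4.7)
The equation (\ref{eq:S}) should arise as the compatibility condition $\D_+\D_-(X_+)_a = \D_-\D_+(X_+)_a$, since the simplified system provides $X_3\D_+(X_+)_a$ in (\ref{eq:+X+a}) and $\D_-(X_+)_a$ in (\ref{eq:-X+a}) in closed form. My plan is to compute this mixed derivative two different ways from the anti-diagonal system and match the results.

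Concretely, on one side I would apply $\D_+$ to (\ref{eq:-X+a}), whose right-hand side is $-X_3\sigma_3 A - \xi_+\sigma_3 X_a$, using $\sigma_3\D_+ A = -(X_+)_a - \xi_+\tilde A$ from (\ref{eq:ra}) and $\sigma_3\D_+X_a = X_3\tilde A - \xi_-(X_+)_a$ from (\ref{eq:+X-a}). On the other side I would apply $\D_-$ to the simplified (\ref{eq:+X+a}), namely $X_3\D_+(X_+)_a = (\D_+X_3)(X_+)_a + R_t\sigma_3 X_a$, substituting $\D_-(X_+)_a$ back from (\ref{eq:-X+a}), $X_3\sigma_3\D_-X_a$ from the simplified $(\ref{eq:-X-a})$, and $\D_-R_t = \xi_+\D_+X_3 - 2X_3$ from (\ref{eq:DR}). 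Multiplying the first result by $X_3$ puts both sides in a common form: a linear combination of the four antidiagonal matrices $(X_+)_a$, $X_a$, $A$, $\tilde A$ with scalar coefficients built from $X_3$, $\D_\pm X_3$, $R_t$, $R_3$ and $\xi_\pm$.

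The final step is to extract a scalar identity. The antidiagonal subspace of $2\times 2$ matrices is two-dimensional, so the four matrices above are related; the relations I would use are $(X_-)_d = -A^2\sigma$ from (\ref{eq:X-d}) and the anticommutator identity $\{(X_+)_a, X_a\} = (\sigma_3\sigma)X_3 A^2$ of (\ref{eq:XXa}), together with the elementary antidiagonal-algebra identity of Lemma~10 expressing products of anti-diagonals in terms of their anticommutator and commutator. Reducing to the basis $\{(X_+)_a, X_a\}$ and equating the coefficient of $(X_+)_a$ on both sides should yield precisely (\ref{eq:S}); the coefficient of $X_a$ then produces a dependent relation.

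The main obstacle will be purely combinatorial: each differentiation generates on the order of ten terms, and several antidiagonal products such as $\sigma_3 A\cdot(X_+)_a$ and $\tilde A\cdot X_a$ must be expanded through the structural identities before the cancellations become visible. The author's additional claim that the same (\ref{eq:S}) emerges from the paired equations for $X_a$ in $(\ref{eq:+X-a})$, $(\ref{eq:-X-a})$, and from feeding (\ref{eq:I1a}), (\ref{eq:I2a}) into $(\ref{eq:+Ga})$, $(\ref{eq:-Ta})$, is then a strong consistency check rather than new content, so executing any single one of these four paths suffices.
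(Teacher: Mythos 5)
Your proposal matches the paper's own argument: the paper obtains (\ref{eq:S}) precisely as the scalar consistency condition from matching the mixed second derivatives $\D_+\D_-(X_+)_a$ (equivalently $\D_+\D_-X_a$) computed from the pairs (\ref{eq:+X+a}), (\ref{eq:-X+a}) and (\ref{eq:+X-a}), (\ref{eq:-X-a}), reducing to scalars via (\ref{eq:ra}), (\ref{eq:XXa}) and the anti-diagonal algebra, exactly as you describe. The alternative derivations you list as consistency checks are the same ones the paper mentions, so executing your first route suffices.
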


\noindent However, we are going to obtain 3-rd order equations in $\ln\tau_n^J$ and show that (\ref{eq:S}) is a consequence of them. 
\par Consider next the eqs.~(\ref{eq:+X-a}) and (\ref{eq:-X+a}). Applying (\ref{eq:ra}), one sees that both of them then reduce to the same anti-diagonal consistency equation:

\be
\D_+\D_-A = (X_3 + \xi_+\xi_-)A.   \label{eq:DDAa}
\ee

\ni Using it, compute

$$
\D_+\D_-A^2 = \{A, \D_+\D_-A\} + \{\D_+A, \D_-A\} = 2(X_3 + \xi_+\xi_-)A^2 + \{\D_+A, \D_-A\},
$$

\ni and, applying (\ref{eq:ra}) again and (\ref{eq:XXa}),

$$
\sigma^2\{\D_+A, \D_-A\} = -1/2\xi_+\sigma^2A_- - 1/2\xi_-A_+ + (X_3 - 2\xi_+\xi_-)\sigma^2A^2.
$$

\ni Equation (\ref{eq:DDA}) of theorem \ref{theorem:4ord} immediately follows from the last two formulas.  
 
\begin{lemma}

The remaining anti-diagonal equations (\ref{eq:+Ta}) and (\ref{eq:-Ga}), if we substitute (\ref{eq:I2a}) and (\ref{eq:I1a}), respectively, into their left-hand sides, result in three independent scalar equations (three rather than four because $\D_+\T_3 = \D_-\T_t = \D_+\D_-X_t$):

$$
\D_+\T_3 = \D_-\T_t = \xi_-G_t + \xi_+G_3 + X_3(3X_t - 8n),   \eqno(\ref{eq:Tt3})
$$

\be
X_3\D_+G_t = \D_+X_3G_t - R_t\T_3 + \xi_+X_3\T_t,    \label{eq:+Gt}
\ee

\be
X_3\D_-G_3 = \D_-X_3G_3 - R_3\T_t + \xi_-X_3\T_3.    \label{eq:-G3}
\ee

\end{lemma}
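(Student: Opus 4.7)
The plan is to substitute the first integral (\ref{eq:I2a}) into the left-hand side of (\ref{eq:+Ta}) and (\ref{eq:I1a}) into the left-hand side of (\ref{eq:-Ga}), and then use the already-simplified first-order anti-diagonal equations (\ref{eq:+X+a})--(\ref{eq:-X-a}) to eliminate all derivatives of anti-diagonal matrices, reducing everything to a linear combination of the generically linearly independent anti-diagonal basis matrices $(X_+)_a$, $\sigma_3 X_a$, and $\tilde A$ (respectively $A$). Matching coefficients in this two-dimensional space of $2\times2$ anti-diagonal matrices then yields the three claimed scalar equations.

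Concretely, for (\ref{eq:+Ta}) I would multiply through by $X_3$ and apply the Leibniz rule
\[
X_3\,\D_+\T_a = \D_+(X_3\T_a) - \D_+X_3\cdot\T_a,
\]
with $X_3\T_a$ replaced by the right-hand side of (\ref{eq:I2a}), namely $\T_3(X_+)_a + G_t\,\sigma_3 X_a$. In the resulting expression the first-order derivatives $X_3\D_+(X_+)_a$ and $X_3\sigma_3\D_+X_a$ are eliminated using the simplified (\ref{eq:+X+a}) and (\ref{eq:+X-a}). After substitution, the term $\T_3\,\D_+X_3\,(X_+)_a$ arising from Leibniz on the first piece of (\ref{eq:I2a}) cancels against the $\D_+X_3\cdot\T_a$ term (re-expanded via (\ref{eq:I2a})), and the coefficient of $\tilde A$ on both sides reduces to $G_t X_3^2$ and also drops out. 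What remains is an identity of the form
\[
\bigl[\,\D_+\T_3 - \xi_- G_t - \xi_+ G_3 - X_3(3X_t-8n)\,\bigr] X_3(X_+)_a
+ \bigl[\,X_3\D_+G_t - \D_+X_3\cdot G_t + R_t\T_3 - \xi_+ X_3\T_t\,\bigr]\sigma_3 X_a = 0.
\]
Generic linear independence of $(X_+)_a$ and $\sigma_3 X_a$ forces each bracket to vanish, giving (\ref{eq:Tt3}) and (\ref{eq:+Gt}).

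The manipulation of (\ref{eq:-Ga}) via (\ref{eq:I1a}) proceeds in an entirely parallel fashion, with $\D_-$ in place of $\D_+$, the simplified (\ref{eq:-X+a}) and (\ref{eq:-X-a}) in place of (\ref{eq:+X+a}) and (\ref{eq:+X-a}), and $A$ in place of $\tilde A$. Its $(X_+)_a$ coefficient delivers $\D_-\T_t = \xi_-G_t + \xi_+G_3 + X_3(3X_t-8n)$, which is the same scalar relation as the one obtained above once we invoke the identity $\D_+\T_3 = \D_-\T_t = \D_+\D_-X_t$ read off from (\ref{eq:TtX}) and (\ref{eq:T3X}); this is exactly the mechanism by which four scalar relations collapse to three. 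The $\sigma_3 X_a$ coefficient then produces (\ref{eq:-G3}).

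The main obstacle is purely bookkeeping: tracking the many Leibniz-generated $\D_\pm X_3$ terms and ensuring that each one is absorbed either into a cancellation or into the definitions of $R_t$, $R_3$ (see (\ref{eq:R})), so that no spurious fourth scalar constraint survives. The key consistency check is that the coefficient of the \emph{third} anti-diagonal direction ($\tilde A$ for the $+$ equation, $A$ for the $-$ equation) vanishes identically rather than producing an extra relation -- and it does so because $X_3\T_a$ in (\ref{eq:I2a}) and $X_3G_a$ in (\ref{eq:I1a}) have no $\tilde A$ or $A$ component, so the only $\tilde A$-term on the right of (\ref{eq:+Ta}) is matched one-for-one by the $G_t\sigma_3\D_+X_a$ contribution via (\ref{eq:+X-a}).
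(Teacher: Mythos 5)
Your proposal is correct and is essentially the computation the paper intends (the paper states the lemma with only the instruction to substitute (\ref{eq:I2a}) and (\ref{eq:I1a}) into the left-hand sides, and you supply exactly that: Leibniz on $\D_\pm$ of the first integrals, elimination of $\D_\pm(X_+)_a$ and $\D_\pm X_a$ via the simplified first-order anti-diagonal equations, identical cancellation of the $\tilde A$ (resp.\ $A$) terms, and coefficient matching in the two-dimensional anti-diagonal space). The only slip is a harmless bookkeeping swap in the second case: for (\ref{eq:-Ga}) it is the $\sigma_3 X_a$ coefficient that reproduces (\ref{eq:Tt3}) (its right-hand side carries the $X_3(3X_t-8n)$ term there) and the $(X_+)_a$ coefficient that yields (\ref{eq:-G3}), the mirror image of the $+$ case.
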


\noindent The last two of them, however, have been in fact already integrated when $G_d$ was obtained: they are trivially satisfied if $G_d$ is plugged into their left-hand sides.

\par The other 4-th order equations can be produced in more than one way. We write out just one here.

\begin{lemma}

The following higher-order (4-th order in $\ln\tau_n^J$) equations hold:

$$
X_3\D_+\T_t = \D_+X_3\T_t - R_tG_3 + \xi_+X_3G_t + X_3(X_3^2 + 4(X_+)_a^2),    \eqno(\ref{eq:+Tt})
$$

$$
X_3\D_-\T_3 = \D_-X_3\T_3 - R_3G_t + \xi_-X_3G_3 + X_3(X_3^2 + 4X_a^2).    \eqno(\ref{eq:-T3})
$$

\end{lemma}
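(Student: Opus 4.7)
The plan is to derive (\ref{eq:+Tt}) and (\ref{eq:-T3}) by the same projection technique that lemma 15 used to obtain (\ref{eq:Tt3}), (\ref{eq:+Gt}) and (\ref{eq:-G3}), now applied to the complementary pair of anti-diagonal matrix equations. For (\ref{eq:+Tt}) I apply $\D_+$ to the first integral (\ref{eq:I1a}), $X_3 G_a = G_3(X_+)_a + \T_t\sigma_3 X_a$, obtaining
\[
\D_+X_3\cdot G_a + X_3\D_+G_a = \D_+G_3\cdot(X_+)_a + G_3\D_+(X_+)_a + \D_+\T_t\cdot\sigma_3 X_a + \T_t\,\sigma_3\D_+X_a.
\]
On the left I substitute (\ref{eq:+Ga}) for $X_3\D_+G_a$ and (\ref{eq:I1a}) itself to eliminate $G_a$; on the right I substitute the simplified (\ref{eq:+X+a}) for $X_3\D_+(X_+)_a$ and (\ref{eq:+X-a}) for $\sigma_3\D_+X_a$. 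Both sides then lie in the two-dimensional space of anti-diagonal $2\times 2$ matrices, and matching coefficients in the generic basis $\{(X_+)_a,\sigma_3 X_a\}$ (after noting that the $\tilde A$-coefficient equals $X_3\T_t$ on both sides and cancels identically) produces two scalar equations. One reproduces (\ref{eq:Tt3}), serving as a consistency check; the other is the claimed (\ref{eq:+Tt}). The companion equation (\ref{eq:-T3}) is obtained by the symmetric procedure: apply $\D_-$ to (\ref{eq:I2a}) and equate with (\ref{eq:-Ta}), using (\ref{eq:-X+a}) and (\ref{eq:-X-a}).

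The technical step that makes (\ref{eq:+Tt}) carry more information than its analog (\ref{eq:+Gt}) from lemma 15 is the term $2X_3 A^2\sigma(X_+)_a$ in (\ref{eq:+Ga}), which has no counterpart in (\ref{eq:+Ta}). Since $\sigma=\tfrac{1-c}{1+c}\sigma_3$ as a matrix, this equals $\tfrac{2(1-c)}{1+c}X_3A^2\cdot\sigma_3(X_+)_a$, and $\sigma_3(X_+)_a$ must be re-expanded in the basis $\{(X_+)_a,\sigma_3 X_a\}$. This expansion is carried out via the anti-diagonal identity $2(X_+)_a^2\,X_a=(\{(X_+)_a,X_a\}-[(X_+)_a,X_a])(X_+)_a$ from lemma 13 of Appendix B, together with the scalar first integral (\ref{eq:XXa}), $\{(X_+)_a,X_a\}=(\sigma_3\sigma)X_3A^2$, and the relation $[(X_+)_a,X_a]=C_a\sigma_3$ encoded by (\ref{eq:Ca}). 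After this re-expansion, the contribution of $2X_3 A^2\sigma(X_+)_a$ to the $\sigma_3 X_a$-projection supplies exactly the $4X_3(X_+)_a^2$ piece of (\ref{eq:+Tt}); the explicit $X_3^2\sigma_3 X_a$ summand already present in (\ref{eq:+Ga}) supplies the $X_3^3$ contribution, and the $R_tG_3$ term emerges from the $G_3\D_+(X_+)_a$ combination on the right via $R_t=4r_t-2\xi_+\D_+r_t$.

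The main obstacle is precisely this basis-expansion bookkeeping: tracking how each matrix term decomposes between the two basis directions, and verifying that the $(X_+)_a$-projection collapses (after use of (\ref{eq:XXa}), (\ref{eq:Ca}) and the definition of $R_t$) back to (\ref{eq:Tt3}) rather than giving new information, while simultaneously the $\sigma_3 X_a$-projection assembles into the clean Painlev\'e IV-like form of (\ref{eq:+Tt}). Once an explicit anti-diagonal basis is fixed, the computation is purely algebraic and requires no new inputs beyond the matrix system of theorem \ref{thm:Xsys} and the already-derived scalar and anti-diagonal first integrals summarized in lemmas of Appendix B.
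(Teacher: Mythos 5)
Your derivation is correct, but it takes a genuinely different route from the paper's. The paper obtains (\ref{eq:+Tt}) and (\ref{eq:-T3}) from the \emph{diagonal} sector: it writes out the diagonal parts (\ref{eq:+Td}), (\ref{eq:-Td}) of the matrix equations for $\D_+\T$ and $\D_-\T$, uses the first integrals (\ref{eq:I1a}) and (\ref{eq:I2a}) to eliminate $\{\tilde A, G_a\}$ and $[A,\T_a]$, separates the $\text{Tr}(\cdot)$ and $\text{Tr}(\sigma_3\cdot)$ parts, and then reduces via (\ref{eq:TtX}), (\ref{eq:T3X}) and (\ref{eq:Ix}). You instead work entirely in the \emph{anti-diagonal} sector, differentiating the first integral (\ref{eq:I1a}) and matching against (\ref{eq:+Ga}). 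I verified the key step: with the expansion $2X_3A^2\sigma(X_+)_a = 2C_a(X_+)_a + 4(X_+)_a^2\,\sigma_3X_a$, which follows from (\ref{eq:XXa}) and (\ref{eq:Ca}) exactly as you describe, the $\sigma_3X_a$-projection assembles into (\ref{eq:+Tt}), with the $R_tG_3$, $\xi_+X_3G_t$, $X_3^3$ and $4X_3(X_+)_a^2$ terms arising where you say they do (the $\tilde A$ terms indeed cancel identically beforehand). Your approach treats the fourth-order equations uniformly as compatibility conditions on the differentiated anti-diagonal first integrals, at the cost of inverting the (generically nondegenerate) basis $\{(X_+)_a,\sigma_3X_a\}$; the paper's route avoids that inversion but needs the extra scalar identity (\ref{eq:Ix}) at the end. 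The paper itself remarks that these equations ``can be produced in more than one way,'' and yours is a legitimate alternative.

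Two corrections to your bookkeeping, neither fatal. First, the complementary projection does \emph{not} reproduce (\ref{eq:Tt3}): the $(X_+)_a$-coefficient gives $\D_+G_3 = \xi_+\T_3 + \xi_-\T_t + 2C_a$, i.e.~the derivative formula (\ref{eq:+G3}) of Appendix C combined with (\ref{eq:aTG}) --- a consistency check against a different part of the system. Since the two projections form a solvable linear system in the two unknowns $\D_+G_3$ and $\D_+\T_t$, this does not undermine the derivation of (\ref{eq:+Tt}), but your stated check would fail as written. Second, in the companion computation for (\ref{eq:-T3}) the roles of the two projections swap: there the relevant term is $2X_3A^2\sigma\sigma_3X_a = \tfrac{2(1-c)}{1+c}X_3A^2X_a = 4X_a^2(X_+)_a - 2C_a\sigma_3X_a$, so the $4X_a^2$ piece lands on the $(X_+)_a$-direction; (\ref{eq:-T3}) then comes from the $(X_+)_a$-projection, while the $\sigma_3X_a$-projection yields $\D_-G_t = \xi_-\T_t + \xi_+\T_3 - 2C_a$, consistent with (\ref{eq:-Gt}).
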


\begin{proof}

Consider the remaining diagonal equations of the original system, namely the ones for the derivatives of $\T_d$:

\be
\D_+\T_d = 3(X_d^2 + (X_+)_a^2) - 8nX_d + X_-^2 + 2\xi G_d + \{\tilde A, G_a\},   \label{eq:+Td}
\ee

\be
\D_-\T_d = 2\sigma_3(X_d^2 + (X_+)_a^2 - 4nX_d + X_-^2) + \sigma_3(X_d^2 - (X_+)_a^2) - \sigma_3((X_-)_d^2 - X_a^2) + 2\sigma_3\xi G_d + [A, \T_a].   \label{eq:-Td}
\ee

\noindent Substituting (\ref{eq:I1a}) into (\ref{eq:+Td}) and (\ref{eq:I2a}) into (\ref{eq:-Td}), then separating the Tr$(\cdot)$ and Tr$(\sigma_3\cdot)$ parts, we get equation (\ref{eq:Tt3}) again plus two more scalar equations:

$$
X_3\D_+\T_t = X_3\left(\frac{3X_t^2 + X_3^2}{2} - 8nX_t + 2(X_+)_a^2 + 2X_-^2\right) + X_3(X_3^2 + 4(X_+)_a^2) + 
$$

\be
+ \xi_+X_3G_t + (\xi_-X_3 + A_+)G_3 + 2\sigma^2\D_-A^2\T_t,   \label{eq:+Ttd}
\ee

$$
X_3\D_-\T_3 = X_3\left(\frac{3X_t^2 + X_3^2}{2} - 8nX_t + 2(X_+)_a^2 + 2X_-^2\right) + X_3(X_3^2 + 4X_a^2) + 
$$

\be
+ (\xi_+X_3 + A_-)G_t + \xi_-X_3G_3 + 2\D_+A^2\T_3.   \label{eq:-T3d}
\ee

\noindent Now we recall relations (\ref{eq:TtX}), (\ref{eq:T3X}) and (\ref{eq:Ix}) and see that these last equations reduce to equations (\ref{eq:+Tt}) and (\ref{eq:-T3}), respectively. 

\end{proof}

This, together with formulas (\ref{eq:+B}) and (\ref{eq:-B}) applied in eqs.~(\ref{eq:+Tt}) and (\ref{eq:-T3}), ends the proof of theorem \ref{theorem:4ord}.

\section*{Appendix C}

In Appendix B we proved theorem \ref{theorem:4ord} -- the five 4th-order equations for $\ln\tau_n^J$.







\ni After using formulas 

$$
2\hat F(4(X_+)_a^2 + X_3^2) = \T_t^2 - G_t^2,   \eqno(\ref{eq:+B})
$$

$$
2\hat F(4X_a^2 + X_3^2) = \T_3^2 - G_3^2,   \eqno(\ref{eq:-B})
$$

\ni and $\T_t = \D_+X_t$, $\T_3 = \D_-X_t$, eqs.~(\ref{eq:+Tt}) and (\ref{eq:-T3}) take the final form:

$$
2\hat FX_3\D_+^2X_t = 2\hat F(\D_+X_3\D_+X_t - R_tG_3 + \xi_+X_3G_t) + X_3(\D_+X_t^2 - G_t^2),    \eqno(\ref{eq:+Tt})
$$

$$
2\hat FX_3\D_-^2X_t = 2\hat F(\D_-X_3\D_-X_t - R_3G_t + \xi_-X_3G_3) + X_3(\D_-X_t^2 - G_3^2).    \eqno(\ref{eq:-T3})
$$
 
On the other hand, in the corollary \ref{cor:main} of section \ref{sec:results} we obtained the system of 3rd-order equations, which should be an integrated version of equations of theorem \ref{theorem:4ord} and other higher-order equations. Here we show that indeed all equations of theorem \ref{theorem:4ord} follow from the ones of corollary \ref{cor:main}.









\begin{proof}

One has the following equations for the derivatives of $G_t, G_3, A_+$ and $A_-$:
  
$$
X_3\D_+G_t = \D_+X_3G_t - R_t\D_-X_t + \xi_+X_3\D_+X_t,    \eqno(\ref{eq:+Gt})
$$

$$
X_3\D_-G_3 = \D_-X_3G_3 - R_3\D_+X_t + \xi_-X_3\D_-X_t.    \eqno(\ref{eq:-G3})
$$

\be
2\hat F\D_-G_t = -(G_3 - 2\xi_-\hat F)\D_+X_t + (G_t + 2\xi_+\hat F)\D_-X_t,    \label{eq:-Gt}
\ee

\be
2\hat F\D_+G_3 = (G_3 + 2\xi_-\hat F)\D_+X_t - (G_t - 2\xi_+\hat F)\D_-X_t,    \label{eq:+G3}
\ee

\be
X_3\D_+A_+ = \D_+X_3A_+ + 2\sigma^2(R_t\D_-A^2 - \xi_+X_3\D_+A^2),    \label{eq:+At}
\ee

\be
X_3\D_-A_- = \D_-X_3A_- + 2(R_3\D_+A^2 - \xi_-X_3\D_-A^2).    \label{eq:-A3}
\ee

\be
2A^2\D_-A_+ = -(A_- + 4\xi_-A^2)\D_+A^2 + (A_+ - 4\xi_+\sigma^2A^2)\D_-A^2,    \label{eq:-At}
\ee

\be
2A^2\D_+A_- = (A_- - 4\xi_-A^2)\D_+A^2 - (A_+ + 4\xi_+\sigma^2A^2)\D_-A^2.    \label{eq:+A3}
\ee

\ni (Writing them out, we used e.g.~formulas $4\hat FC_a = G_3\D_+X_t - G_t\D_-X_t$ and $4A^2C_a = A_-\D_+A^2 - A_+\D_-A^2$.) Recall also relations:

$$
\D_+^2X_3 = \D_+\D_-X_t + 2\sigma^2\D_+\D_-A^2, \ \ \ \ \ \ \D_-^2X_3 = \D_+\D_-X_t + 2\D_+\D_-A^2,
$$

$$
\D_+\D_-X_3 = \D_-^2X_t + 2\sigma^2\D_-^2A^2 = \D_+^2X_t + 2\D_+^2A^2,
$$

\ni which are consequences of 

$$
\D_+X_3 = \D_-X_t + 2\sigma^2\D_-A^2, \ \ \ \ \ \D_-X_3 = \D_+X_t + 2\D_+A^2,
$$

\ni and which show that the written above are in fact the only {\it five} independent 4th-order equations. The rest of the proof is long, tedious calculations, in which we take derivatives of the 3rd-order PDE and compare them with such combinations of the 4th-order PDE that the highest derivative terms cancel out. Then we see that the rest reduces to the 3rd-order equations themselves. In this way one verifies e.g.~that

$$
\D_+(\ref{eq:x}) = \D_-X_t\cdot(\ref{eq:+Tt}) + 2\hat FX_3\D_+X_t\cdot(\ref{eq:Tt3}),
$$

\ni and, similarly, or by symmetry,

$$
\D_-(\ref{eq:x}) = \D_+X_t\cdot(\ref{eq:-T3}) + 2\hat FX_3\D_-X_t\cdot(\ref{eq:Tt3}).
$$

\ni Differentiating (\ref{eq:Ax}), one can get a combination

$$
\hat FX_3\D_+(\ref{eq:Ax}) + \sigma^2\D_-A^2\cdot(\ref{eq:+Tt}) = 0,
$$

\ni which turns out to be equivalent to 

$$
\sigma^2\cdot(\ref{eq:S}) + X_3\D_+A^2\cdot(\ref{eq:DDA}) = 0.
$$

\ni Then by symmetry one has also

$$
\sigma^2\cdot(\ref{eq:S}) + X_3\D_-A^2\cdot(\ref{eq:DDA}) = 0
$$

\ni from the combination

$$
\hat FX_3\D_-(\ref{eq:Ax}) + \sigma^2\D_+A^2\cdot(\ref{eq:-T3}) = 0.
$$

\ni If one considers equation $\hat FX_3\D_+(\Pe_t + \Pe_3) = 0$ and applies (\ref{eq:+-}) and (\ref{eq:a}), one comes to the combination

$$
\D_+X_t\cdot(\ref{eq:+Tt}) + 2\hat FX_3\D_-X_t\cdot(\ref{eq:Tt3}) = 0,
$$

\ni and analogously or by symmetry one also gets the combination

$$
\D_-X_t\cdot(\ref{eq:-T3}) + 2\hat FX_3\D_+X_t\cdot(\ref{eq:Tt3}) = 0.
$$

Thus, one sees that all five independent 4th-order equations indeed follow from the 3rd-order equations. 
\end{proof}



\end{document}